\def\BibTeX{{\rm B\kern-.05em{\sc i\kern-.025em b}\kern-.08em
    T\kern-.1667em\lower.7ex\hbox{E}\kern-.125emX}}
\newenvironment{sequation}{\begin{equation}\small}{\end{equation}}
\newtheorem{theorem}{\textbf{Theorem}}
\newtheorem{remark}{\textbf{Remark}}
\newtheorem{definition}{\textbf{Definition}}
\newcommand*{\addFileDependency}[1]{
  \typeout{(#1)}
  \@addtofilelist{#1}
  \IfFileExists{#1}{}{\typeout{No file #1.}}
}
\newcommand*{\myexternaldocument}[1]{
    \externaldocument{#1}
    \addFileDependency{#1.tex}
    \addFileDependency{#1.aux}
}
\definecolor{b}{rgb}{0.0, 0, 0}
\definecolor{k}{rgb}{0, 0, 0}
\definecolor{c}{rgb}{0, 0, 0}
\def\BibTeX{{\rm B\kern-.05em{\sc i\kern-.025em b}\kern-.08em
		T\kern-.1667em\lower.7ex\hbox{E}\kern-.125emX}}
\begin{document}

\title{Online Collaborative Resource Allocation and Task Offloading for Multi-access Edge Computing}

\author{Geng~Sun,~\IEEEmembership{Senior Member,~IEEE},
        Minghua Yuan,
        Zemin Sun,~\IEEEmembership{Member,~IEEE}, 
        Jiacheng Wang, \\
        Hongyang~Du,
        Dusit Niyato,~\IEEEmembership{Fellow,~IEEE},
        Zhu Han,~\IEEEmembership{Fellow,~IEEE}, and 
        Dong In Kim,~\IEEEmembership{Fellow,~IEEE}
    % stops a space
\thanks{This study is supported in part by the National Natural Science Foundation of China (62272194, 62471200), and in part by the Science and Technology Development Plan Project of Jilin Province (20230201087GX). (\textit{Corresponding author: Zemin Sun)}}
 \IEEEcompsocitemizethanks{
     \IEEEcompsocthanksitem Geng Sun is with the College of Computer Science and Technology, Jilin University, Changchun 130012, China, and also with the College of Computing and Data Science, Nanyang Technological University, Singapore 639798 (e-mail: sungeng@jlu.edu.cn).
     \IEEEcompsocthanksitem Minghua Yuan is with the College of Software Engineering, Jilin University, Changchun 130012, China (e-mail: yuanmh22@mails.jlu.edu.cn).
    \IEEEcompsocthanksitem Zemin Sun is with the College of Computer Science and Technology, Jilin University, Changchun 130012, China (e-mail: sunzemin@jlu.edu.cn).
     \IEEEcompsocthanksitem Jiacheng Wang is with the College of Computing and Data Science, Nanyang Technological University, Singapore 639798 (e-mail: jiacheng.wang@ntu.edu.sg).
    \IEEEcompsocthanksitem Hongyang Du is with the Department of Electrical and Electronic Engineering, the University of Hong Kong, Hong Kong, SAR, China. (e-mail: duhy@eee.hku.hk).
    \IEEEcompsocthanksitem Dusit Niyato is with the College of Computing and Data Science, Nanyang Technological University, Singapore 639798 (e-mail: dniyato@ntu.edu.sg).
     \IEEEcompsocthanksitem Zhu Han is with the Department of Electrical and Computer Engineering at the University of Houston, Houston, TX 77004 USA, and also with the Department of Computer Science and Engineering, Kyung Hee University, Seoul, South Korea, 446-701 (e-mail:  zhan2@uh.edu).
    \IEEEcompsocthanksitem Dong In Kim is with the Department of Electrical and Computer Engineering, Sungkyunkwan University, Suwon 16419, South Korea (email: dongin@skku.edu).
    }}
%\thanks{Manuscript received April 19, 2005; revised August 26, 2015.}

% The paper headers
 % \markboth{Journal of \LaTeX\ Class Files,~Vol.~14, No.~8, August~2015}%
 % {Shell \MakeLowercase{\textit{et al.}}: Bare Demo of IEEEtran.cls for Computer Society Journals}

\IEEEtitleabstractindextext{%
\begin{abstract}	
\par Multi-access edge computing (MEC) is emerging as a promising paradigm to provide flexible computing services close to user devices (UDs). However, meeting the computation-hungry and delay-sensitive demands of UDs faces several challenges, including the resource constraints of MEC servers, inherent dynamic and complex features in the MEC system, and difficulty in dealing with the time-coupled and decision-coupled optimization. In this work, we first present an edge-cloud collaborative MEC architecture, where the MEC servers and cloud collaboratively provide offloading services for UDs. Moreover, we formulate an
energy-efficient and delay-aware optimization problem
(EEDAOP) to minimize the energy consumption of UDs under the  constraints of task deadlines and long-term queuing delays. Since the problem is proved to be non-convex mixed integer nonlinear programming (MINLP), we propose an online joint communication resource allocation and task offloading approach (OJCTA). Specifically, we transform EEDAOP into a real-time optimization problem by employing the Lyapunov optimization framework. Then, to solve the real-time optimization problem, we propose a communication resource allocation and task offloading optimization method by employing the Tammer decomposition mechanism, convex optimization method, bilateral matching mechanism, and dependent rounding method. Simulation results demonstrate that the proposed OJCTA can achieve superior system performance compared to the benchmark approaches.
\end{abstract}
	
\begin{IEEEkeywords}
Multi-access computing, communication resource allocation, task offloading, online joint optimization
\end{IEEEkeywords}}

\maketitle
\IEEEdisplaynontitleabstractindextext
\IEEEpeerreviewmaketitle

%	
% Introduction
%

\section{Introduction}
\label{sec_introduction}

\par  \IEEEPARstart{W}{ith} the proliferation of the Internet of Things (IoT) devices and the development of the sixth generation (6G) technology, various new applications emerging, such as the online game, virtual reality (VR), augmented reality (AR), mixed reality (MR), and healthcare monitoring. These applications are typically delay-sensitive and computation-hungry, imposing strict requirements on computing capabilities. However, due to the physical constraints of the user devices (UDs), they are typically equipped with limited computing and energy resources, making it difficult to meet the demanding requirements of applications. To overcome these challenges, multi-access edge computing (MEC) is considered a promising technology by migrating cloud computing capabilities in close proximate to UDs. By offloading tasks to MEC servers, the delay-sensitive and computation-hungry tasks can be processed in an energy-efficient and cost-effective way. Accordingly, the computing capability and battery life of the UDs can be extended.

\par Despite the abovementioned benefits, several significant challenges should be addressed to fully exploit the potential of MEC for efficient task offloading. Different from traditional cloud computing system, a MEC system is constrained by limited resources. Specifically, the UDs such as the wearable, handheld, and VR/AR/MR equipment, are typically equipped with limited battery capacity to enhance portability. However, once the battery is depleted, the devices are unable to complete their tasks. This is particularly critical for healthcare applications that require constant operation to monitor physiological data. Moreover, due to the physical and spectrum limitations in wireless communications, MEC servers have limited communication resources. Additionally, unlike centralized cloud, MEC servers are also equipped with limited computational capabilities. However, the explosive growth of computation-intensive and delay-sensitive tasks puts stringent requirements on MEC servers. As a result, the strict requirements of UDs and the limited resources of MEC servers present challenges in designing efficient task offloading and resource management strategies. 

\par  An MEC system also exhibits greater complex features compared to conventional wireless systems. Specifically, the MEC system is dynamic, due to fluctuating network conditions, varying UD mobility patterns, unpredictable task arrivals, and continuous task processing. Additionally, UDs have limited energy, MEC servers have constrained resources, and most tasks are delay-sensitive and computation-intensive. Thus, incorporating these complex features into an optimization model is challenging, as it requires accurately capturing the complexity and variability of the MEC system. 

\par The complexity of the optimization model introduces additional challenges in solving the corresponding optimization problem. First, the dynamic nature of MEC systems results in time-coupled optimization problems. In particular, the delay sensitivity of tasks necessities real-time decision making, which requires complete information about the system. However, the dynamics of MEC systems makes it difficult to predict the UD demands and resource availability. Moreover, different decisions in MEC systems are interdependent. For example, the task offloading decisions of UDs and the resource allocation decisions of MEC servers are coupled and jointly affect both immediate performance and the future system states. Consequently, this complexity and interdependence complicate the design of efficient approaches for solving the optimization problem.

% On the one hand, the decisions of resource allocation and task offloading are mutually coupled. For instance, the decision of local processing or remote offloading is influenced by the communication link quality, which depends on how communication resources are allocated. Additionally, the task offloading of UDs will subsequently affect the communication resource allocation at the MEC servers. One the other hand, these two decisions collectively determine the UD energy consumption. For example, although offloading more tasks to the MEC sever may reduce the local energy consumption of \textcolor{b}{the UDs}, it could also result in a shortage of communication resource, \textcolor{b}{leading to increased delay and energy consumption for task uploading.} Therefore, these \textcolor{b}{coupled} decisions should be jointly optimized to achieve overall \textcolor{b}{minimum} UD energy consumption.

\par Addressing the aforementioned challenges requires efficient optimization of resource allocation and task offloading. To this end, the work presents an online collaborative task offloading and resource allocation approach for the MEC system. Our main contributions are outlined as follows.

\begin{itemize}
	\item \textit{\textbf{System Architecture.}} We employ a hierarchical edge-cloud collaborative MEC architecture, composed of a UDs layer, an MEC layer, and a cloud layer. Specifically, tasks generated at the UD layer can be executed locally, or offloaded to the MEC servers. Then, at the MEC layer, the MEC servers can select to process the tasks or offload them to the cloud layer. The cloud server provides support by collaboratively managing excess workloads from the MEC servers.

	\item \textit{\textbf{Problem Formulation.}} Given the energy constraints of the UDs and the delay sensitivity of tasks, we formulate an energy-efficient and delay-aware optimization problem (EEDAOP). A goal of EEDAOP is to minimize the average UD energy consumption while meeting the task deadlines of UDs and long-term queuing delay constraints at the edge. However, EEDAOP is proved to be an NP-hard mixed integer nonlinear programming (MINLP) problem.
	
    \item \textit{\textbf{Algorithm Design.}} To solve the formulated EEDAOP, we propose an online communication resource allocation and task offloading approach (OJCTA). First, to deal with the dependence of the problem, we transform EEDAOP into a real-time optimization problem by applying the Lyapunov optimization framework. Then, to solve the real-time optimization problem, we propose a communication resource allocation and task offloading optimization method that employs the Tammer decomposition mechanism, convex optimization method, bilateral matching mechanism, and dependent rounding method.

    \item \textit{\textbf{Validation.}} The performance of the proposed OJCTA is analyzed and demonstrated through theoretical analysis and simulation. Specifically, we theoretically prove that the proposed OJCTA has a polynomial worst-case computational complexity. Moreover, the simulation results demonstrate OJCTA outperforms other comparative approaches.
\end{itemize}

\par The rest of the paper is organized as follows. Section \ref{sec_related work} provides an overview of the related work. Section \ref{sec_model} describes the relevant models. In Section \ref{sec_pro_analysis}, we outline the problem formulation along with a detailed analysis. Section \ref{sec_ProblemTransfor} details the proposed OJCTA approach. Section \ref{sec_simulation} gives the simulation results and the analysis. Finally, Section \ref{sec_conclusion} presents the conclusion of this work.

%
% Related work
%
\section{Related work}
\label{sec_related work}

%\par In this section, we review recent research efforts on MEC arcshitecture, optimization metrics, and online optimization approaches.

%%%%%%%%%% 返修增加内容
% Jianfei Sun, Guowen Xu, Tianwei Zhang, Xuehuan Yang, Mamoun Alazab, Robert Deng, Privacy-aware and Security-enhanced Efficient Matchmaking Encryption, Accepted by IEEE Transactions on Information Forensics and Security (TIFS), 2023

% Sun et al innovatively proposed a privacy-aware and security-enhanced methodology, which enables stronger privacy protection and higher security for various applications.

% Jianfei Sun, Yangyang Bao*, Weidong Qiu, Songnian Zhang, et. al, “Privacy-preserving Fine-grained Data Sharing with Dynamic Service for the Cloud-edge IoT”, IEEE Transactions on Dependable and Secure Computing (TDSC), 2024.

% Sun et al. introduced the pioneering dynamic privacy-preserving fine-grained data protection method, which realizes secure and efficient dynamic membership update with stronger privacy insurance

%%%%%%%%%%%%%%%%%

\subsection{MEC Architecture}%--已修改

\par MEC is emerging as a promising architecture to extend the computing capabilities of UDs, and has attracted increasing research attention. From the perspective of network architecture, the existing studies can be primarily categorized into single-MEC-server architectures and multiple-MEC-server architectures.

\par For single MEC architectures, Tao \emph{et al.}~\cite{TaoM2021} considered a multi-user scenario with a single MEC server, allowing users to offload some of their computational tasks to the edge server while processing the rest locally. Xiao \emph{et al.}
\cite{xiao2022multi} studied a device to device-aided MEC system, in which a remote cloud and an edge server are deployed to offer computing capabilities for the mobile users. Moreover, Wang \emph{et al.}~\cite{WJY2023} explored the resource management in a non-orthogonal multiple access (NOMA)-MEC system, where users offload the tasks to the MEC server by using the NOMA scheme. However, the single MEC server architecture is constrained by the limited coverage and resources, making it less suitable for the large-scale or dense-populated scenarios.

\par To address the limitations of the single-MEC-server-architecture, researchers explored multiple-MEC-server architecture. For example, Ndikumana \emph{et al.}~\cite{NA2020} presented a collaborative MEC architecture, where multiple MEC servers are grouped into a cluster to collaboratively process the computation tasks of users. 
Yi \emph{et al.}~\cite{YCY2021} proposed a workload re-allocation approach by stimulating different MEC servers to participate in server collaboration for workload exchange. However, compared to cloud computing, the resources of MEC servers are inherently constrained, making it challenging to meet the computational intensive and latency sensitive requirements of users in densely populated scenarios, especially during peak usage periods or when multiple users simultaneously offload tasks.

\par The existing architectures are inadequate in scenarios characterized by high user density and significant computational resource demands. To address this, we present a hierarchical edge-cloud collaborative MEC architecture, where the cloud server assists MEC servers in task processing. This architecture effectively combines the powerful computational capability of the cloud server with the low-latency benefit of edge servers.

\subsection{Optimization Metrics of MEC System}

\par Recent studies have explored various aspects of metric optimization for MEC systems, with the primary focuses on task offloading and resource allocation.

\par Task offloading is a critical technique in MEC systems and has been extensively investigated by recent studies. For example, Chen \emph{et al.}~\cite{Chen2024Amulti} focused on minimizing the maximum delay by optimizing the multi-hop task offloading for MEC-assisted vehicular system. To minimize the wireless energy transmission consumption, Bolourian \emph{et al.}~\cite{Bolourian2023Energy} proposed a harvest-then-offload approach for the wireless-powered MEC system. Moreover, Gao \emph{et al.}~\cite{Gao2023Game} aimed to minimize the maximum total task delay by optimizing the task offloading decisions in wireless powered MEC system.

\par Some studies further investigated joint resource allocation and task offloading to fully leverage the advantages of MEC. For example, Hoang \emph{et al.} \cite{hoang2024joint} formulated an utility maximization problem that integrates completion delay, energy consumption, and achievable rate via jointly optimizing the radio resource allocation and task offloading in stochastic MEC systems. Diamanti \emph{et al.} \cite{Diamanti2024Delay} aimed to minimize the maximum experienced delay of users by optimizing both task assignment ratios and the allocated computing resources. Mei \emph{et al.} \cite{Mei2024Through} presented a joint optimization problem of task offloading and resource allocation to maximize the throughput of the MEC system. Moreover, Liu \emph{et al.}~\cite{liu2023joint} aimed to achieve latency reduction through the optimization of access point selection, task offloading, and resource allocation in a heterogeneous MEC environment.

\par The abovementioned works mainly optimize the performance metrics of delay or energy consumption for the MEC system. However, extensive task offloading could result in workload accumulation at the MEC servers, which in turn causes queuing delays at the edge. The queuing delay could significantly degrade the system performance, especially during peak usage periods or in high-demand scenarios. In contrast to the previous studies, our work addresses an energy-efficient and delay-aware optimization problem to minimize UD energy consumption under the delay constraints. The delay constraints include task deadlines for UDs and long-term queuing delay limits at the edge, which ensure task processing timeliness and stability at MEC servers, respectively. The constraints are essential for applications such as the 
telemedicine and online gaming where both instantaneous response and sustained performance stability should be guaranteed~\cite{jiang2023energy}.

\begin{figure}[t]
\captionsetup{justification=centering}
\includegraphics[width =3.5in]{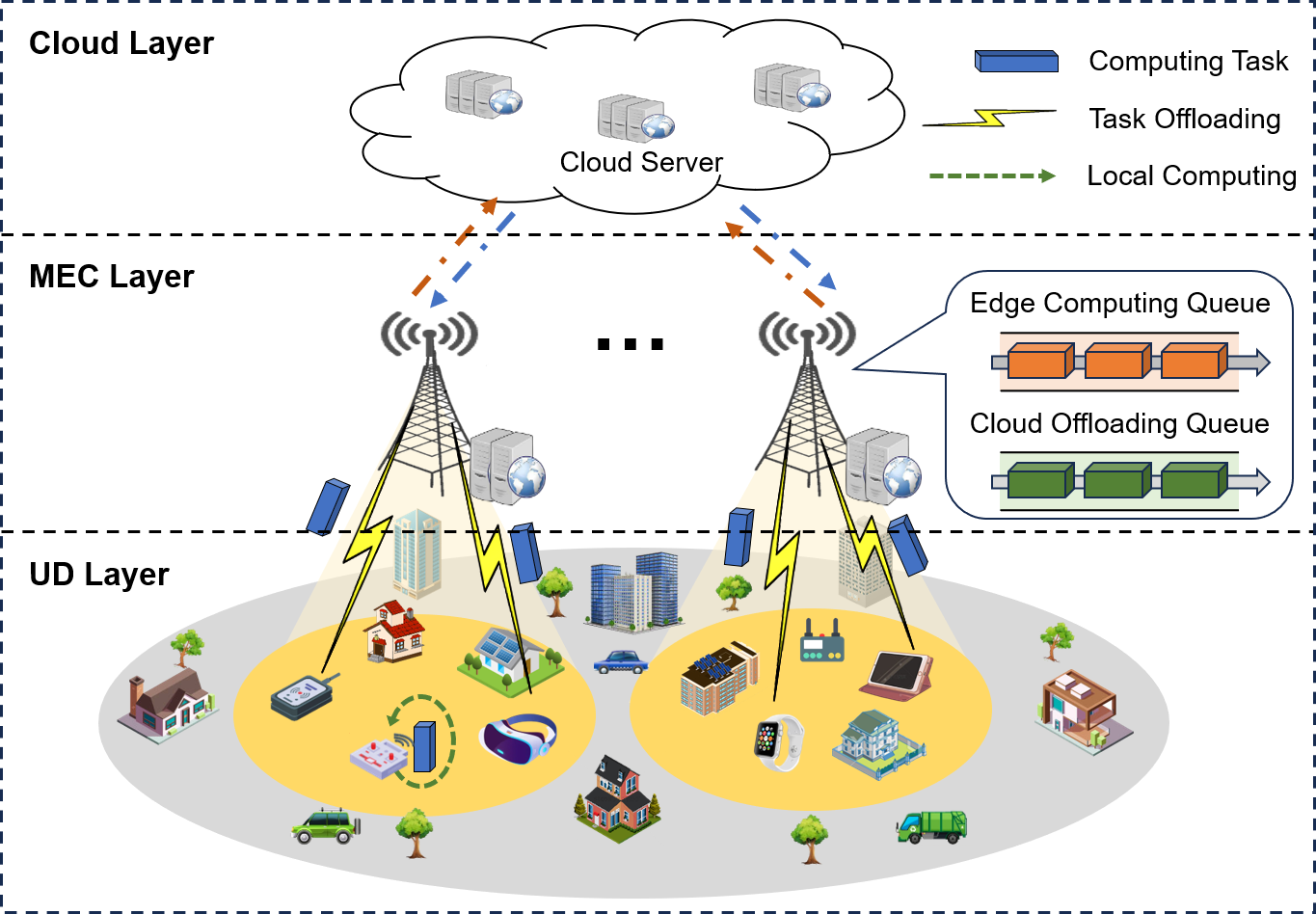}
\caption{The proposed three-layer collaborative MEC system architecture.}
\label{fig_systemModel}
\end{figure}

\subsection{Optimization Approach}

\par Various approaches have been employed to solve the complex optimization problems, such as heuristic algorithm, game theory, and deep reinforcement learning (DRL). For example, Mei \emph{et al.}~\cite{mei2023energy} presented a joint resource allocation and task offloading optimization approach by using the heuristic algorithm. Moreover, Fan \emph{et al.}~\cite{Fan2023game} employed potential game to deal with the edge competition in vehicular MEC networks. Chu \emph{et al.}~\cite{chu2023efficient} adopted the game theory to optimize the task offloading strategies for multi-user MEC system. Besides, Fang \emph{et al.}~\cite{Fang2023DRL} designed a DRL-based resource allocation and task offloading scheme to minimize the energy consumption for MEC systems. In addition, Yang \emph{et al.}~\cite{Yang2024Coo} introduced a cooperative task offloading method by utilizing the multi-agent DRL algorithm, allowing each server to make independent offloading decisions.

\par However, heuristic algorithms typically could not guarantee optimal solutions and often require numerous iterations for dynamic and large-scale scenarios, resulting in high computational overhead, extended processing delays, and inflexibility to adapt to dynamic environments. Moreover, game theory-based approaches can be mathematically complex and may struggle with scalability, particularly in scenarios with a high number of users and servers, where interactions become increasingly intricate. Besides, although DRL is powerful for learning optimal policies, it requires a substantial amount of sample data to obtain the optimal decisions, which can be problematic for real-time scenarios with limited data availability.  Additionally, DRL-based methods often lack the theoretical performance guarantees, making them less reliable in ensuring long-term system stability for dynamic scenarios. In contrast, the Lyapunov-based optimization framework is able to make real-time decisions and provide stable performance guarantees. Therefore, we propose an online joint optimization approach that leverages the Lyapunov optimization framework to ensure long-term UD energy efficiency and system stability in the dynamic and resource-constrained MEC system.

\section{System Model}
\label{sec_model}

\par As depicted in Fig. \ref{fig_systemModel}, we consider a three-layer collaborative MEC architecture, which consists of the UD layer with a set of energy-constrained UDs, denoted as $\mathcal{U} = \{1,\ldots,u,\ldots U \}$, the MEC layer with a set of MEC servers, denoted as $\mathcal{M} = \{1,\ldots,m,\ldots, M \}$, and the cloud layer with a cloud server $c$. Specifically, \textit{at the UD layer}, a set of energy-constrained devices such as wearable devices, handheld devices, and AR/VR equipment periodically generate tasks. \textit{At the  MEC layer}, base stations (BSs) are equipped with MEC servers, providing both radio access and computation services to execute the tasks generated by energy-limited UDs\footnote{We use the terms MEC server and BS interchangeably.}. \textit{At the cloud layer}, the cloud server offers additional computational resources to assist the MEC servers for processing the tasks that cannot be handled at the MEC layer. Moreover, we consider that the system time is divided into $T$ time slots $\mathcal{T}= \{1,\ldots,t,\ldots, T \}$ with equal slot duration $\delta$. In each time slot $t$, each UD $u\in \mathcal{U}$ generates a task~\cite{dai2022task} characterized by $\kappa_u(t)=(s_{u}(t), \rho)$, where $s_{u}(t)$ represents the task data size, and $\rho$ denotes the computation intensity (in cycles/bit).

%
% System Model
%

\subsection{Mobility Model}
\label{sec_mobility_model}

\par Considering the temporal-dependent mobility model, we adopt the Gauss-Markovmodel~\cite{liang1999predictive} to capture the mobility of UDs. In specific, the velocity of each UD follows~\cite{Yang2022}:
\begin{sequation}
\begin{aligned}
	\label{eq_MD_mobility}
		& \mathbf{v}_{u}(t+1)=\omega\mathbf{v}_{u}(t)+(1-\omega) \bar{\mathbf{v}}_u+\sqrt{1-\omega^2} \mathbf{w}_{u}(t), \\& \forall u\in\mathcal{U}, \, t\in\mathcal{T},
\end{aligned}
\end{sequation}

\noindent where $\mathbf{v}_u(t)$ represents the velocity of UD $u$ at time slot $t$. $\omega\in[0,1]$ denotes the memory degree, which indicates the temporal-dependent level of the velocity. Additionally, $\mathbf{w}_u(t)$ means the uncorrelated random Gaussian process, i.e., $ \mathbf{w}_u(t)\sim f^{\text{Gua}}\left(0,\sigma^2\right)$, where $\sigma$ means the asymptotic standard deviation of velocity. Besides, $\bar{\mathbf{v}}_u$ indicates the asymptotic mean of velocity. Therefore, the position of each UD evolves as
\begin{sequation}
	\label{eq_MD_position}
	\begin{aligned}	\mathbf{q}_{u}(t+1)=\mathbf{q}_{u}(t) + \mathbf{v}_u(t)\delta, \, \forall u\in \mathcal{U}, \,  t \in \mathcal{T}.
	\end{aligned}
\end{sequation}

\subsection{Communication Model}
\label{sec_communication_model}

\par To mitigate interference among different MEC servers, we consider that the MEC servers use distinct frequency band. The orthogonal frequency-division multiple access (OFDMA) scheme is employed for each MEC server to simultaneously serve multiple UDs. Therefore, in time slot $t$, the transmission rate between UD $u$ and MEC server $m$ is given as
\begin{sequation}
\label{eq_date_rate}
r_{u}^{m}(t)=a_{u}^{m}(t) B_{m}\log _{2}\Big(1+\frac{p_{u}^{\text{tra}}(t)g_{u}^{m}(t)}{N_0}\Big), 
\end{sequation}

\noindent where $a_{u}^{m}(t) (0 \leq a_{u}^{m}(t) \leq 1)$ indicates the decision of communication resource allocation, which represents the fraction of bandwidth allocated to UD $u$ by MEC server $m$. Moreover, $B_{m}$ denotes the total bandwidth of the MEC server $m$, $p_{u}^{\text{tra}}(t)$ represents the transmit power of UD $u$, and $N_0$ means the Gaussian noise power. $g_{u}^{m}(t)$ represents the instantaneous channel power gain between UD $u$ and MEC server $m$, which is given as~\cite{3GPPTR389012020}
\begin{sequation}
g_{u}^{m}(t)=|h_{u}^{m}(t))|^2 ( L_{u}^m(t))^{-1},
\end{sequation}

\noindent where $h_{u}^{m}(t)$ and $L_{u}^m(t)$ represent the parameters of small-scale fading and large-scale fading, respectively. Specifically, the large-scale fading is given as
\begin{sequation}
\label{eq_LoS_probability}
    L_u^m(t)=\left(4\pi d_0 f_c/C\right)^2 \left(d_u^m(t)/d_0\right)^{\beta} 10^{\chi/10}, 
\end{sequation}
 
\noindent where $f_c$ means the carrier frequency, $C$ denotes the speed of light, $d_0$ represents the reference distance, $ d_u^m(t)$ indicates the distance between UD $u$ and MEC server $m$ in time slot $t$, $\beta$ represents the path loss exponent, and $\chi$ (in dB) is the shadowing due to blocking loss~\cite{3GPPTR389012020}, which follows a zero-mean Gaussian distributed random variable, i.e., $\chi\sim f^{\text{Gua}}(0,\sigma^2)$. Moreover, the Rayleigh fading~\cite{yu2020efficient} is employed to model the small-scale fading, which is given as 
\begin{sequation}
\label{eq_LoS_probability}
    \begin{aligned}
        f^{\text{Ray}}\left(h_{u}^{m}(t),\alpha\right) =h_{u}^{m}(t)e^{(-h_{u}^{m}(t))^2/(2\alpha^2)}/\alpha^2,
    \end{aligned}
\end{sequation}

\noindent where $\alpha$ denotes the Rayleigh fading coefficient, and $f^{\text{Ray}}$ represents the Rayleigh distribution.

\subsection{UD Energy Consumption Model}
\label{sec_comp_model}

\par A task of UD $u$ can be processed locally or offloaded to an MEC server, which is determined by the task offloading decision. Specifically, we define $x_{u}^{m}(t) \in\{0,1\}$ as the variable of task offloading decision, where $x_{u}^{m}(t)=0$ indicates that the task is processed locally on UD $u$ in time slot $t$, and $x_{u}^{m}(t)=1$ represents that the task is offloaded to MEC server $m$ in time slot $t$. Both local computing and task offloading incur energy consumption for the energy-constrained UDs, which are elaborated in the following subsections.

\par For local computing, the energy consumption of each UD is primarily determined by the task computation. Hence, the energy consumption of UD $u$ is given as
\begin{sequation}
E_{u}^{\text{loc}}(t)=(1-x_u^m(t))\zeta_uf_{u}^{2} s_{u}(t)\rho,
\end{sequation}

\noindent where $f_{u}$ (in cycles/s) represents the computation resources of UD $u$,  and $\zeta_k$ denotes the effective switched capacitance coefficient of UD $u$, which depends on the CPU architecture of UD $u$~\cite{burd1996processor}.

\par For edge computing, the energy consumption of each UD is mainly incurred by task uploading. Accordingly, the energy consumption of UD $u$ to upload task $\kappa_u(t)$ is calculated as
\begin{sequation}
E_{u}^{\text{off}}(t)=x_u^m(t)p_{u}^{\text{tra}}(t)\frac{s_{u}(t)}{r_{u}^{m}(t)}.
\end{sequation}

\begin{remark}
Note that there is a greater probability that the Rayleigh fading channel can cause channel gain to approach zero, resulting in an infinite energy consumption. However, the Rayleigh channel is more suitable to capture the multipath propagation characteristics in the considered scenario, where signals often experience varying interference due to reflections, diffractions, and scattering from surrounding buildings, trees, and other obstacles~\cite{Gupta2022Noma}. Moreover, other channel models may also lead to infinite energy consumption with a relatively small probability. In this case, we consider local computing as a potential task offloading decision to meet the energy constraints of UDs, which will be detailed in Section \ref{sec_ProblemTransfor}. 
\end{remark}

%\vspace{-1.5em}

%
%Traffic model
%
\subsection{MEC Server Queue Model}

\par At the MEC server, the task processing has several key characteristics. First, task arrivals from UDs are dynamic and unpredictable. Moreover, in practical MEC systems, task processing on edge servers is a continuous process, and the accumulation of multiple tasks can result in task backlogs and queuing delays due to the limited computing resources of the MEC servers.
Therefore, we adopt a queuing model to capture the dynamic nature of task arrivals and the associated queuing delays at the MEC servers~\cite{li2024computation}. Specifically, in time slot $t$, we consider that tasks from a set of UDs $\mathcal{U}_m(t)=\{1,\dots,U_m(t)\}$ and can be processed at MEC server $m$ or uploaded from the MEC server to cloud $c$ for collaborative execution. Consequently, the tasks are split into two parts, i.e., one part is processed at the MEC server $m$, and the other part is offloaded to cloud $c$. We denote the offloading decision from MEC server $m$ to cloud $c$ for the task of UD $u$ as $x_u^{m \rightarrow c}(t)\in\{0,1\}$, and the task splitting at MEC server $m$ in time slot $t$ is given as
\begin{sequation}
\left\{\begin{array}{l}
A_{m}^{\text{E}}(t)+A_{m}^{\text{C}}(t)=\sum_{u=1}^{U_{m}(t)} x_{u}^{m}(t) s_{u}(t), \\
A_{m}^{\text{C}}(t)=\sum_{u=1}^{U_m(t)} x_{u}^{m\rightarrow c}(t) s_{u}(t),
\end{array}\right.
\end{sequation}

\noindent where $A_{m}^{\text{E}}(t)$ denotes the amount of data processed at MEC server $m$, and $A_{m}^{\text{C}}(t)$ represents the amount of data that is offloaded from MEC server $m$ to cloud $c$.

\par To store the split tasks, each MEC server $m$ maintains two task queues, i.e., edge computing queue and cloud offloading queue for the tasks processed at MEC server $m$ and the tasks offloaded from MEC server $m$ to cloud $c$, receptively. Specifically, the length of the edge computing queue $Q_m^{\text{E}}(t)$ evolves as
\begin{sequation}
\label{eq_deadline_QL}
Q_{m}^{\text{E}}(t+1)=\max \big\{Q_{m}^{\text{E}}(t)-f_{m}\delta/\rho, 0\big\}+A_m^{\text{E}}(t),
\end{sequation}

\noindent where $f_{m}\delta/\rho$ denotes the amount of data leaving the task queue. Similarly, the length of the cloud offloading queue $Q_m^{\text{C}}(t)$ evolves as
\begin{sequation}
\label{eq_deadline_QO}
Q_{m}^{\text{C}}(t+1)=\max \big\{Q_{m}^{\text{C}}(t)-r_{m}^{c}\delta, 0\big\}+A_{m}^{\text{C}}(t),
\end{sequation}

\noindent where $r_{m}^{c}$ denotes the transmission rate from MEC server $m$ to cloud $c$.

\par To guarantee that tasks arrived at the MEC server can be processed within the tolerable delay, the queuing delay must be constrained. Based on Little's law~\cite{JW1967}, the queuing delay of the edge computing queue can be constrained as
\begin{sequation}
\label{eq_queu_deadline1}
\lim _{T \rightarrow \infty} \frac{1}{T} \sum_{t=1}^{T} Q_{m}^{\text{E}}(t)/\tilde{A}_{m}^{\text{E}}(t) \leq \bar{D}_{m}^{\text{E}}, \forall m\in \mathcal{M},
\end{sequation}

\noindent where $\bar{D}_{m}^{\text{E}}(t)$ represents the maximum tolerable queuing delay of the edge computing queue, and $\tilde{A}_{m}^{\text{E}}(t)=\frac{1}{t} \sum_{i=1}^{t} A_m^{\text{E}}(i)$ is the time-averaged data arrival rate of the edge computing queue. Similarly, the queuing delay of the cloud offloading queue can be constrained as
\begin{sequation}
\label{eq_queu_deadline2}
\lim _{T \rightarrow \infty} \frac{1}{T} \sum_{t=1}^{T} Q_{m}^{\text{C}}(t)/\tilde{A}_{m}^{\text{C}}(t) \leq D_{m, \max }^{\text{C}}, \forall m\in \mathcal{M},
\end{sequation}

\noindent where $\bar{D}_{m}^{\text{C}}$ denotes the maximum tolerable queuing delay of the cloud offloading queue, and $\tilde{A}_{m}^{\text{C}}(t)=\frac{1}{t} \sum_{i=1}^{t} A_{m}^{\text{C}}(i)$ is the time-averaged data arrival rate  for this queue.

\section{Problem Formulation and Analysis}
\label{sec_pro_analysis}

\par The problem formulation and problem analysis are presented in this section.

\subsection{Problem Formulation}

\par The objective of this work to minimize the average energy consumption of UDs over time by jointly optimizing the communication resource allocation $\mathbf{A}={\{\mathbf{A}(t)|\mathbf{A}(t)=\{a_u^m(t)\}_{u\in\mathcal{U},m\in\mathcal{M}}\}}_{t\in\mathcal{T}}$ and task offloading $\mathbf{X}={\{\mathbf{X}(t)|\mathbf{X}(t)=\{x_u^m(t),x_u^{m\rightarrow c}(t)\}_{u\in\mathcal{U},m\in\mathcal{M}}\}}_{t\in\mathcal{T}}$. Therefore, the optimization problem is formulated as

\vspace{-0.8em}
{\small
\begin{align} \mathbf{P}:&\min_{\mathbf{X,A}}\frac{1}{T} \sum_{t=1}^{T}\sum_{m=1}^{M} \sum_{u=1}^{U_{m}(t)} \mathbb{E}\big\{E_{u}^{\text{loc}}(t)+E_{u}^{\text{off}}(t)\big\}\label{P1}  \\
\text {s.t. }&  x_{u}^{m}(t),x_{u}^{m \rightarrow c}(t) \in\{0,1\}, \, \forall m \in \mathcal{M}, \,  u \in \mathcal{U}_m(t),\,  t \in \mathcal{T}, \tag{\ref{P1}{a}} \label{eq_constraint_x} \\
& x_{u}^{m \rightarrow c}(t) \leq x_{u}^{m}(t),\forall m \in \mathcal{M}, \, u \in \mathcal{U}_m(t),\, t \in \mathcal{T}, \tag{\ref{P1}{b}}\label{constraint_m_to_c}\\
& \sum_{u \in \mathcal{U}_{m}(t)} x_{u}^{m}(t) \leq C_m,\forall m \in \mathcal{M}, \, t \in \mathcal{T}, \tag{\ref{P1}{c}}\label{eq_constraint_x_conn}\\
&  a_{u}^{m}(t) \in \left[0,1\right], \forall m \in \mathcal{M}, \,  u \in \mathcal{U}_m(t),\, t \in \mathcal{T}, \tag{\ref{P1}{d}}\label{eq_constraint_com}\\
& a_{u}^{m}(t) \leq x_{u}^{m}(t),\forall m \in \mathcal{M}, \, u \in \mathcal{U}_m(t),\, t \in \mathcal{T}, \tag{\ref{P1}{e}}\label{eq_constraint_com1}\\
& \sum_{u \in \mathcal{U}_{m}(t)} x_{u}^{m}(t) a_{u}^{m}(t) \leq 1, \forall m \in \mathcal{M}, \, t \in \mathcal{T}, \tag{\ref{P1}{f}}\label{eq_constraint_com_and_off}\\
&\eqref{eq_MD_mobility}, \eqref{eq_MD_position}, \eqref{eq_queu_deadline1} \text{ and } \eqref{eq_queu_deadline2}, \notag
\end{align}}

\noindent where constraint \eqref{eq_constraint_x} indicates that the decision of task offloading is binary and \eqref{constraint_m_to_c} is the constraint on task offloading decisions at the MEC server side. Moreover,  \eqref{eq_constraint_x_conn} limits the maximum number of connections of MEC server $m$, where $C_m$ represents the connection capacity of MEC server $m$. Furthermore, \eqref{eq_constraint_com} to \eqref{eq_constraint_com_and_off} constrain the communication resource allocation. Additionally, \eqref{eq_MD_mobility} and \eqref{eq_MD_position} represent the mobility of UDs. Besides, \eqref{eq_queu_deadline1} and \eqref{eq_queu_deadline2} limit the queuing delay at the MEC server side. 

\subsection{Problem Analysis}

\par We analyze the issues of solving the formulated problem and the motivation of the proposed approach.

\subsubsection{Challenges} 

\par Solving problem $\mathbf{P}$ directly presents several challenges.

\begin{itemize}

\item \textit{Future-dependent nature in a dynamic environment.} Solving problem $\mathbf{P}$ requires complete knowledge of future information, such as the offloading demands of UDs and the workload of MEC servers. However, in our considered MEC system, the mobility of UDs, wireless channel conditions, task arrivals, and available MEC server resources fluctuate unpredictably over time, making it difficult to predict future states accurately. Therefore, without the knowledge of future conditions, it is  complicated to make optimal decisions in advance.

\item \textit{MINLP problem and coupled decision variables.} First, problem $\mathbf{P}$ involves both binary decision variables (i.e., task offloading $\mathbf{X}$) and continuous decision variables (i.e., communication resource allocation $\mathbf{A}$). This classifies problem $\mathbf{P}$ an NP-hard and non-convex MINLP problem~\cite{Boyd2014} due to the combinatorial complexity introduced by the binary variables and the non-linear relationships associated with the continuous variables. Furthermore, the coupling of decision variables significantly complicate the problem optimization process.

%However, the binary variable could lead to combinatorial complexity, as the solution space increases exponentially with the number of tasks. Moreover, the continuous variable introduces non-linear relationships that can result in multiple local optima. Consequently, these characteristics classify problem $\mathbf{P}$ an NP-hard and non-convex MINLP problem~\cite{Boyd2014}. This significantly increases the difficulty of finding an optimal solution in a computationally efficient manner. Additionally, the decision variables of problem $\mathbf{P}$ are intricately coupled and interdependent. On the one hand, these decision variables collectively affect the UD energy consumption. For example, offloading a task to the MEC server reduces the local computation energy of the UD but could increase the transmission energy and cause the workload backlog. On the other hand, these decision variables are mutually dependent. For example, the communication resource allocation influences the quality of communication links, which in turn influences the task offloading decision. This mutual dependence of variables significantly complicates the problem optimization process.

\end{itemize}

\subsubsection{Motivation} 

\par The challenges mentioned above necessitate the design of an efficient approach to solve problem $\mathbf{P}$, driven by the following key motivations.

\par \textit{i) Overcoming future-dependence with online decision-making.} The future-dependent nature of problem $\mathbf{P}$ necessitates an online approach that can make real-time decisions without complete knowledge of future system information. Although machine learning algorithms such as DRL have shown effectiveness in decision-making, the mixed integral and coupled decision variables of problem $\mathbf{P}$ could lead to complex action spaces. As a result, DRL approaches require extensive environmental interactions and long convergence times to learn optimal policies, which is costly in the resource-constrained MEC system. Moreover, DRL struggles with the dynamic state space caused by user mobility.  In comparison, the Lyapunov optimization framework offers several advantages in addressing these challenges. 

\begin{itemize}[leftmargin=2em]
\item No need for future information. The Lyapunov optimization framework transforms a long-term optimization problem into a sequence of real-time subproblems, enabling decisions without future knowledge\cite{liu2019dynamic}. This makes it well-suited for the dynamic MEC environment where future conditions are uncertain.

\item Low complexity and stability. Compared to DRL, the Lyapunov optimization framework has lower computational overhead by continuously adjusting the decisions, as it continuously adjusts decisions based on current observations, making it more appropriate for real-time decision-making in dynamic MEC systems \cite{guo2024lyapunov}. Additionally, the Lyapunov optimization framework guarantees stability, ensuring convergence and boundedness of system dynamics. 

\item Handling long-term queuing constraints. The Lyapunov optimization framework can effectively manage the long-term queuing constraints at the MEC server by ensuring that a decision of each time slot leads to an optimal or near-optimal solution in the long term.
\end{itemize}

\textit{ii) Decoupling interdependent decision variables.} The interdependency of communication resource allocation and task offloading motivates the decoupling of problem $\mathbf{P}$ into  simpler subproblems. Specifically, we use the Tammer decomposition mechanism to decouple the original problem into a task offloading subproblem with binary decision variables and a communication resource allocation subproblem with continuous variables. The Tammer decomposition mechanism brings several advantages in the problem decoupling~\cite{Tammer1987}.

\begin{itemize}[leftmargin=2em]
\item Solving of subproblems in parallel. The resulting subproblems of the Tammer decomposition mechanism can often be solved in parallel. This parallelization is important in the considered resource-constrained and dynamic MEC system, where real-time decision-making is essential. 

\item Optimal or near-optimal solutions. While the Tammer decomposition mechanism simplifies the problem, it ensures that the resulting subproblems still achieve optimal or near-optimal solutions for the original problem \cite{Tammer1987}. As a result, solving the subproblems of task offloading and resource allocation independently simplifies the decision-making process, enabling more efficient optimization while satisfying the task offloading requirements of UDs and  the resource constraints of MEC servers.
\end{itemize}

%
%Problem Transformation
%
\section{The Proposed OJCTA}
\label{sec_ProblemTransfor}

\par In this section, we propose OJCTA to solve the EEDAOP,  and the framework of OJCTA is shown in Fig. \ref{fig_framwork}. Specifically, we first transform problem $\mathbf{P}$ into a per-time-slot real-time optimization problem by leveraging the Lyapunov optimization framework. Then, we propose a communication resource allocation and task offloading for the real-time optimization problem.

\begin{figure*}[h]
    \centering
    \setlength{\abovecaptionskip}{0pt}%    
    \setlength{\belowcaptionskip}{2pt}%
    \includegraphics[width =7in]{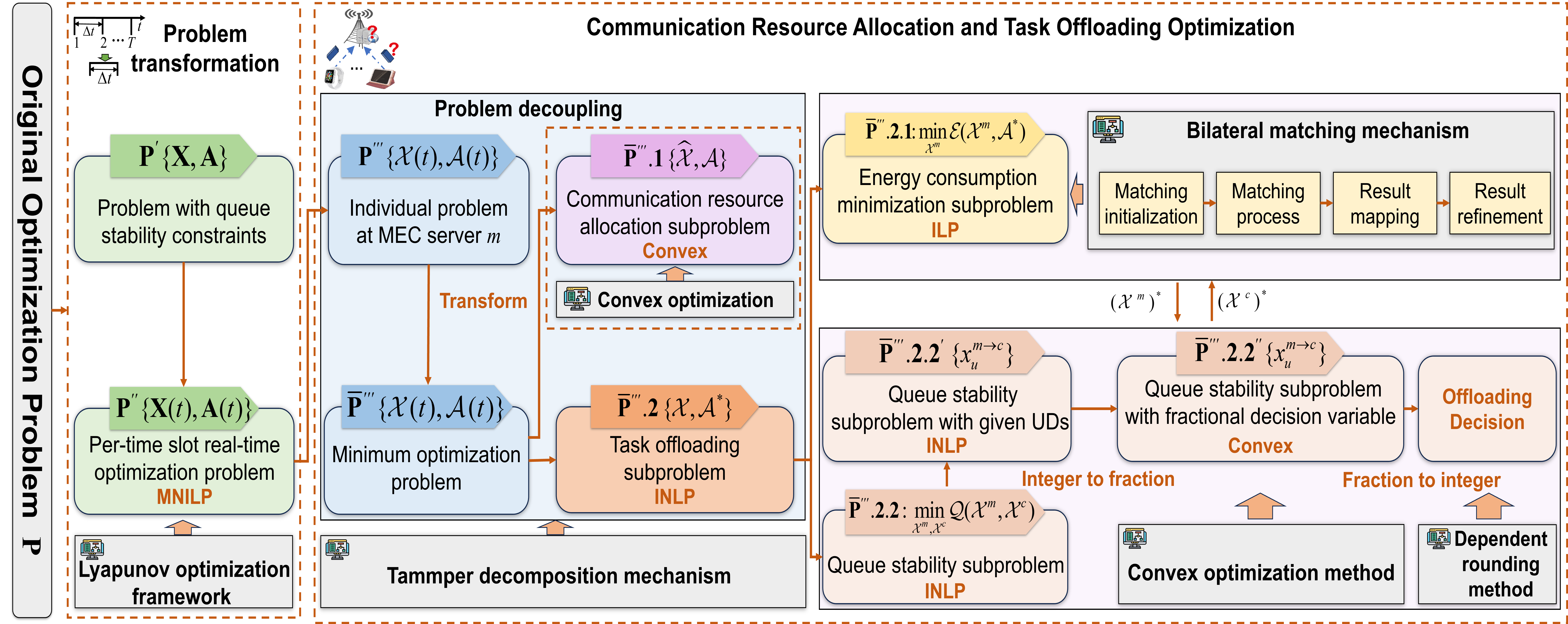}
    \caption{The framework of OJCTA. The EEDAOP is first transformed to a per-time-slot real-time optimization problem, and then solved by a communication resource allocation and task offloading optimization method.}
\label{fig_framwork}
\end{figure*}
\vspace{-1em}

%
% Time Slot
%
\subsection{Problem Transformation}

\par To deal with the future-dependence of problem $\mathbf{P}$, we decouple $\mathbf{P}$ into a per-time-slot real-time optimization problem using the Lyapunov optimization framework. 

% \subsubsection{Lyapunov-based Temporal Decoupling}

\par \textbf{First}, we transform  the long-term constraints (\ref{eq_queu_deadline1}) and (\ref{eq_queu_deadline2}) into queue stability constraints by using the virtual queue. Specifically, we introduce the virtual queues $Z_{m}^{\text{E}}(t)$ and $Z_{m}^{\text{C}}(t)$ for (\ref{eq_queu_deadline1}) and (\ref{eq_queu_deadline2}),  respectively. Then the dynamic of virtual queues is updated as

\vspace{-0.8em}
{\small
\begin{subequations}
\label{eq_virtual}
\begin{alignat}{1}
        &Z_{m}^{\text{E}}(t+1)=\max \Big\{Z_{m}^{\text{E}}(t)+\frac{Q_{m}^{\text{E}}(t)}{\tilde{A}_{m}^{\text{E}}(t)}-\bar{D}_{m}^{\text{E}}, 0\Big\}, \forall m \in \mathcal{M}, t \in \mathcal{T},\label{eq_virtual_L}\\
       & Z_{m}^{\text{C}}(t+1)=\max \Big\{Z_{m}^{\text{C}}(t)+\frac{Q_{m}^{\text{C}}(t)}{\tilde{A}_{m}^{\text{C}}(t)}-\bar{D}_{m}^{\text{C}}, 0\Big\}, \forall m \in \mathcal{M}, t \in \mathcal{T}\label{eq_virtual_O}.
\end{alignat}
\end{subequations}
}

\par  Based on Eqs.~\eqref{eq_virtual_L} and \eqref{eq_virtual_O}, constraints (\ref{eq_queu_deadline1}) and (\ref{eq_queu_deadline2})  can be converted into the stability constraints for queues $Z_{m}^{\text{E}}$ and $Z_{m}^{\text{C}}$, receptively. Therefore, problem $\mathbf{P}$ is transformed as

\vspace{-0.8em}
{\small
\begin{align} \mathbf{P^{\prime}}:&\min_{\mathbf{X,A}}\frac{1}{T} \sum_{t=1}^{T}\sum_{m=1}^{M} \sum_{u=1}^{U_{m}(t)} \mathbb{E}\big\{E_{u}^{\text{loc}}(t)+E_{u}^{\text{off}}(t)\big\}\label{P2}  \\
\text {s.t.}\, &  \eqref{eq_MD_mobility}, \, \eqref{eq_MD_position}, \, \eqref{eq_constraint_x} \sim \eqref{eq_constraint_com_and_off}, \, \eqref{eq_virtual_L}, \text{ and } \eqref{eq_virtual_O}.\notag
\end{align}} 

\par \textbf{Second}, to achieve a scalar measure of the queue backlogs, we define the \textit{Lyapunov function} $L(\Theta(t))$ as
\begin{sequation}
\begin{aligned}
& L(\Theta(t))=\frac{1}{2} \sum_{m=1}^{M}\big((Q_m^{\text{E}}(t))^2+(Q_m^{\text{C}}(t))^2+(Z_{m}^{\text{E}}(t))^2+(Z_{m}^{\text{C}}(t))^2\big),
\end{aligned}
\end{sequation}

\noindent where $\Theta(t)=\{Q_{m}^{\text{E}}(t), Q_{m}^{\text{C}}(t), Z_{m}^{\text{E}}(t), Z_{m}^{\text{C}}(t)\}$ represents the vector of current queue backlogs.

\par \textbf{Third}, ensuring task queue stability is essential to prevent excessive queuing delays and task backlogs. Therefore, to maintain the stability of the task queues (i.e., $Q_m^{\text{E}}(t)$ and $Q_m^{\text{C}}(t)$) and virtual queues (i.e., $Z_m^{\text{O}}(t)$ and $Z_m^{\text{O}}(t)$), the \textit{conditional Lyapunov drift} is defined as~\cite{2010Neely}:
\begin{sequation}
\Delta(\Theta(t)) = \mathbb{E}[L(\Theta(t+1))-L(\Theta(t)) \mid \Theta(t)],
\end{sequation}

\noindent which captures the expected change of the Lyapunov function across consecutive time slots.

\par \textbf{Fourth}, to minimize the average energy consumption of UDs under the queue stability constraints, we define the \textit{Lyapunov drift-plus-penalty function} as follows~\cite{2010Neely}:
\begin{sequation}
\begin{aligned}
& \Lambda(\Theta(t)) =\Delta(\Theta(t))+V\sum_{m=1}^{M}\sum_{u=1}^{{U}_{m}(t)} \mathbb{E}\big\{E_{u}^{\text{loc}}(t)+E_{u}^{\text{off}}(t) \mid \Theta(t)\big\},
\end{aligned}
\end{sequation}

% Moreover, according to~\cite{2010Neely}, there exists a trade-off $[O(V), O(1/V)]$ between reducing the task queue backlogs and minimizing the energy consumption. Accordingly, by selecting an sufficiently large value of $V$, the time-averaged energy consumption can be achieved near the optimal value.

\noindent where $V\geq0$ is a nonnegative control parameter to balance the UD energy consumption and the queue stability. Next, we derive an upper bound of $\Lambda(\Theta(t)) $ in Theorem \ref{the_bound_drift}.

\begin{theorem}
\label{the_bound_drift}
For all all possible queue backlogs $\Theta(t)$, $t$, and $V\geq0$, the Lyapunov drift-plus-penalty function is upper bounded by 
\begin{sequation}
\label{eq_drift_bound}
\begin{aligned}
\Lambda(\Theta(t))&\leq B+\sum_{m=1}^{M}\Big(V\sum_{u=1}^{U_m(t)} \mathbb{E}\big\{E_{u}^{\text{loc}}(t)+E_{k}^{\text{off}}(t) \mid \Theta(t)\big\} \\&+Q_{m}^{\text{E}}(t) \mathbb{E}\big\{A_m^{\text{E}}(t)--\frac{f_m\delta}{\rho} \mid \Theta(t)\big\}\\
&+Q_{m}^{\text{C}}(t) \mathbb{E}\big\{A_{m}^{\text{C}}(t)-r_{m}^c\delta \mid \Theta(t)\big\}\\
&+Z_{m}^{E}(t) \mathbb{E}\Big\{\frac{Q_{m}^{\text{E}}(t)}{\tilde{A}_{m}^{L}(t)}-\bar{D}_{m}^{\text{E}} \mid \Theta(t)\Big\}\\
&+Z_{m}^{\text{C}}(t)\mathbb{E}\Big\{\frac{Q_{m}^{\text{C}}(t)}{\tilde{A}_{m}^{L}(t)}-\bar{D}_{m}^{\text{C}} \mid \Theta(t)\Big\}\Big),
\end{aligned}
\end{sequation}

\noindent where $B$ is a finite positive constant which is lower bounded by:
\begin{sequation}
\begin{aligned}
&B \geq \frac{1}{2}\sum_{m=1}^{M}\Big(\mathbb{E}
\big\{[A_m^{\text{E}}(t)]^{2}+[\frac{f_m\delta}{\rho}]^{2} \mid \Theta(t)\big\} \\
&  +\mathbb{E}\big\{\big[A_{m}^{\text{C}}(t)\big]^{2}+[r_{m}^c\delta]^{2} \mid \Theta(t)\big\} \\
& +\mathbb{E}\big\{\big[\frac{Q_{m}^{\text{E}}(t)}{\tilde{A}_{m}^{E}(t)}\big]^{2}+[\bar{D}_{m}^{\text{E}}]^{2} \mid \Theta(t)\big\} \\
& +\mathbb{E}\big\{\big[\frac{Q_{m}^{\text{C}}(t)}{\tilde{A}_{m}^{\text{C}}(t)}\big]^{2}+[\bar{D}_{m}^{\text{C}}]^{2} \mid \Theta(t)\big\}\Big).
\end{aligned}
\end{sequation}
\end{theorem}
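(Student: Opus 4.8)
The plan is to apply the standard squaring argument for max-plus queue recursions to each of the four queue families and then assemble the results. The key elementary fact is that for a queue obeying $Q(t+1)=\max\{Q(t)-b(t),0\}+a(t)$ with nonnegative $a(t),b(t)$, one has
\[
\tfrac{1}{2}\big[Q(t+1)^2-Q(t)^2\big]\leq \tfrac{1}{2}\big[a(t)^2+b(t)^2\big]+Q(t)\big[a(t)-b(t)\big].
\]
This follows from $(\max\{Q-b,0\})^2\leq (Q-b)^2$ together with $\max\{Q-b,0\}\leq Q$, so that expanding $Q(t+1)^2$ and discarding the nonpositive cross term yields the bound. For the virtual queues, which take the form $Z(t+1)=\max\{Z(t)+d(t),0\}$ with $d(t)=Q(t)/\tilde{A}(t)-\bar{D}$, I would instead use $(\max\{Z+d,0\})^2\leq(Z+d)^2$ combined with $d^2\leq (Q/\tilde{A})^2+\bar{D}^2$, the latter being valid because $Q/\tilde{A}\geq 0$ and $\bar{D}\geq 0$ make the cross term nonpositive.

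Next I would instantiate this inequality four times for each server $m$: on $Q_m^{\text{E}}$ with arrival $A_m^{\text{E}}(t)$ and service $f_m\delta/\rho$; on $Q_m^{\text{C}}$ with arrival $A_m^{\text{C}}(t)$ and service $r_m^c\delta$; on $Z_m^{\text{E}}$ with drift term $Q_m^{\text{E}}(t)/\tilde{A}_m^{\text{E}}(t)-\bar{D}_m^{\text{E}}$; and on $Z_m^{\text{C}}$ with drift term $Q_m^{\text{C}}(t)/\tilde{A}_m^{\text{C}}(t)-\bar{D}_m^{\text{C}}$. Summing the four resulting inequalities over all $m\in\mathcal{M}$ reconstructs exactly $L(\Theta(t+1))-L(\Theta(t))$ on the left-hand side, since $L(\Theta(t))$ is precisely the half-sum of squares of these four queue families.

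I would then take the conditional expectation $\mathbb{E}[\cdot\mid\Theta(t)]$ to form the drift $\Delta(\Theta(t))$, and add the penalty $V\sum_{m}\sum_{u}\mathbb{E}\{E_u^{\text{loc}}(t)+E_u^{\text{off}}(t)\mid\Theta(t)\}$ to obtain $\Lambda(\Theta(t))$. The linear-in-backlog terms survive untouched and reproduce the four weighted-expectation terms inside the parenthesis of the claimed bound, namely $Q_m^{\text{E}}(t)\mathbb{E}\{A_m^{\text{E}}(t)-f_m\delta/\rho\}$, $Q_m^{\text{C}}(t)\mathbb{E}\{A_m^{\text{C}}(t)-r_m^c\delta\}$, and the two $Z$-weighted terms, while the added penalty supplies the $V$-weighted energy term. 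All the half-square contributions then collect into the single quantity $B$.

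The only substantive point is establishing that $B$ is a finite constant rather than a quantity that can grow without bound, which is what justifies defining it through the stated lower bound on the aggregated half-squares. This relies on the model's standing boundedness assumptions: the task sizes $s_u(t)$, and hence $A_m^{\text{E}}(t)$ and $A_m^{\text{C}}(t)$, are bounded; the service terms $f_m\delta/\rho$ and $r_m^c\delta$ are fixed and finite; the deadlines $\bar{D}_m^{\text{E}},\bar{D}_m^{\text{C}}$ are constants; and the normalized backlogs $Q_m^{\text{E}}(t)/\tilde{A}_m^{\text{E}}(t),Q_m^{\text{C}}(t)/\tilde{A}_m^{\text{C}}(t)$ remain bounded under a positive lower bound on the time-averaged arrival rates. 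Under these assumptions each squared term has a finite conditional expectation, and their half-sum over the finite server index set is finite, which yields the stated bound on $B$. I expect this finiteness bookkeeping, rather than any inequality manipulation, to be the main obstacle, since it is precisely where the model's assumptions must be invoked explicitly.
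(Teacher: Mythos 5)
Your proposal is correct and follows essentially the same route as the paper's proof: the standard one-slot squaring bound $\tfrac{1}{2}[Q(t+1)^2-Q(t)^2]\leq \tfrac{1}{2}[a(t)^2+b(t)^2]+Q(t)[a(t)-b(t)]$ applied to the two task queues in \eqref{eq_deadline_QL}--\eqref{eq_deadline_QO} and the analogous bound for the virtual queues in \eqref{eq_virtual}, summed over $m$, followed by conditional expectation and addition of the $V$-weighted penalty, with the residual half-square terms collected into $B$ exactly as in the stated lower bound. Your closing remark correctly identifies that the finiteness of $B$ is the only place where the model's boundedness assumptions (bounded task sizes, fixed service rates, and a positive lower bound on the time-averaged arrival rates) must be invoked.
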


\begin{proof}
 The details can be found in Appendix \ref{app_the_bound_drift} of the supplemental material. 
\end{proof}
\vspace{-0.5em}

\par \textbf{Finally}, according to the Lyapunov optimization framework, problem $\mathbf{P^{\prime}}$ can be solved by minimizing the right-hand side of \eqref{eq_drift_bound}. Therefore, problem $\mathbf{P^{\prime}}$, which depends on the future information, is transformed into a real-time optimization problem that can be solved by using current information, as follows:

\vspace{-0.8em}
{\small
\begin{align}
\label{eq_pro_Lya}
&\mathbf{P^{\prime\prime}}: \min_{\mathbf{X}(t), \mathbf{A}(t)} \sum_{m=1}^M\bigg\{V  \sum_{ k = 1 } ^ {U_m(t)} \Big[p_u^\text{tra}  \frac{x_u^m(t) s_u(t)}{r_u^m(t)}\nonumber  \\
& +\zeta_uf_u^3  \frac{\left(1-x_u^m(t)\right) s_u(t)\rho}{f_u}\Big] \nonumber\\
& +Q_m^{\text{E}}(t)\big[\sum_{u=1}^{U_m(t)}(x_u^m(t)-x_u^{m \rightarrow c}(t)) s_u(t)-\frac{f_m\delta}{\rho}  \big] \nonumber\\
& +Q_m^{\text{C}}(t)\big[\sum_{u=1}^{U_m(t)} x_u^{m \rightarrow c}(t) s_u(t)-r_m^c  \delta\big] \nonumber\\
& +Z_{m}^{\text{E}}(t) \frac{Q_m^{\text{E}}(t)}{\frac{1}{t}\big[\sum\limits_{i=1}^{t-1} A_m^{\text{E}}(i)+\sum\limits_{u=1}^{U_m(t)}(x_u^m(t)-x_u^{m \rightarrow c}(t)) s_u(t)\big]} \nonumber\\
& +Z_{m}^{\text{C}}(t) \frac{Q_m^{\text{C}}(t)}{\frac{1}{t}\big[\sum\limits_{i=1}^{t-1} A_m^{\text{C}}(i)+\sum\limits_{u=1}^{U_m(t)}x_u^{m \rightarrow c}(t)s_u(t)\big]}\bigg\} \\
& \text{s.t.} \, \eqref{eq_MD_mobility}, \, \eqref{eq_MD_position},  \,  \eqref{eq_constraint_x} \sim \eqref{eq_constraint_com_and_off} \, \notag.
\end{align}}

\par However, problem $\mathbf{P}^{\prime\prime}$ is still an MINLP problem, with coupled decision variables of task offloading $x(t)$ and communication resource allocation $a(t)$ at time slot $t$. Consequently, the mutual dependence of decision variables could complicate the process of finding the optimal solution of $\mathbf{P}^{\prime\prime}$. To address this challenge, we propose a  communication resource allocation and task offloading optimization method that decouples the decision variables, allowing us to achieve a sub-optimal solution with polynomial time complexity. 

%
% Time Epoch
%
 
\subsection{Communication Resource Allocation and Task Offloading Optimization Method}

\par This section presents a communication resource allocation and task offloading optimization method is presented. Specifically, to deal with the decoupled decision variables in problem $\mathbf{P}^{\prime\prime}$, we first decouple problem $\mathbf{P}^{\prime\prime}$ into the subproblems of communication resource allocation and task offloading by adopting the Tammer decomposition mechanism. Then, we present a convex optimization-based method for the subproblem of communication resource allocation. Finally, we propose a two-stage alternating optimization method for the subproblem of task offloading.

\subsubsection {Problem Decoupling}
\label{sec_Problem Decoupling}

\par  We know from \eqref{eq_pro_Lya} that problem $\mathbf{P}^{\prime\prime}$ involves the binary decision variable of task offloading $\mathbf{X}$ and continuous decision variable of communication resource allocation $\mathbf{A}$. Therefore, we employ the Tammer decomposition mechanism to decouple problem $\mathbf{P}^{\prime\prime}$ into a communication resource allocation subproblem and a task offloading subproblem, which is detailed as follows.

%The Tammer decomposition mechanism enables parallel solving of these subproblems, which makes it well-suited for real-time decision making in the dynamic and energy-constraint scenario. Moreover, it can achieve optimal or near-optimal solutions for the mix-integral optimization problem~\cite{Tammer1987}. Hence, we employ the Tammer decomposition mechanism to decouple problem $\mathbf{P}^{\prime\prime}$ into a communication resource allocation subproblem and a task offloading subproblem, which is detailed as follows.

\par First, by considering that each MEC server operates independently with its own resources and workloads, problem $\mathbf{P}^{\prime\prime}$ can be decomposed into parallel subproblems for each individual MEC server. Specifically, the individual problem at MEC server $m$ is given as

\vspace{-0.8em}
{\small
\begin{align}
&\mathbf{P}^{\prime\prime\prime}: \min_{\mathbf{X}(t), \mathbf{A}(t)} V \sum_{u=1}^{U_m(t)}\Big[x_u^m(t) p_u^\text{tra}(t)  \frac{s_u(t)}{r_u^m(t)}\notag\\
&+(1-x_u^m(t)) \zeta_uf_u^2 s_u(t) \rho\Big] \nonumber\\
& +Q_m^{\text{E}}(t)\big[\sum_{u=1}^{U_m(t)}(x_u^m(t)-x_u^{m \rightarrow c}(t)) s_u(t)-\frac{f_m \delta}{\rho} \big] \nonumber\\
& +Q_m^{\text{C}}(t)\big[\sum_{u=1}^{U_m(t)} x_u^{m \rightarrow c}(t) s_u(t)-r_m^c  \delta\big] \nonumber\\
& +Z_{m}^{\text{E}}(t)\frac{Q_m^{\text{E}}(t)}{\frac{1}{t}\big[\sum\limits_{i=1}^{t-1} A_m^{\text{E}}(i)+\sum\limits_{u=1}^{U_m(t)}(x_u^m(t)-x_u^{m \rightarrow c}(t)) s_u(t)\big]} \nonumber\\
& +Z_{m}^{\text{C}}(t)\frac{Q_m^{\text{C}}(t)}{\frac{1}{t}\big[\sum\limits_{i=1}^{t-1} A_m^{\text{C}}(i)+\sum\limits_{u=1}^{U_m(t)} x_u^{m \rightarrow c}(t) s_u(t)\big]} \label{eq_individual_pro}\\
& \text { s.t. }  \eqref{eq_MD_mobility}, \, \eqref{eq_MD_position}, \, \eqref{eq_constraint_x} \sim \eqref{eq_constraint_com_and_off} \notag.
\end{align}}

\par  Then, for given communication resource allocation decision $\mathcal{A}$ and task offloading decision $\mathcal{X}$, we denote the objective function of problem $\mathbf{P}^{\prime\prime\prime}$ as $J(\mathcal{X}, \mathcal{A})$. Therefore, problem $\mathbf{P}^{\prime\prime\prime}$ is transformed into a minimum optimization problem as:
\begin{sequation}
\begin{aligned}
\mathbf{\bar{P}^{\prime\prime\prime}}:& \min _{\mathcal{X}}\left(\min _{\mathcal{A}} J(\mathcal{X}, \mathcal{A})\right) \\
& \text{s.t. } \eqref{eq_MD_mobility}, \, \eqref{eq_MD_position}, \, \eqref{eq_constraint_x} \sim \eqref{eq_constraint_com_and_off} \notag.
\end{aligned}
\end{sequation}

\par Moreover, given any task offloading decision $\hat{\mathcal{X}}$, we define the following partitions of UDs. We denote the set of UDs that offload tasks to MEC server $m$ as $\mathcal U_m^{\text{O}}=\left\{u \in \mathcal{U}_m(t) \mid x_{u}^{m}(t)=1\right\}$, and denote the UDs that execute tasks locally as $\mathcal U_m^L=\left\{u \in \mathcal{U}_m(t) \mid x_{u}^{m}(t) =0\right\}$ ($\mathcal U_m^{\text{O}} \cup \mathcal U_m^L = \mathcal{U}_m(t)$). We denote the set $\mathcal U_m^{\text{O}}(t)$ is further divided into two subsets, i.e., $\mathcal U_m^S=\left\{u \in \mathcal{U}_m(t) \mid x_{u}^{m}(t)=1 \land {x_u^{m \rightarrow c}=0}\right\}$ consists of UDs whose tasks are executed directly by MEC server $m$, and $\mathcal U_m^C=\left\{u \in \mathcal{U}_m(t) \mid x_{k}^{m}(t)=1 \land {x_{k}^{m \rightarrow c}(t)=1}\right\}$ consists of UDs whose tasks are offloaded from MEC server $m$ to the cloud ($\mathcal U_m^S \cup \mathcal U_m^C = \mathcal U_m^{\text{O}}$). Therefore, by removing the constraints that are irrelevant with the communication resource allocation decision, the subproblem of communication resource allocation is given as

\vspace{-0.8em}
{\small
\begin{align}
\mathbf{\bar{P}^{\prime\prime\prime}.1}:& \min_{\mathcal{A}} J(\hat{\mathcal{X}}, \mathcal{A})\label{RA}  \\
\text {s.t. }& a_{u}^{m}(t)> 0,\forall u \in \mathcal U_m^{\text{O}},\forall t \in \mathcal{T} \tag{\ref{RA}{a}}\label{RA_a},\\
& \sum_{u \in \mathcal U_m^{\text{O}}} a_{u}^{m}(t) \leq 1, \forall t \in \mathcal{T} \tag{\ref{RA}{b}} \label{RA_b}.
\end{align}}

\par Denote the optimal communication resource allocation $\mathcal{A}^*$ for problem $\mathbf{\bar{P}^{\prime\prime\prime}.1}$. Based on the solution $\mathcal{A}^*$, and eliminating the constraints , the subproblem of task offloading is represented as

\vspace{-0.8em}
{\small
\begin{align}
&\mathbf{\bar{P}^{\prime\prime\prime}.2}:\min_{\mathcal{X}} J(\mathcal{X}, \mathcal{A}^*)\label{TO}  \\
\text {s.t. }& x_{u}^{m}(t)\in\{0,1\},\forall u \in \mathcal{U}_m(t),\forall t \in \mathcal{T} \tag{\ref{TO}{a}}\label{TOa} \\
&x_{u}^{m \rightarrow c}(t) \in\{0,1\},\forall u \in \mathcal{U}_m(t),\forall t \in \mathcal{T} \tag{\ref{TO}{b}}\label{TOb} \\
& x_{u}^{m \rightarrow c}(t) \leq x_{u}^{m}(t),\forall u \in \mathcal{U}_m(t),\forall t \in \mathcal{T} \tag{\ref{TO}{c}}\label{TOc}.
\end{align}}

\par Note that decomposing problem $\mathbf{P}^{\prime\prime\prime}$ into subproblems $\mathbf{\bar{P}^{\prime\prime\prime}.1}$ and $\mathbf{\bar{P}^{\prime\prime\prime}.2}$ preserves the optimality of the solution~\cite{Tammer1987}. This is because the decision variables are considered together throughout the decoupling. More specific, the optimization of one decision variable is performed based on the optimal outcome of the other decision variable.

\subsubsection {Communication Resource Allocation}
\label{sec_resource allocation}

\par In this subsection, we optimize the decision of communication resource allocation by solving subproblem $\mathbf{\bar{P}^{\prime\prime\prime}.1}$. In specific, by substituting \eqref{eq_individual_pro} into \eqref{RA}, the subproblem of communication resource allocation is rewritten as 

\vspace{-0.8em}
{\small
\begin{align}
\mathbf{\bar{P}^{\prime\prime\prime}.1}:\, &\min_{\mathcal{A}}J(\hat{\mathcal{X}}, \mathcal{A})=\min_{\mathcal{A}}\sum_{u \in \mathcal U_m^{\text{O}}}\frac{ V \beta _u(t) }{a_{u}^{m}(t)B_m}+\phi, \label{RA1}\\
\text {s.t. }& \eqref{RA_a} \text{ and } \eqref{RA_b} \notag,
\end{align}}

\noindent where $\phi$ is a term independent of $\mathcal{A}$, {which is given as}

\vspace{-0.8em}
{\small
\begin{align}
\phi &= \sum\nolimits_{u\in \mathcal U_m^{\text{L}}(t)}V\left(1-x_u^m(t)\right) \zeta_uf_u^2 s_u(t) \rho \nonumber\\
& +Q_m^{\text{E}}(t)\left(\sum\nolimits_{u\in \mathcal U_m^{\text{S}}(t)} s_u(t)-\frac{f_m \delta}{\rho} \right) \nonumber\\
& +Q_m^{\text{C}}(t)\left(\sum\nolimits_{u\in \mathcal U_m^{\text{C}}(t)} s_u(t)-r_m^c  \delta\right) \nonumber\\
& +Z_{m}^{\text{E}}(t)\frac{Q_m^{\text{E}}(t)}{\frac{1}{t}\left[\sum_{i=1}^{t-1} A_m^{\text{E}}(i)+\sum _{u\in \mathcal U_m^{\text{S}}(t)} s_u(t)\right]} \nonumber\\
& +Z_{m}^{\text{C}}(t)\frac{Q_m^{\text{C}}(t)}{\frac{1}{t}\left[\sum_{i=1}^{t-1} A_m^{\text{C}}(i)+\sum_{u\in \mathcal{U}_m^\text{C}}(t)s_u(t)\right]} \\
& \text { s.t. }  \eqref{eq_MD_mobility}, \, \eqref{eq_MD_position}, \, \eqref{eq_constraint_x} \sim \eqref{eq_constraint_com_and_off} \notag.
\end{align}}

\noindent Additionally, $\beta_u(t)$ depends solely on the parameters of UD $u$, which is given as
\begin{sequation}
\beta_u(t) = \frac{p_u^{\text{tra}}(t)s_u(t)}{\log _{2}\big(1+\frac{p_{u}^{\text{tra}}(t)g_{u}^{m}(t)}{N_0}\big)}.
\end{sequation}

\par $\mathbf{\bar{P}^{\prime\prime\prime}.1}$ is a convex optimization problem, which is proved in Theorem \ref{lem_P1_convex}. Therefore, we solve problem $\mathbf{\bar{P}^{\prime\prime\prime}.1}$ by employing the convex optimization method~\cite{Boyd2014}, and the optimal communication resource allocation is presented in Theorem \ref{theo_opt_resource}. 

\begin{theorem}
\label{lem_P1_convex}
Problem $\mathbf{\bar{P}^{\prime\prime\prime}.1}$ is a convex optimization problem.
\end{theorem}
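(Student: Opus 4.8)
The plan is to verify separately the two defining ingredients of a convex program: convexity of the objective $J(\hat{\mathcal{X}}, \mathcal{A})$ in the decision variables $\{a_u^m(t)\}_{u\in\mathcal U_m^{\text{O}}}$, and convexity of the feasible region cut out by \eqref{RA_a} and \eqref{RA_b}. First I would note that, once the offloading decision $\hat{\mathcal{X}}$ is fixed, every quantity collected into $\phi$ is constant with respect to $\mathcal{A}$, so the only $\mathcal{A}$-dependent part of the objective is the sum $\sum_{u\in\mathcal U_m^{\text{O}}} V\beta_u(t)/\big(a_u^m(t)B_m\big)$. Here $V\ge 0$ and $B_m>0$ by construction, and $\beta_u(t)$ is a ratio of strictly positive quantities (the product of transmit power and data size in the numerator, and a base-two logarithm whose argument exceeds $1$, hence is strictly positive, in the denominator), so each coefficient $V\beta_u(t)/B_m$ is nonnegative.

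Next I would establish convexity of each summand on the open domain $a_u^m(t)>0$ fixed by \eqref{RA_a}. The cleanest route is to observe that $h(a)=1/a$ has second derivative $2/a^3>0$ on $(0,\infty)$ and is therefore strictly convex there; multiplying by the nonnegative constant $V\beta_u(t)/B_m$ and summing over $u\in\mathcal U_m^{\text{O}}$ preserves convexity, and adding the constant $\phi$ does not affect it. Equivalently, since the objective is separable in the variables, its Hessian is diagonal with entries $2V\beta_u(t)/\big((a_u^m(t))^3 B_m\big)\ge 0$, hence positive semidefinite throughout the feasible domain; this certifies that $J(\hat{\mathcal X},\cdot)$ is convex.

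Finally I would check the constraints: \eqref{RA_a} restricts each variable to the positive orthant, and \eqref{RA_b} is the single affine inequality $\sum_{u\in\mathcal U_m^{\text{O}}} a_u^m(t)\le 1$; both describe convex sets, and their intersection is convex. Minimizing a convex function over a convex feasible set is by definition a convex optimization problem, completing the argument. I do not anticipate a genuine obstacle here; the only point demanding care is confirming the strict positivity $\beta_u(t)>0$ together with the open-domain condition $a_u^m(t)>0$, which jointly guarantee that the $1/a$ terms are well defined and that their second derivatives are finite and nonnegative throughout the feasible region.
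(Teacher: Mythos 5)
Your proposal is correct: isolating the $\mathcal{A}$-dependent part of $J(\hat{\mathcal{X}},\mathcal{A})$ as $\sum_{u\in\mathcal U_m^{\text{O}}} V\beta_u(t)/\big(a_u^m(t)B_m\big)$ with $\beta_u(t)>0$, certifying convexity of each $1/a$ term via the positive second derivative (equivalently, the diagonal positive-semidefinite Hessian) on the domain $a_u^m(t)>0$, and noting that \eqref{RA_a} and \eqref{RA_b} define a convex feasible set is exactly the standard argument the paper relies on. No gaps; the care you take over $\beta_u(t)>0$ and the open-domain condition is the only subtle point and you handle it properly.
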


\begin{proof}
The details can be found in Appendix \ref{app_lem_P1_convex} of the supplemental material. 
\end{proof}
\vspace{-0.5em}

\begin{theorem}
\label{theo_opt_resource}
    The optimal communication resource allocation for problem $\mathbf{\bar{P}^{\prime\prime\prime}.1}$ is given as
    \begin{sequation}
    \label{eq_theo_opt_resource}
    a_{u}^{m}{(t)}^*= \left\{\begin{array}{l}
    \frac{\sqrt{\beta_{u}(t)} }{\sum_{u \in \mathcal U_m^{\text{O}}} \sqrt{\beta_{u}(t)}},\, \text{If } u\in \mathcal U_m^{\text{O}}, \\
    0, \, \text{If } u\notin \mathcal U_m^{\text{O}}.
    \end{array}\right.
    \end{sequation}
\end{theorem}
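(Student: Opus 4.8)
The plan is to solve the convex program $\mathbf{\bar{P}^{\prime\prime\prime}.1}$ through its Karush--Kuhn--Tucker (KKT) conditions, which are necessary and sufficient for global optimality given the convexity already established in Theorem~\ref{lem_P1_convex}. First I would note that each summand $V\beta_u(t)/(a_u^m(t)B_m)$ is strictly decreasing in $a_u^m(t)$, since $\beta_u(t)>0$ and $B_m>0$. Hence the objective can always be reduced by enlarging any allocation, which forces the budget constraint~\eqref{RA_b} to be tight at the optimum, i.e., $\sum_{u\in\mathcal U_m^{\text{O}}}a_u^m(t)=1$; otherwise one could increase some $a_u^m(t)$ and strictly decrease $J$, contradicting optimality.

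Next I would attach a multiplier $\lambda$ to this active equality and form the Lagrangian
\begin{sequation}
\mathcal{L}=\sum_{u\in\mathcal U_m^{\text{O}}}\frac{V\beta_u(t)}{a_u^m(t)B_m}+\lambda\Big(\sum_{u\in\mathcal U_m^{\text{O}}}a_u^m(t)-1\Big).
\end{sequation}
Imposing stationarity $\partial\mathcal{L}/\partial a_u^m(t)=0$ yields $-V\beta_u(t)/\big((a_u^m(t))^2B_m\big)+\lambda=0$, so that $a_u^m(t)=\sqrt{V\beta_u(t)/(\lambda B_m)}$, which is proportional to $\sqrt{\beta_u(t)}$. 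Because $\beta_u(t)>0$, each resulting $a_u^m(t)$ is automatically strictly positive, so the positivity constraints~\eqref{RA_a} are inactive and contribute no multipliers.

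I would then determine the proportionality constant by substituting the stationarity relation into the active budget constraint, giving $\sqrt{V/(\lambda B_m)}\sum_{u\in\mathcal U_m^{\text{O}}}\sqrt{\beta_u(t)}=1$. This fixes $\lambda>0$ and produces the closed-form allocation $a_u^m(t)^*=\sqrt{\beta_u(t)}\big/\sum_{u'\in\mathcal U_m^{\text{O}}}\sqrt{\beta_{u'}(t)}$ for $u\in\mathcal U_m^{\text{O}}$, while $a_u^m(t)^*=0$ for $u\notin\mathcal U_m^{\text{O}}$ since those UDs are not served by MEC server~$m$. Since the feasible region is convex and the objective is convex, the stationary point satisfying these KKT conditions is the global minimizer, completing the proof. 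I anticipate no substantive obstacle here; the only point requiring care is the preliminary argument that the budget constraint binds, which justifies replacing the inequality~\eqref{RA_b} by an equality and thereby licenses the equality-constrained Lagrangian.
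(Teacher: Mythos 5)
Your proposal is correct and follows the standard route the paper takes: establish convexity (already given by Theorem~\ref{lem_P1_convex}), argue the bandwidth budget \eqref{RA_b} binds by monotonicity, and then solve the KKT stationarity condition to get $a_u^m(t)\propto\sqrt{\beta_u(t)}$ normalized by the active budget constraint. No gaps; the care you take in justifying the tightness of the budget constraint and the inactivity of the positivity constraints \eqref{RA_a} is exactly what is needed.
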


\begin{proof}
The details can be found in Appendix \ref{app_theo_opt_resource} of the supplemental material. 
\end{proof}
\vspace{-0.5em}

\subsubsection{Task Offloading}

\par We optimize the task offloading decisions from UD to MEC server and from MEC server to cloud by proposing a two-stage alternating optimization method. Specifically, we divide the subproblem $\mathbf{\bar{P}^{\prime\prime\prime}.2}$ into an energy consumption minimization problem and a queue stability subproblem. In the first stage, we optimize energy consumption using a many-to-one matching method. In the second stage, we address the queue stability subproblem through the convex optimization method. By iteratively applying these two methods, we achieve an optimized solution for the task offloading subproblem. We elaborate the main process of task offloading optimization as follows. 

\par By combing \eqref{eq_individual_pro} with \eqref{TO}, the subproblem of task offloading is given as 

\vspace{-0.8em}
{\small
\begin{align}
 \label{TO1}
&\mathbf{\bar{P}^{\prime\prime\prime}.2}:\, \min_{\mathcal{X}} J(\mathcal{X}, \mathcal{A}^*)=\min_{\mathcal{X}}V \sum_{u=1}^{U_m(t)}\Big[x_u^m(t) p_u^\text{tra}(t)  \frac{s_u(t)}{{r_u^m(t)}^*}\notag\\
&+ (1-x_u^m(t)) \zeta_uf_u^2 s_u(t) \rho\Big] \nonumber\\
& +Q_m^{\text{E}}(t)\big[\sum_{u=1}^{U_m(t)}(x_u^m(t)-x_u^{m \rightarrow c}(t)) s_u(t)-\frac{f_m \delta}{\rho} \big] \nonumber\\
& +Q_m^{\text{C}}(t)\big[\sum_{u=1}^{U_m(t)} x_u^{m \rightarrow c}(t) s_u(t)-r_m^c  \delta\big] \nonumber\\
& +Z_{m}^{\text{E}}(t)\frac{Q_m^{\text{E}}(t)}{\frac{1}{t}\big [\sum\limits_{i=1}^{t-1} A_m^{\text{E}}(i)+\sum\limits_{u=1}^{U_m(t)}(x_u^m(t)-x_u^{m \rightarrow c}(t)) s_u(t)\big]} \nonumber\\
& +Z_{m}^{\text{C}}(t)\frac{Q_m^{\text{C}}(t)}{\frac{1}{t}\big[\sum\limits_{i=1}^{t-1} A_m^{\text{C}}(i)+\sum\limits_{u=1}^{U_m(t)} x_u^{m \rightarrow c}(t) s_u(t)\big]} \\
&\ \ \text {s.t. }
 \eqref{TOa}\text{ and } \eqref{TOb} \notag,
\end{align}}

\noindent where ${r_u^m(t)}^*$ is the outcome of transmission rate with the optimal communication resource allocation $a_{u}^{m}{(t)}^*$. However, it is challenging to solve subproblem $\mathbf{\bar{P}^{\prime\prime\prime}.2}$ directly since it is an INLP problem, as proved in Theorem \ref{lem_P2_INLP}.

\begin{theorem}
\label{lem_P2_INLP}
Problem $\mathbf{\bar{P}^{\prime\prime\prime}.2}$ is an integer nonlinear programming (INLP) problem.
\end{theorem}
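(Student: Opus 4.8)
The plan is to establish that problem $\mathbf{\bar{P}^{\prime\prime\prime}.2}$ satisfies the two defining characteristics of an INLP problem: namely, that all its decision variables are integer-valued (in fact binary), and that its objective function is nonlinear in these variables. These two properties together certify the INLP classification by definition, so the proof is essentially a verification rather than a deep derivation.

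First I would address the integrality of the decision variables. The decision variables of $\mathbf{\bar{P}^{\prime\prime\prime}.2}$ are the task offloading indicators $x_u^m(t)$ and $x_u^{m\rightarrow c}(t)$, which by constraints \eqref{TOa} and \eqref{TOb} are restricted to $\{0,1\}$. Thus every optimization variable is binary, and in particular integer-valued, which establishes the integer-programming aspect of the claim. This step is immediate from the constraint set and requires no computation.

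Next I would verify nonlinearity of the objective $J(\mathcal{X},\mathcal{A}^*)$ with respect to $\mathcal{X}$. The key observation is in the last two terms of \eqref{TO1}, i.e.\ the virtual-queue penalty terms associated with $Z_m^{\text{E}}(t)$ and $Z_m^{\text{C}}(t)$. In each of these terms, the offloading variables $x_u^m(t)$ and $x_u^{m\rightarrow c}(t)$ appear in the \emph{denominator} through the time-averaged arrival-rate expression $\tfrac{1}{t}[\sum_{i=1}^{t-1}A_m^{\text{E}}(i)+\sum_{u}(x_u^m(t)-x_u^{m\rightarrow c}(t))s_u(t)]$. Since a sum of these variables appears in a denominator, the objective is a ratio (fractional) function of $\mathcal{X}$ and hence is not an affine (linear) function of the decision variables. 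I would point out that even a single such reciprocal dependence suffices to break linearity, so the remaining energy and queue-drift terms, which are affine in $\mathcal{X}$, do not affect the conclusion.

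Combining these two facts, the problem minimizes a nonlinear objective over binary variables subject to the linear coupling constraint \eqref{TOc}, which is precisely the definition of an INLP problem, completing the proof. The main obstacle here is not conceptual difficulty but rather precision in identifying \emph{which} terms produce the nonlinearity; I would be careful to isolate the $Z_m^{\text{E}}$ and $Z_m^{\text{C}}$ terms explicitly and to note that the denominators are nonconstant functions of $\mathcal{X}$ (as opposed to being absorbed into the $\mathcal{A}^*$-dependent constant $\phi$ from the resource-allocation stage), since a careless reader might otherwise mistake the objective for a linear function of the offloading indicators.
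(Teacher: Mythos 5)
Your proof is correct and takes essentially the same route one would expect the paper's appendix to take: the decision variables $x_u^m(t)$ and $x_u^{m\rightarrow c}(t)$ are binary by \eqref{TOa} and \eqref{TOb}, and the objective is nonlinear because the virtual-queue terms $Z_m^{\text{E}}(t)$ and $Z_m^{\text{C}}(t)$ place the affine combination $\sum_{u}(x_u^m(t)-x_u^{m\rightarrow c}(t))s_u(t)$ (respectively $\sum_u x_u^{m\rightarrow c}(t)s_u(t)$) in a denominator, so the objective is a nonconstant rational function of the binary variables. Your explicit isolation of the reciprocal terms as the sole source of nonlinearity, with the remaining terms affine once $\mathcal{A}^*$ is fixed, is precisely the needed observation.
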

%\vspace{-0.5em}

\begin{proof}
The details can be found in Appendix \ref{app_lem_P2_INLP} of the supplemental material. 
\end{proof}
\vspace{-0.5em}

\par From \eqref{TO1}, the objective of problem $\mathbf{\bar{P}^{\prime\prime\prime}.2}$ is to minimize UD energy consumption and Lyapunov drift function simultaneously. Therefore, we further decompose the task offloading problem into two subproblems, i.e., the energy consumption minimization problem for UDs and the queue stability subproblem for MEC servers, which is detailed as follows.

\par We first divide the task offloading decision as $\mathcal{X}=\mathcal{X}^m\cup\mathcal{X}^c$, where $\mathcal{X}^m=\{x_u^m(t)|\forall u\in \mathcal{U}_m(t)\}$ and $\mathcal{X}^c=\{x_u^{m \rightarrow c}(t)|\forall u\in \mathcal{U}_m(t)\}$ represent the offloading decisions at the UD side and MEC server side, respectively. More specific, $\mathcal{X}^m$ denotes the task offloading decision of each UD to  MEC server $m$, and $\mathcal{X}^c$ represents the task offloading decision of each MEC server $m$ to the cloud. Then, the objective function of problem $\mathbf{\bar{P}^{\prime\prime\prime}.2}$ is re-expressed as

\vspace{-0.8em}
{\small
\begin{align}
\mathbf{\bar{P}^{\prime\prime\prime}.2}:\,& \min_{\mathcal{X}} J(\mathcal{X}, \mathcal{A}^*)=V\mathcal{E} (\mathcal{X}^m, \mathcal{A}^*)+\mathcal{Q}(\mathcal{X}^m, \mathcal{X}^c)\\
&\text {s.t. }
 \eqref{TOa}\sim \eqref{TOc}, \notag
 \label{TO11}
\end{align}
}

\noindent where $\mathcal{E} (\mathcal{X}^m, \mathcal{A}^*)$ is given as
\begin{sequation}
\begin{aligned}
&\mathcal{E} (\mathcal{X}^m, \mathcal{A}^*)=\min_{\mathcal{X}}V \sum_{u=1}^{U_m(t)}\big[x_u^m(t) p_u^\text{tra}(t)  \frac{s_u(t)}{{r_u^m(t)}^*}\\
&+ \left(1-x_u^m(t)\right) \zeta_uf_u^2 s_u(t) \rho\big].\label{eq_TO2_energy}\\
\end{aligned}
\end{sequation}

\noindent Moreover, $\mathcal{Q}(\mathcal{X}^m, \mathcal{X}^c)$ is given as

\vspace{-0.8em}
{\small
\begin{alignat}{1}
& \mathcal{Q}(\mathcal{X}^m, \mathcal{X}^c)=Q_m^{\text{E}}(t)\big[\sum_{u=1}^{U_m(t)}(x_u^m(t)-x_u^{m \rightarrow c}(t)) s_u(t)-\frac{f_m \delta}{\rho} \big] \notag\\
    & +Q_m^{\text{C}}(t)\big[\sum_{u=1}^{U_m(t)} x_u^{m \rightarrow c}(t) s_u(t)-r_m^c  \delta\big] \notag\\
    & +Z_{m}^{\text{E}}(t)\frac{Q_m^{\text{E}}(t)}{\frac{1}{t}\big[\sum\limits_{i=1}^{t-1} A_m^{\text{E}}(i)+\sum\limits_{u=1}^{U_m(t)}(x_u^m(t)-x_u^{m \rightarrow c}(t)) s_u(t)\big]}\notag\\
    & +Z_{m}^{\text{C}}(t)\frac{Q_m^{\text{C}}(t)}{\frac{1}{t}\big[\sum\limits_{i=1}^{t-1} A_m^{\text{C}}(i)+\sum\limits_{u=1}^{U_m(t)} x_u^{m \rightarrow c}(t) s_u(t)\big]}.
\end{alignat}
}

\par Correspondingly, the energy consumption minimization problem for UDs can be expressed as

\vspace{-0.8em}
{\small
\begin{align}
\mathbf{\bar{P}^{\prime\prime\prime}.2.1}:\,& \min_{{\mathcal{X}^m}} \mathcal{E} (\mathcal{X}^m, \mathcal{A}^*) \label{eq_pro_energy}\\
&\text {s.t. }
 \eqref{TOa}. \notag
\end{align}}

\noindent Similarly, the  queue stability subproblem for MEC server can be given as

\vspace{-0.8em}
{\small
\begin{align}
\mathbf{\bar{P}^{\prime\prime\prime}.2.2}:\,& \min_{\mathcal{X}^m,\mathcal{X}^c} \mathcal{Q}(\mathcal{X}^m, \mathcal{X}^c) \label{eq_pro_que}\\
&\text {s.t. }
 \eqref{TOb}\text{ and } \eqref{TOc}. \notag
\end{align}}

\par In following, we design a two-stage alternating optimization method for task offloading subproblem. 

\textbf{Energy Minimization Subproblem.} We first focus on solving the energy minimization subproblem $\mathbf{\bar{P}^{\prime\prime\prime}.2.1}$. However, solving this subproblem directly remains difficult since it is an ILP problem, as proved in Theorem \ref{lem_P21_ILP}.

\begin{theorem}
\label{lem_P21_ILP}
Problem $\mathbf{\bar{P}^{\prime\prime\prime}.2.1}$ is an NP-hard integer linear programming (ILP) problem.
\end{theorem}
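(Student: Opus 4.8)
The plan is to establish two separate facts: that problem $\mathbf{\bar{P}^{\prime\prime\prime}.2.1}$ is an integer linear program, and that it is NP-hard. These correspond to the two assertions packed into the statement. First I would verify linearity. Inspecting the objective $\mathcal{E}(\mathcal{X}^m, \mathcal{A}^*)$ in \eqref{eq_TO2_energy}, I observe that with the communication resource allocation fixed at $\mathcal{A}^*$ (hence ${r_u^m(t)}^*$ treated as a constant), each term is of the form $x_u^m(t)\,c_u^{\text{off}} + (1-x_u^m(t))\,c_u^{\text{loc}}$ for constants $c_u^{\text{off}} = p_u^{\text{tra}}(t) s_u(t)/{r_u^m(t)}^*$ and $c_u^{\text{loc}} = \zeta_u f_u^2 s_u(t)\rho$. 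Rewriting gives $c_u^{\text{loc}} + x_u^m(t)(c_u^{\text{off}} - c_u^{\text{loc}})$, which is affine in the binary variable $x_u^m(t)$. Summed over $u$, the objective is a linear function of $\mathcal{X}^m$. Combined with the binary constraint \eqref{TOa}, this shows the problem is an ILP.

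Next I would prove NP-hardness by reduction from a known NP-hard problem. Although the energy objective in isolation decomposes per UD and would be trivially solvable, the hardness must come from the coupling constraints that $\mathbf{\bar{P}^{\prime\prime\prime}.2.1}$ inherits from the original problem $\mathbf{P}^{\prime\prime\prime}$ — in particular the connection-capacity constraint \eqref{eq_constraint_x_conn}, which caps $\sum_{u} x_u^m(t) \le C_m$, and the resource-coupling constraints \eqref{eq_constraint_com1}--\eqref{eq_constraint_com_and_off}. The natural strategy is to exhibit a reduction from the 0--1 knapsack problem (or equivalently a cardinality-constrained selection problem): each UD corresponds to an item with a ``profit'' equal to the energy saving $c_u^{\text{loc}} - c_u^{\text{off}}$ obtained by offloading rather than computing locally, and the capacity constraint \eqref{eq_constraint_x_conn} plays the role of the knapsack budget. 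Minimizing total energy then becomes equivalent to maximizing total offloading savings subject to the cardinality/resource budget, which is NP-hard.

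The main obstacle I anticipate is making the reduction rigorous given that the \emph{linear} objective with a \emph{pure cardinality} bound $\sum_u x_u^m(t) \le C_m$ is, strictly speaking, solvable in polynomial time by greedily selecting the $C_m$ UDs with the largest positive savings. Thus a clean NP-hardness proof cannot rest on cardinality alone; I would instead lean on the full constraint set carried over from $\mathbf{P}^{\prime\prime\prime}$, especially the interaction between the binary offloading variables and the shared bandwidth budget \eqref{eq_constraint_com_and_off}, which introduces a genuine packing structure once the per-UD resource demands $a_u^m(t)$ are heterogeneous. Concretely, I would map item weights to the effective bandwidth footprint each offloaded UD consumes and item values to energy savings, reducing from knapsack so that feasibility of selecting a UD subset under $\sum_u a_u^m(t) \le 1$ mirrors the knapsack capacity test. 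The delicate step will be arguing that the reduced instance remains a valid instance of $\mathbf{\bar{P}^{\prime\prime\prime}.2.1}$ under the fixed $\mathcal{A}^*$ — that is, ensuring the fixing of $\mathcal{A}$ does not trivialize the packing — which is why I expect the authors instead simply cite the combinatorial intractability of the binary selection under coupled capacity constraints, a standard but nontrivial reduction, rather than decompose it item by item.
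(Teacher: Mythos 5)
Your first half --- that with $\mathcal{A}^*$ (hence ${r_u^m(t)}^*$) held fixed the objective $\mathcal{E}(\mathcal{X}^m,\mathcal{A}^*)$ is affine in the binary variables $x_u^m(t)$, so that $\mathbf{\bar{P}^{\prime\prime\prime}.2.1}$ is an integer \emph{linear} program --- is correct and is the standard way to establish the ILP classification.

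The NP-hardness half, however, is not a proof but a plan with an acknowledged hole that you do not close. You correctly observe that the problem as literally posed (a separable linear objective subject only to \eqref{TOa}, or to \eqref{TOa} plus a pure cardinality cap inherited from \eqref{eq_constraint_x_conn} or the matching capacity $q_{i_1}$) is solvable in polynomial time by per-UD comparison or by greedily taking the $C_m$ largest positive savings $c_u^{\text{loc}}-c_u^{\text{off}}$; so the hardness, if any, must come from a genuine packing constraint. But the repair you propose --- reducing from 0--1 knapsack with item weights equal to the bandwidth footprints $a_u^m(t)$ under $\sum_u x_u^m(t)a_u^m(t)\le 1$ --- does not survive the fixing of $\mathcal{A}$ at the closed form of Theorem \ref{theo_opt_resource}: there, $a_u^m(t)^*=\sqrt{\beta_u(t)}/\sum_{v\in\mathcal{U}_m^{\text{O}}}\sqrt{\beta_v(t)}$ is an \emph{output} of the inner optimization that automatically normalizes to sum to one over whichever offloading set is chosen, so these quantities are not free instance parameters into which arbitrary knapsack weights can be encoded, and the constraint \eqref{eq_constraint_com_and_off} is satisfied by construction rather than acting as a capacity test. (If instead you let ${r_u^m(t)}^*$ depend on the chosen offloading set, the offloading energy becomes $\tfrac{V}{B_m}\bigl(\sum_u x_u^m(t)\sqrt{\beta_u(t)}\bigr)^2$, which is quadratic --- at which point the problem is no longer an ILP and the first half of the theorem is in tension with the second.) You flag exactly this difficulty and then defer to what you expect the authors to have done, which means the reduction is never exhibited. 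To complete the argument you would need to pick one concrete source of hardness --- e.g., a reduction from knapsack or the generalized assignment problem in which the task sizes $s_u(t)$ and channel gains are the free instance data and the capacity/bandwidth budget is the knapsack capacity --- and verify that the constructed instance is a legitimate instance of $\mathbf{\bar{P}^{\prime\prime\prime}.2.1}$ with $\mathcal{A}^*$ fixed. As written, the hardness claim is asserted, motivated, and then left open.
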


\begin{proof}
%%% The following item is for appendix - do not delete
The details can be found in Appendix \ref{app_lem_P21_INLP} of the supplemental material. 
\end{proof}
\vspace{-0.5em}

\par This subproblem has the following features. First, from the perspective of UDs, each UD focuses on minimizing its own energy consumption. However, from the perspective of MEC servers, each MEC server manages the energy consumption of its task queue with a set of tasks that are offloaded from different UDs. Accordingly, the UDs and MEC servers have heterogeneous preferences for the energy minimization subproblem. Moreover, in the considered MEC system, different UDs have diverse processing requirements for various tasks, while different MEC servers have varying computing capabilities. This results in each UD having heterogeneous preferences on different MEC servers when offloading tasks, and each MEC server having heterogeneous preferences on different UDs when providing services. Therefore, the efficient task offloading decision of each UD should guarantee the UD is associated with a suitable MEC server.

\par Considering that the bilateral matching mechanism provides an effective tool for establishing mutually beneficial relationship between two sets of entities with heterogeneous preferences, we employ this mechanism to facilitate the task offloading between UDs and MEC servers~\cite{liu2022task}. First, the bilateral matching allows for more efficient task offloading by considering the preferences and capabilities of both UDs and MEC servers~\cite{liu2024coexistence}. This guarantees that tasks are offloaded to the most suitable destinations. Moreover, by incorporating preferences of both UDs and MEC servers, the matching process can achieve a more balanced distribution of tasks at the edge. This prevents the task overloading or resource idle at certain MEC servers, thereby enhancing the load balance and system stability.

\par Based on the abovementioned analysis, we employ the bilateral matching mechanism to solve the energy minimization subproblem. 

\par \textbf{First}, the bilateral matching is defined in Definition~\ref{def_match}.

\begin{definition}
	\label{def_match}
	 The bilateral matching is defined as  $(\mathcal{A},\eta)$, where 
	\begin{itemize}[]
		\item $\mathcal{A}=(\mathcal{U}_m(t),\mathcal{I})$ comprises two distinct sets of agents, i.e., UDs $\mathcal{U}_m(t)$ and servers $\mathcal{I}$.Specifically, since each UD can select to execute the task locally, the UD can be regarded as a special local server for task processing. Therefore, the set of servers is denoted as $\mathcal I=\left\{ i_0,i_1\right\}$, where $i_0$ and $i_1$ represent the local server and MEC server, respectively.
	
		\item $\eta\subseteq \mathcal{U}_m(t) \times \mathcal{I}$ represents the matching between the UDs and the servers. Since we consider a binary offloading scheme, the matching $\eta$ is defined as a many-to-one mapping between the UDs $\mathcal{U}_m(t)$ and servers $\mathcal I$. Specifically, each UD $u\in\mathcal{U}_m(t)$ can be matched with at most one server, i.e., $\eta(u)\in \mathcal{I}$, while each server $i\in\mathcal{I}$ can be matched with multiple UDs, i.e., $\eta(i)\subseteq \mathcal{U}_i(t)$. Moreover, the matching must meet the constraints as follows:

            \vspace{-0.8em}
            {\small
          \begin{subequations}
          \label{eq_matching}
            \label{eq_virtual}
            \begin{alignat}{1}
                &\eta(u)\in \mathcal{I}, \, \left|\eta (u)\right|=1, \,  \forall u \in \mathcal{U}_m(t), \label{eq_matching_c1}\\
                &\eta(i)\subseteq \mathcal{U}_m(t), \, \left|\eta (i)\right| \le q_i, \,  \forall i \in \mathcal I, \, \label{eq_matching_c2}\\
                &\eta(u) = i  \iff u \in \eta(i), \, \forall u \in \mathcal{U}_m(t), \, i \in \mathcal I, \label{eq_matching_c3}
            \end{alignat}
        \end{subequations}}
  \end{itemize}
\end{definition}

\noindent where \eqref{eq_matching_c1} indicates that each UD $u\in\mathcal{U}_m(t)$ can be matched with at most one server. Furthermore,  \eqref{eq_matching_c2} means that each server $i$ can be matched with multiple UDs, which is constrained by the matching capacity $q_i$. Here, the matching capacity of the local server is defined as the number of UDs connected to MEC server $m$, i.e., $q_{i_0} = |\mathcal{U}_m(t)|$. Additionally, \eqref{eq_matching_c3} implies that if UD $u$ is matched with server $i$, server $i$ provides computing service for UD $u$, and vice versa.

\par \textbf{Second}, as discussed before, the UDs and servers have heterogeneous preferences on each other, which should be measured mathematically. To this end, the utility function is defined to estimate the preferences of UDs and servers. Specifically, since the goal of subproblem $\mathbf{\bar{P}^{\prime\prime\prime}.2.1}$ is to minimize the UD energy consumption, we model the utility functions of each UD $u$ and server $i$ as the negative of the UD energy consumption and the negative of the total UD energy consumption, respectively, as follows:

\vspace{-0.8em}
{\small
\begin{subequations}
    \label{eq_virtual}
    \begin{alignat}{1}
        &\varphi_{u}(\eta) =  -\big ( {E_{u}^{\text{loc}}(t)+E_{u}^{\text{off}}(t)} \big ), \, \forall u \in \mathcal{U}_m(t), \label{eq_UD_utility}\\
        & \varphi_i(\eta)=\sum_{u \in \mathcal{U}}-\mathbb{I}_{\left\{i=i_0\right\}} E_u^{\text{loc}}(t) +\sum_{u \in \mathcal{U}}-\mathbb{I}_{\left\{i \neq i_0\right\}}E_u^{\text{off}}(t), \forall i \in \mathcal{I}, \label{eq_server_utility}     
    \end{alignat}
\end{subequations}

\noindent where the matching preferences of UDs and servers are based on the utility functions in a descending order.
}

\par \textbf{Finally}, we introduce the main steps of the bilateral matching, which mainly consists of the steps of matching initialization, stable matching, and result mapping. 

\par \textit{Matching Initialization.}  First, a matching $\eta$ is initialized randomly. Then, the utility functions are calculated based on Eqs. \eqref{eq_UD_utility} and \eqref{eq_server_utility} to quantify the matching preferences of UDs and servers.

\par \textit{Matching Process.}  Effective bilateral matching enables a stable result across all pairs of agents. However, the communication resource allocation for a UD is affected by the matching results of the other UDs, which in turn impacts the estimation of the utility functions. This interdependence, referred to as externality~\cite{TD2022}, implies that the preference relationships among UDs and servers may change during the matching process, which potentially influence the stability of the matching result. To address the externality, we adopt the \textit{swap matching}, which is defined as follows.

\begin{definition}
\label{def_swap}
Swap Matching: For a pair of UD $(i,i^{\prime})$ and a pair of server $(u, u^{\prime})$ in  matching $\eta$, where UD $u$ and server $i$ is matched, i.e., $i=\eta(u)$, $i^{\prime}=\eta(u^{\prime})$, and $\forall i\neq i^{\prime}$, $\forall u\neq u^{\prime}$, a swap matching $\eta_{ui}^{u^\prime i^\prime}$ can be given as
\begin{sequation}
\label{eq_swap_matching}
\eta_{ui}^{u^\prime i^\prime}=\left\{\eta \backslash\left\{(u, i),\left(u^{\prime}, i^{\prime}\right)\right\}\right\} \cup\left\{\left(u, i^{\prime}\right),\left(u^{\prime}, i\right)\right\},
\end{sequation}

\noindent where $i^{\prime}=\eta(u)$ and $i=\eta(u^{\prime})$, and the matching state in $\eta_{ui}^{u^\prime i^\prime}$ for the other UDs and servers remain unchanged. 
\end{definition}

\par Based on Definition \ref{def_swap}, we can known that a swap matching can be performed to change the interdependent relationships in the current matching. Therefore, we present the criteria of operating the swap matching by introducing the concept of a \textit{blocking pair} as follows.

\begin{definition}
\label{def_block}
Blocking Pair: For a pair of UD $(i,i^{\prime})$ and a pair of server $(u, u^{\prime})$ in  matching $\eta$, where UD $u$ and server $i$ is matched, i.e., $i=\eta(u)$, $i^{\prime}=\eta(u^{\prime})$, and $\forall i\neq i^{\prime}$, the UD pair $(i,i^{\prime})$ is a blocking pair if it satisfies the conditions as follows:

\vspace{-0.8em}
{\small
\begin{subequations}
    \label{eq_block}
    \begin{alignat}{1}
          &\forall s \in \left \{u,u^{\prime},i,i^{\prime} \right \}, \varphi_s  (\eta_{u i}^{u^{\prime} i^{\prime}})\ge \varphi_s  (\eta), \label{eq_block_1}\\
          &\exists s \in \left \{u,u^{\prime},i,i^{\prime} \right \}, \varphi_s  (\eta_{u i}^{u^{\prime} i^{\prime}})> \varphi_s (\eta),\label{eq_block_2}
    \end{alignat}
\end{subequations}}

\noindent where \eqref{eq_block_1} means that the UDs $u,u^{\prime}$ exchange their currently matched servers $i,i^{\prime}$ if their utility functions do not deteriorate, and \eqref{eq_block_2} indicates that the utility function for one UD improves as a result of the swap.

\end{definition}

\par By applying the swap matching action according to Definition \ref{def_swap}, a stable matching result is achieved when it satisfies the condition in Definition \ref{def_stable}.

\begin{definition}
\label{def_stable}
Stable Matching: A matching $\eta$ is considered as stable if there are no blocking pairs within it.
\end{definition}

\par \textit{Result Mapping.}  Based on the  matching result $\eta$, the offloading decisions at the UD side can be determined as
\begin{sequation}
\label{eq_trans_form}
x_{u}^{m}=\left\{\begin{array}{l}
0, \text { if }\eta(u) = i_0, \, \forall u \in \mathcal{U}_m(t), \\
1, \text { if }\eta(u) = i_1. \, \forall u \in \mathcal{U}_m(t)\textcolor{b}{.}
\end{array}\right.
\end{sequation}

\textit{Result Refinement.} To further reduce the energy consumption, we introduce a \textit{remove action strategy} to refine the initial matching results by removing the elements from $(\mathcal{X}^m)^*$, which is defined in Definition \ref{def_remove} as follows:

\begin{definition}
\label{def_remove}
\par Remove Action: For the given task offloading decision at the UD side $\mathcal{X}^*$ and an element $x\in \mathcal{X}^*$, a new offloading decision $\mathcal{X}^{\prime}$ is obtained by removing the element $x$, i.e., $\mathcal{X}^{\prime}=\mathcal{X}^*\setminus \left \{x  \right \}$, where $\mathcal{E}(\mathcal{X}^{\prime})\le \mathcal{E}(\mathcal{X}^*)$.
\end{definition}

\par The solution process of energy minimization subproblem is presented in Algorithm \ref{algo_matching}. In the matching initialization phase, a matching $\eta$ is randomly initialized, and the utility functions for UDs and servers are calculated (lines 1 to 2). Then, in the stable matching phase, the swap matching action is performed, and the utility functions are updated each time a blocking pair is identified (lines 3 to 7). The swap matching is terminated until there are no blocking pairs exist in the current matching, resulting in a stable matching result. Additionally, the matching result is mapped into the offloading decisions at the UD side based on Eq. \eqref{eq_trans_form} (line 9). Finally, the matching result is refined by employing the remove action (lines 10 to 14).

 \vspace{-0.5em}

 \begin{algorithm}[]
\caption{Energy Minimization Subproblem}  
\label{algo_matching}	
 \SetAlgoLined
	\KwIn{ $q_{i1}$}
    \KwOut {$(\mathcal{X}^m)^*$}
    \tcp{Matching Initialization}
    Randomly initialize a matching $\eta$\;
   Obtain the utility functions based on Eqs. \eqref{eq_UD_utility} and \eqref{eq_server_utility}\; 
    \tcp{Matching Process}
     \Repeat{$\text{There are no blocking pairs in the current matching.} \eta$}{
     Select a pair of UDs $(u,u^{\prime})$, where $\eta(u)=i$ and $\eta(u^{\prime})=i^{\prime}$\;
    \If{$(u,u^{\prime})$ is a blocking pair of the current matching $\eta$}{
        $\eta = \eta_{ui}^{u^{\prime}i^{\prime}}$\;
        Update utility functions based on Eqs. \eqref{eq_UD_utility} and \eqref{eq_server_utility}\;
    }
 }
    \tcp{Result Mapping}
    Map the matching result $\eta$ into the offloading decisions at UD side based on Eq. \eqref{eq_trans_form}\;
    \tcp{Result Refinement}
     \Repeat{There is no remove action can be executed}{ % 开始内部repeat-until代码块
       Select an element $x \in (\mathcal{X}^m)^*$\; 
         \If {$\mathcal{E}\left (\mathcal{X}^*\setminus \left \{x  \right \}\right )\le \mathcal{E}((\mathcal{X}^m)^*)$}{
         $(\mathcal{X}^m)^*=\left (\mathcal{X}^*\setminus \left \{x  \right \}\right)$\;
        }
     }
\end{algorithm}
\vspace{-1em}

\par \textbf{Queue Stability Subproblem.}
By fixing the offloading decisions at the UD side $(\mathcal{X}^m)^*$, we can obtain the UDs who offload their tasks to  the MEC server $m$, denoted by $\mathcal U_m^{\text{O}}=\left\{u \in \mathcal{U}_m(t) \mid x_{u}^{m}(t)=1\right\}$. Then, the queue stability subproblem in \eqref{eq_pro_que} is transformed as follows: 

\vspace{-0.8em}
{\small
\begin{align}
\mathbf{\bar{P}^{\prime\prime\prime}.2.2^{\prime}}:& \min _{\{x_{u}^{m \rightarrow c}\}}Q_m^{\text{E}}(t)\big(\sum_{u \in \mathcal U_m^{\text{O}}}(1-x_u^{m \rightarrow c}(t)) s_u(t)\big) \nonumber \\
& +Q_m^{\text{C}}(t)\big(\sum_{k \in \mathcal U_m^{\text{O}}} x_u^{m \rightarrow c}(t) s_u(t)\big) \nonumber\\
& +Z_{m}^{\text{E}}(t)\frac{Q_m^{\text{E}}(t)}{\frac{1}{t}\Big[\sum\limits_{i=1}^{t-1} A_m^{\text{E}}(i)+\sum\limits_{k \in \mathcal U_m^{\text{O}}}(1-x_u^{m \rightarrow c}(t)) s_u(t)\Big]}\nonumber\\
& +Z_{m}^{\text{C}}(t)\frac{Q_m^{\text{C}}(t)}{\frac{1}{t}\Big[\sum\limits_{i=1}^{t-1} A_m^{\text{C}}(i)+\sum\limits_{u \in \mathcal U_m^{\text{O}}} x_u^{m \rightarrow c}(t) s_u(t)\Big]}\label{P8}\\
\text {s.t. }&x_u^{m \rightarrow c}(t) \in\{0,1\},\forall u \in \mathcal U_m^{\text{O}}\tag{\ref{P8}{a}},\label{P8a}\\
& x_u^{m \rightarrow c}(t)\le (x_u^{m}(t))^*,\forall u \in \mathcal U_m^{\text{O}}\tag{\ref{P8}{b}}\label{P8b}.
\end{align}}

\par Problem $\mathbf{\bar{P}^{\prime\prime\prime}.2.2^{\prime}}$ is an INLP problem due to the binary variable of $x_{k}^{m \rightarrow c}$. To simplify the problem solving process, we convert the binary constraint (\ref{P8}{a}) into a continuous constraint. Specifically, by relaxing the binary decision variables $x_{k}^{m \rightarrow c}$ to be fractional, problem $\mathbf{\bar{P}^{\prime\prime\prime}.2.2^{\prime}}$ is converted as follows:

\vspace{-0.8em}
{\small
\begin{align}\mathbf{\bar{P}^{\prime\prime\prime}.2.2^{\prime\prime}}:& \min _{\{x_{u}^{m \rightarrow c}\}}Q_m^{\text{E}}(t)\big(\sum_{u \in \mathcal U_m^{\text{O}}}(1-x_u^{m \rightarrow c}(t)) s_u(t)\big) \nonumber \\
& +Q_m^{\text{C}}(t)\big(\sum_{u \in \mathcal U_m^{\text{O}}} x_u^{m \rightarrow c}(t) s_u(t)\big) \nonumber\\
& +Z_{m}^{\text{E}}(t)\frac{Q_m^{\text{E}}(t)}{\frac{1}{t}\big[\sum\limits_{i=1}^{t-1} A_m^{\text{E}}(i)+\sum\limits_{u \in \mathcal U_m^{\text{O}}}(1-x_u^{m \rightarrow c}(t)) s_u(t)\big]}\nonumber\\
& +Z_{m}^{\text{C}}(t)\frac{Q_m^{\text{C}}(t)}{\frac{1}{t}\big[\sum\limits_{i=1}^{t-1} A_m^{\text{C}}(i)+\sum\limits_{u \in \mathcal U_m^{\text{O}}} x_u^{m \rightarrow c}(t) s_u(t)\big]}\label{P9}\\
\text {s.t. }&x_u^{m \rightarrow c}(t) \in [0,1],\forall u \in \mathcal U_m^{\text{O}} \tag{\ref{P9}{a}}\\
&\eqref{P8a} \text { and }\eqref{P8b}\notag.
\end{align}}

\par Given that the objective function and constraint \eqref{P9} are convex, we can apply existing optimization tools such as CVX to derive the fractional solution for $\mathbf{\bar{P}^{\prime\prime\prime}.2.2^{\prime\prime}}$.

\par Next, we transform the fractional solution of problem $\mathbf{\bar{P}^{\prime\prime\prime}.2.2^{\prime\prime}}$ into an integral solution by employing the dependent rounding method as a straightforward solution since it can effectively convert continuous solutions into feasible binary solutions while achieving near-optimal performance with lower computational overhead~\cite{gandhi2006dependent}. The rounding operation of the dependent rounding method is presented as follows.

\par  First, given the fractional solution $\overline{\mathcal{X}^{c}}$ of problem \eqref{P9}, the fractional solutions of two arbitrary UDs $u_1,u_2\in \mathcal U_m^{\text{O}}$ are selected, i.e., $x_{u1}^{m \rightarrow c},x_{u2}^{m \rightarrow c} \in \overline{\mathcal{X}^{c}}$. Then, for these fractional solutions, we introduce the associated weight coefficients $\varpi _{u1}^{m \rightarrow c}$ and $\varpi _{u2}^{m \rightarrow c}$, which are initialized as the task size of UDs $u_1$ and $u_2$, respectively, i.e., $\varpi _{u1}^{m \rightarrow c}=s_{u1}(t)$ and $\varpi _{u2}^{m \rightarrow c}=s_{u2}(t)$.

\par Additionally, we define two parameters $\iota_1$ and $\iota_2$ as

\vspace{-0.8em}
{\small
\begin{subequations}
    \label{eq_parameter_varpi}
    \begin{alignat}{1}
          &\iota_1 = \min \{1-x_{u1}^{m \rightarrow c},\frac{\varpi _{u2}^{m \rightarrow c}}{\varpi _{u1}^{m \rightarrow c}}x_{u2}^{m \rightarrow c}\},\\
          &\iota_2 = \min \{x_{u1}^{m \rightarrow c},\frac{\varpi _{u2}^{m \rightarrow c}}{\varpi _{u1}^{m \rightarrow c}}(1-x_{u2}^{m \rightarrow c})\}.
    \end{alignat}
\end{subequations}}

\par Moreover, based on the parameters in Eq. \eqref{eq_parameter_varpi}, the probability values $\varrho_1$ and $\varrho_2$ are defined as follows:

\vspace{-0.8em}
{\small
\begin{subequations}
    \label{eq_parameter_varrho}
    \begin{alignat}{1}
          &\varrho_1 = \frac{\iota_1}{\iota_1+\iota_2}, \\
          &\varrho_2 = \frac{\iota_2}{\iota_1+\iota_2}.
    \end{alignat}
\end{subequations}}

\par Finally, based on the probability values in Eq. \eqref{eq_parameter_varrho}, the fractional solutions $x_{u1}^{m \rightarrow c}$ and $x_{u2}^{m \rightarrow c}$ are transformed into binary solutions based on the following rules. Specifically, when $\varrho_1 \ge \varrho_2$, $x_{u1}^{m \rightarrow c}$ and $x_{u2}^{m \rightarrow c}$ are updated as

\vspace{-0.8em}
{\small
\begin{subequations}
    \label{eq_frac_to_int_1}
    \begin{alignat}{1}
          &(x_{u1}^{m \rightarrow c})^{\prime}=x_{u1}^{m \rightarrow c}-\iota_2, \\
          &(x_{u2}^{m \rightarrow c})'=x_{u2}^{m \rightarrow c}+\frac{\varpi _{u1}^{m \rightarrow c}}{\varpi _{u2}^{m \rightarrow c}}\iota_2.
    \end{alignat}
\end{subequations}}

\noindent When $\varrho_1 \le \varrho_2$, $x_{u1}^{m \rightarrow c}$ and $x_{u2}^{m \rightarrow c}$ are updated as

\vspace{-0.8em}
{\small
\begin{subequations}
    \label{eq_frac_to_int_2}
    \begin{alignat}{1}
          &(x_{u1}^{m \rightarrow c})^{\prime}=x_{u1}^{m \rightarrow c}+\iota_1, \\
          &(x_{u2}^{m \rightarrow c})^{\prime}=x_{u2}^{m \rightarrow c}-\frac{\varpi _{u1}^{m \rightarrow c}}{\varpi_{u2}^{m \rightarrow c}}\iota_1.
    \end{alignat}
\end{subequations}}

\par The abovementioned process of dependent rounding method ensures that the rounded solutions satisfy the binary constraints, 
while maintaining the same size of tasks uploaded to the cloud before and after the rounding process, as proved in Theorems \ref{lem_round_1} and \ref{lem_round_2}.

\begin{theorem}
\label{lem_round_1}
In a rounding operation, at least one of the two fractional solutions are rounded to 0 or 1.
\end{theorem}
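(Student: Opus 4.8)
The plan is to establish the claim by a direct case analysis on the two rounding rules, exploiting the fact that each step size is calibrated through a $\min$ operation so that exactly one variable is driven onto a boundary. First I would note that since $\iota_1,\iota_2\ge 0$ and $\iota_1+\iota_2>0$, the comparison $\varrho_1\ge\varrho_2$ in Eq.~\eqref{eq_parameter_varrho} is equivalent to $\iota_1\ge\iota_2$, so the sign comparison simply selects which of the update rules in Eqs.~\eqref{eq_frac_to_int_1} and \eqref{eq_frac_to_int_2} is applied. This reduces the proof to two mutually exhaustive branches.

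In the branch $\varrho_1\ge\varrho_2$, the update is governed by $\iota_2=\min\{x_{u1}^{m\rightarrow c},\,\frac{\varpi_{u2}^{m\rightarrow c}}{\varpi_{u1}^{m\rightarrow c}}(1-x_{u2}^{m\rightarrow c})\}$, and I would sub-split according to which of the two arguments attains this minimum. If the first argument attains it, i.e.\ $\iota_2=x_{u1}^{m\rightarrow c}$, then substituting into Eq.~\eqref{eq_frac_to_int_1} gives $(x_{u1}^{m\rightarrow c})'=x_{u1}^{m\rightarrow c}-\iota_2=0$. If instead the second argument attains it, then in the update of $x_{u2}^{m\rightarrow c}$ the weight ratio $\frac{\varpi_{u1}^{m\rightarrow c}}{\varpi_{u2}^{m\rightarrow c}}$ exactly cancels the factor embedded inside $\iota_2$, yielding $(x_{u2}^{m\rightarrow c})'=x_{u2}^{m\rightarrow c}+(1-x_{u2}^{m\rightarrow c})=1$. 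Either way one of the two variables lands on $\{0,1\}$.

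The branch $\varrho_1\le\varrho_2$ I would treat symmetrically, now using $\iota_1=\min\{1-x_{u1}^{m\rightarrow c},\,\frac{\varpi_{u2}^{m\rightarrow c}}{\varpi_{u1}^{m\rightarrow c}}x_{u2}^{m\rightarrow c}\}$ together with Eq.~\eqref{eq_frac_to_int_2}: if the first argument is the minimizer then $(x_{u1}^{m\rightarrow c})'=x_{u1}^{m\rightarrow c}+(1-x_{u1}^{m\rightarrow c})=1$, while if the second argument is the minimizer then the same weight-ratio cancellation produces $(x_{u2}^{m\rightarrow c})'=x_{u2}^{m\rightarrow c}-x_{u2}^{m\rightarrow c}=0$. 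These four subcases cover every possibility, completing the argument.

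This is essentially structured bookkeeping rather than a deep argument, so I do not anticipate a genuine obstacle; the only point requiring care is verifying the cancellation of the weight coefficients $\varpi_{u1}^{m\rightarrow c}/\varpi_{u2}^{m\rightarrow c}$ against the factor inside the minimum defining $\iota_1$ or $\iota_2$, which is precisely the mechanism forcing the non-minimizing variable to hit a boundary exactly rather than an interior point. I would additionally remark in passing that the companion feasibility property—namely that both updated values remain in $[0,1]$—follows from the same $\min$ structure, since the chosen step never exceeds the distance from either variable to its nearer boundary; this observation also supports the mass-preservation property used in Theorem~\ref{lem_round_2}.
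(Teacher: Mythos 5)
Your case analysis is correct and is essentially the same argument as the paper's proof: the $\min$ defining $\iota_1$ (resp.\ $\iota_2$) guarantees that whichever of its two arguments attains the minimum, the corresponding update in Eq.~\eqref{eq_frac_to_int_2} (resp.\ Eq.~\eqref{eq_frac_to_int_1}) drives either $x_{u1}^{m\rightarrow c}$ or $x_{u2}^{m\rightarrow c}$ exactly to a boundary, with the weight ratio cancelling as you describe. The only observation worth making explicit is that $\iota_1+\iota_2>0$ holds because the rounding operation is only ever applied to strictly fractional pairs with positive weights, so the probabilities $\varrho_1,\varrho_2$ are well defined.
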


\begin{proof}
The details can be found in Appendix \ref{app_lem_round_1} of the supplemental material. 
\end{proof}
\vspace{-0.5em}
 
\begin{theorem}
\label{lem_round_2}
In a rounding operation, whether updating the fractional solutions according to Eq. \eqref{eq_frac_to_int_1} or Eq. \eqref{eq_frac_to_int_2}, it can be ensured that $x_{u1}^{m \rightarrow c}\varpi _{u1}^{m \rightarrow c}+x_{u2}^{m \rightarrow c}\varpi _{u2}^{m \rightarrow c}
=
(x_{u1}^{m \rightarrow c})^{\prime}\varpi _{u1}^{m \rightarrow c}+(x_{u2}^{m \rightarrow c})^{\prime}\varpi _{u2}^{m \rightarrow c}$.
\end{theorem}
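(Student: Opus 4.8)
The plan is a direct substitution argument organized around the two update regimes. Since the rounding rule is defined by comparing $\varrho_1$ and $\varrho_2$, I would treat the case $\varrho_1 \ge \varrho_2$ (governed by Eq.~\eqref{eq_frac_to_int_1}) and the case $\varrho_1 \le \varrho_2$ (governed by Eq.~\eqref{eq_frac_to_int_2}) separately, substitute the prescribed updates $(x_{u1}^{m\rightarrow c})'$ and $(x_{u2}^{m\rightarrow c})'$ into the right-hand side of the claimed identity, and verify that it collapses to the left-hand side. Because the statement asserts an exact equality (not an inequality or an expectation), no probabilistic or optimality reasoning is needed, only algebra.

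For the first regime, substituting $(x_{u1}^{m\rightarrow c})'=x_{u1}^{m\rightarrow c}-\iota_2$ and $(x_{u2}^{m\rightarrow c})'=x_{u2}^{m\rightarrow c}+\frac{\varpi_{u1}^{m\rightarrow c}}{\varpi_{u2}^{m\rightarrow c}}\iota_2$ gives
\begin{equation}
\begin{aligned}
(x_{u1}^{m\rightarrow c})'\varpi_{u1}^{m\rightarrow c}+(x_{u2}^{m\rightarrow c})'\varpi_{u2}^{m\rightarrow c}
&= x_{u1}^{m\rightarrow c}\varpi_{u1}^{m\rightarrow c}+x_{u2}^{m\rightarrow c}\varpi_{u2}^{m\rightarrow c}\\
&\quad -\iota_2\varpi_{u1}^{m\rightarrow c}+\frac{\varpi_{u1}^{m\rightarrow c}}{\varpi_{u2}^{m\rightarrow c}}\iota_2\,\varpi_{u2}^{m\rightarrow c},
\end{aligned}
\end{equation}
and the last two terms cancel because $\frac{\varpi_{u1}^{m\rightarrow c}}{\varpi_{u2}^{m\rightarrow c}}\varpi_{u2}^{m\rightarrow c}=\varpi_{u1}^{m\rightarrow c}$, leaving precisely the left-hand side. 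The second regime is handled identically: the increment of $x_{u1}^{m\rightarrow c}$ by $\iota_1$ contributes $+\iota_1\varpi_{u1}^{m\rightarrow c}$, while the compensating decrement of $x_{u2}^{m\rightarrow c}$, once scaled by the factor $\varpi_{u1}^{m\rightarrow c}/\varpi_{u2}^{m\rightarrow c}$ and multiplied by $\varpi_{u2}^{m\rightarrow c}$, contributes $-\iota_1\varpi_{u1}^{m\rightarrow c}$, so the two perturbations again annihilate.

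The crux I want to highlight is therefore not any difficult estimate but the deliberate design of the compensating factor $\varpi_{u1}^{m\rightarrow c}/\varpi_{u2}^{m\rightarrow c}$ appearing in Eqs.~\eqref{eq_frac_to_int_1} and~\eqref{eq_frac_to_int_2}: it is chosen precisely so that whatever weighted mass is removed from $x_{u1}^{m\rightarrow c}$ is added back to $x_{u2}^{m\rightarrow c}$ in weighted terms, making the quantity $x_{u1}^{m\rightarrow c}\varpi_{u1}^{m\rightarrow c}+x_{u2}^{m\rightarrow c}\varpi_{u2}^{m\rightarrow c}$ an invariant of every rounding step. Since the weights $\varpi$ are initialized to the task sizes $s_{u1}(t)$ and $s_{u2}(t)$, this invariant is exactly the total data volume earmarked for cloud offloading, so the identity shows that the dependent rounding preserves the aggregate cloud-bound workload. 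The only point to note for well-posedness is that $\varpi_{u2}^{m\rightarrow c}\neq 0$, which holds because the weights equal positive task sizes; beyond keeping the bookkeeping of the two cases straight, I anticipate no substantive obstacle.
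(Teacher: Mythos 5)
Your proof is correct: the direct substitution in each of the two update regimes, with the cancellation driven by the compensating factor $\varpi_{u1}^{m\rightarrow c}/\varpi_{u2}^{m\rightarrow c}$, is exactly the algebraic verification this identity calls for, and it matches the paper's own appendix argument in substance. No gaps; your added remark that $\varpi_{u2}^{m\rightarrow c}\neq 0$ because the weights are initialized to positive task sizes is a sensible well-posedness note.
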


\begin{proof}
The details can be found in Appendix \ref{app_lem_round_2} of the supplemental material. 
\end{proof}

% \vspace{-0.5em}

\par Therefore, based on Lemmas \ref{lem_round_1} and \ref{lem_round_2}, by iteratively applying the rounding operation to the fractional solutions of problem \eqref{P9}, a near-optimal integer solution for problem \eqref{P8} can be obtained. We provide the process of solving problem $\mathbf{\bar{P}^{\prime\prime\prime}.2.2^{\prime}}$ in Algorithm \ref{algo_cvx_rounding}. 

In the initial phase, the fractional solution $\overline{\mathcal{X}^{c}}$ is obtained by solving problem \eqref{P9} using CVX (line 1). Following this, the rounding process begins, where the UD set with fractional offloading decisions  $\overline{\mathcal{U}_m^{c}}$, the UD set with integral offloading decisions $\widehat{\mathcal{U}_m^{c}}$, and integral offloading decision set $\widehat{\mathcal{X}^c}$ are initialized as empty (line 2). The UDs are then categorized based on the values of $x_u^{m \rightarrow c}$ into two groups, i.e., UDs with binary solutions and UDs with fractional solutions (lines 3 to 7). Next, the rounding process iterates over the UDs with fractional offloading decisions in $\overline{\mathcal{U}_m^{c}}$ (lines 8 to 18). In each iteration, two UDs $u_1$ and $u_2$ are randomly selected, and a rounding operation is performed to update their offloading decisions (lines 9 to 10). If either updated decision becomes binary (0 or 1), the corresponding UDs are moved from $\overline{\mathcal{U}_m^{c}}$ to $\widehat{\mathcal{U}_m^{c}}$, and their decisions are added to $\widehat{\mathcal{X}^c}$ (lines 11 to 16). The rounding process continues until all fractional solutions are converted to binary, yielding the near-optimal integer solution $(\mathcal{X}^c)^*$ (lines 17 to 18).

%\vspace{-0.5em}
 \begin{algorithm}[]
\caption{Queue Stabilization}  
\label{algo_cvx_rounding}	
 \SetAlgoLined
	\KwIn{$(\mathcal{X}^m)^*$}
    \KwOut {$(\mathcal{X}^c)^*$}
    \tcp{Fractional Solution}
    Obtain the fractional solution $\overline{\mathcal{X}^{c}}$ of problem \eqref{P9} using CVX\;
    \tcp{Rounding Process}
     Set $\overline{\mathcal{U}_m^{c}}=\{\},\widehat{\mathcal{U}_m^{c}}=\{\},\widehat{\mathcal{X}^{c}}$=\{\}\;
    \For{each $x_{u}^{m \rightarrow c} \in \overline{\mathcal{X}^{c}}$}{
    \If{$x_{u}^{m \rightarrow c} \in \{0,1\}$}{$\widehat{\mathcal{U}_m^{c}}=\widehat{\mathcal{U}_m^{c}}\cup\{u\}$\;
    }
    \Else{
        $\overline{\mathcal{U}_m^{c}}=\overline{\mathcal{U}_m^{c}}\cup\{u\}$\;
    }
}
     \Repeat{$\text{No element exists in}$
     $\overline{\mathcal{U}_m^{c}}$}{
     Select a pair of UDs $\{u_1,u_2\}\subseteq\overline{\mathcal{U}_m^{c}}$ randomly\;
     Update $\{x_{u1}^{m \rightarrow c},x_{u2}^{m \rightarrow c}\}$ to $\{(x_{u1}^{m \rightarrow c})^{\prime},(x_{u2}^{m \rightarrow c})^{\prime}\}$ by executing rounding operation\;
    \If{$(x_{u1}^{m \rightarrow c})^{\prime}\in \{0,1\}$}{
       $\widehat{\mathcal{U}_m^{c}}=\widehat{\mathcal{U}_m^{c}}\cup\{u_1\},\overline{\mathcal{U}_m^{c}}=\overline{\mathcal{U}_m^{c}}\setminus\{u_1\}$\;
       $\overline{\mathcal{X}^{c}}=\overline{\mathcal{X}^{c}}\cup\{(x_{u1}^{m \rightarrow c})^{\prime}\}$\;
    }
    \If{$(x_{u2}^{m \rightarrow c})^{\prime}\in \{0,1\}$}{$\widehat{\mathcal{U}_m^{c}}=\widehat{\mathcal{U}_m^{c}}\cup\{u_2\},\overline{\mathcal{U}_m^{c}}=\overline{\mathcal{U}_m^{c}}\setminus\{u_2\}$\;
       $\widehat{\mathcal{X}^c}=\widehat{\mathcal{X}^c}\cup\{(x_{u2}^{m \rightarrow c})\}$\;
    }
 }
 \Return $(\mathcal{X}^c)^*=\widehat{\mathcal{X}^c}$\;
\end{algorithm}

\par \textbf{Two Stage Alternative Task Offloading Method.} To effectively address both the energy minimization and queue stability subproblems, it is crucial to integrate these two objectives and manage the trade-offs between minimizing energy consumption and ensuring queue stability. To this end, the connection capacity of the MEC server is utilized to connect the two objectives for the following reasons. Specifically, more tasks can be offloaded to MEC server $m$ as its connection capacity increases, resulting in lower energy consumption for the UDs. However, this increase also results in a rise in the queue backlog of the MEC $m$, making the system more prone to instability. Conversely, if fewer tasks are offloaded to MEC $m$, more tasks are processed locally, which increases the UD energy consumption while reducing the queue backlog of MEC server $m$. Therefore, the connection capacity of the MEC server is used to balance the queue stability for MEC servers and the energy consumption for UDs.
 
\par The two-stage alternating task offloading method is elaborated in Algorithm \ref{al_two_stage}. Initially, the temporary variable $n$ is initialized as the connection capacity of MEC server $m$, and the optimal value of task offloading problem $\mathcal{P}^{opt}$ is initialized to infinity (line 1). Next, in each iteration, the connection capacity of MEC server $m$ is set as $n$, i.e., $q_{i1}=n$ (line 3). Moreover, the offloading decisions at the UD side are obtained by solving the energy minimization subproblem via Algorithm \ref{algo_matching} (line 4). Then, the objective function value $\mathcal{E}_n$ for energy minimization subproblem is updated (line 5). Similarly, the offloading decisions at the MEC server are determined by solving the queue stability subproblem through Algorithm \ref{algo_cvx_rounding}, and the objective function value $\mathcal{Q}_n$ of the queue stability subproblem is updated (lines 6 to 7). Subsequently, the objective function value for task offloading problem is updated by comparing it with $V^{opt}$ (lines 8 to 11). After each iteration, the temporary variable $n$ is decremented by 1, and this process continues until $n$ reaches 0 (lines 12-13). Finally, the integral values of the task offloading variables are obtained using the dependent rounding algorithm (line 14).

\vspace{-0.6em}
\begin{algorithm} 
    \caption{Two-Stage Alternative Optimization Algorithm at Time Slot $t$}     
    \label{al_two_stage}
    \SetAlgoLined
    \KwIn{$C_m$ }
    \KwOut{$\mathcal{X}^{\text{opt}}(t)$}
    \textbf{initialization:} $n= C_m, {\mathcal{P}}^{\text{opt}} = \inf$\;
    \Repeat{$n=0$}{ % 开始repeat-until代码块
     Set the maximum number of matching as $q_{i1}=n$\;
     \tcp{\text{Energy minimization subproblem}}
     Obtain ${(\mathcal{X}^m)}^*(t)$ by calling Algorithm \ref{algo_matching} for given $q_{i1}$\;
      Calculate $\mathcal{E}_n =s \mathcal{E}((\mathcal{X}^m)^*) $\;
     \tcp{\text{Queue stability subproblem}}
    Obtain $(\mathcal{X}^c)^*$ by  calling Algorithm \ref{algo_cvx_rounding} for the given $(\mathcal{X}^m)^*$\;
    Calculate $\mathcal{Q}_n = \mathcal{Q}((\mathcal{X}^m)^*,(\mathcal{X}^c)^*) $\;
     \If{$V*\mathcal{E}_n +\mathcal{Q}_n <  \mathcal{P}^{\text{opt}}$}{
     $ \mathcal{P}^{\text{opt}}= V*\mathcal{E}_n +\mathcal{Q}_n$\;
    $(\mathcal{X}^m)^{\text{opt}}(t) = (\mathcal{X}^m)^{*}(t)$\;
     $(\mathcal{X}^c)^{\text{opt}}(t) =  (\mathcal{X}^c)^{*}(t)$\;
     }
     $n \gets n-1$\;
    } % 结束repeat-until代码块
   \Return $\mathcal{X}^{\text{opt}}(t) = \{(\mathcal{X}^m)^{\text{opt}}(t),(\mathcal{X}^c)^{\text{opt}}(t)\}$\; 
\end{algorithm}
\vspace{-1.5em}

\subsection{Main Steps of OJCTA and Performance Analysis}
\label{sec_main_step}
\par In this section, we present the main steps and performance analyses of OJCTA.

\subsubsection{Main Steps of OJCTA}
\par Algorithm \ref{al_OJCTA} presents the main steps of OJCTA. Specifically, the task queues ($Q_m^{\text{E}}(t)$ and $Q_m^{\text{C}}(t)$) and virtual queues ($Z_m^{\text{E}}(t)$ and $Z_m^{\text{C}}(t)$) are first initialized (line 1). Moreover, in each time slot, the task arrival information is acquired (line 3). Furthermore, the task offloading decision is obtained by calling Algorithm \ref{al_two_stage} (line 4), and the communication resource allocation decision is obtained according to Eq.~\eqref{eq_theo_opt_resource} (line 5). Finally, the task queues and virtual queues are updated (line 6). 

%\vspace{-0.5em}
\begin{algorithm} 
\caption{OJCTA}  
\label{al_OJCTA}
    \SetAlgoLined
     \KwIn{$V$}
    \KwOut{ $\mathbf{A}^{opt}$, $\mathbf{X}^{opt}$}
    \textbf{Initialization:} Initialize the queues as $Q_{m}^{\text{E}}(1)=0, Q_{m}^{\text{C}}(1)=0, Z_{m}^{\text{E}}(1)=0, Z_{m}^{\text{C}}(1)=0$\;
    \For {Each time slot $t$}{
    Observe task arrival $I(t)$\;   %   语句
    Obtain optimal task offloading decision $\mathcal{X}^{opt}(t)$ by calling Algorithm \ref{al_two_stage}\;
    For given $\mathcal{X}^{opt}(t)$, calculate the optimal resource allocation $\mathcal{A}^{opt}(t)$ according to Eq.~\eqref{eq_theo_opt_resource}\;
    Update $Q_{m}^{\text{E}}(t+1), Q_{m}^{\text{C}}(t+1), Z_{m}^{\text{E}}(t+1), Z_{m}^{\text{C}}(t+1)$ by \eqref{eq_deadline_QL}, \, \eqref{eq_deadline_QO}, \, \eqref{eq_virtual_L}, \, \eqref{eq_virtual_O}\;
    Set $t\gets t+1$\;
   }
 \Return $\mathbf{A}^{\text{opt}} =\{\mathcal{A}^{\text{opt}}(t) \}_{t\in\mathcal{T}}$,  $\mathbf{X}^{\text{opt}} =\{\mathcal{X}^{\text{opt}}(t) \}_{t\in\mathcal{T}}$\;
\end{algorithm}
\vspace{-0.5em}

\subsubsection{Performance Analysis}
\par The computational complexity and optimality gap of OJCTA are presented in Theorems \ref{the_complexty} and \ref{the_gap}, respectively.

\begin{theorem}
\label{the_complexty}
The proposed OJCTA has a polynomial worst-case computational complexity, i.e., $\mathcal{O}(LU^3)$, where $L$ denotes the number of outer iterations in Algorithm \ref{al_two_stage}.
\end{theorem}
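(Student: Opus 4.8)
The plan is to bound the per-time-slot computational cost of Algorithm~\ref{al_OJCTA} by analyzing its nested structure from the inside out, and then argue that the outer time-slot loop contributes only a linear factor that does not affect the polynomial order. Let $U = \max_{m,t} U_m(t)$ denote an upper bound on the number of UDs associated with any single MEC server, so that all per-server quantities can be expressed in terms of $U$. Since the main steps in each time slot (lines~3--6 of Algorithm~\ref{al_OJCTA}) consist of queue observation, a call to Algorithm~\ref{al_two_stage}, the closed-form resource allocation from Theorem~\ref{theo_opt_resource}, and the queue updates, and since the queue-update and resource-allocation steps cost at most $\mathcal{O}(U)$ per server, the dominant cost is clearly the call to the two-stage alternating optimization in Algorithm~\ref{al_two_stage}.

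First I would analyze the innermost routines. For the energy minimization subproblem (Algorithm~\ref{algo_matching}), the bilateral matching proceeds by repeatedly checking UD pairs for blocking pairs and performing swap matchings; with $\mathcal{O}(U^2)$ candidate pairs and a utility recomputation of cost $\mathcal{O}(U)$ after each accepted swap, together with the subsequent remove-action refinement over $\mathcal{O}(U)$ elements, I expect a worst-case cost of $\mathcal{O}(U^3)$ for this stage. For the queue stability subproblem (Algorithm~\ref{algo_cvx_rounding}), solving the relaxed convex program $\mathbf{\bar{P}^{\prime\prime\prime}.2.2^{\prime\prime}}$ via CVX has polynomial cost in $U$, and the dependent rounding loop terminates after $\mathcal{O}(U)$ iterations (since Theorem~\ref{lem_round_1} guarantees at least one variable is fixed to $0$ or $1$ per rounding operation), each of $\mathcal{O}(1)$ cost, so this stage is also polynomial in $U$ and dominated by the matching stage.

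Next I would assemble these into Algorithm~\ref{al_two_stage}, whose outer loop runs $L$ times (decrementing $n$ from $C_m$ down to $0$), each iteration invoking the energy minimization and queue stability subproblems once; this yields a cost of $\mathcal{O}(L U^3)$ per MEC server. Finally, multiplying by the number of servers $M$ and the number of time slots would give the total cost, but since the theorem states the complexity in terms of $U$ and $L$, I would absorb the server count into the constant (or treat each server's subproblem in parallel as motivated in the Tammer-decomposition discussion), concluding the per-time-slot complexity is $\mathcal{O}(L U^3)$ and hence polynomial.

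The hard part will be pinning down the worst-case iteration count of the swap-matching loop in Algorithm~\ref{algo_matching}, because swap matching under externalities need not be monotone in an obvious way; I would need to argue that each accepted swap strictly increases a bounded potential (e.g.\ the total system utility, which is non-decreasing by the blocking-pair conditions \eqref{eq_block_1}--\eqref{eq_block_2} and bounded below), so that the number of swaps is finite and, under a per-slot termination assumption, bounded by a polynomial in $U$. This monotone-potential argument, rather than the routine accounting of the other steps, is the crux that makes the overall bound genuinely polynomial.
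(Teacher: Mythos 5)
Your proposal follows essentially the same route as the paper's proof: an inside-out accounting in which the bilateral matching of Algorithm \ref{algo_matching} dominates at $\mathcal{O}(U^3)$ per outer iteration (examining $\mathcal{O}(U^2)$ candidate swap pairs with $\mathcal{O}(U)$-cost utility evaluations, plus the $\mathcal{O}(U)$ remove-action refinement), the CVX-plus-dependent-rounding stage of Algorithm \ref{algo_cvx_rounding} is polynomial and dominated (with Theorem \ref{lem_round_1} giving the $\mathcal{O}(U)$ bound on rounding iterations), and the $L$ iterations of Algorithm \ref{al_two_stage} multiply through to give $\mathcal{O}(LU^3)$. The loose end you flag---that the blocking-pair condition yields a monotone, bounded potential and hence finiteness, but not by itself a polynomial bound on the number of accepted swaps---is a genuine subtlety, but the paper's argument treats the sweep over all pairs as the unit of cost in the same way, so your reconstruction is faithful.
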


\begin{proof}
 The details can be found in Appendix \ref{app_the_complexty} of the supplemental material. 
\end{proof}
\vspace{-0.6em}

\begin{theorem}
\label{the_gap}
Assume that the optimal value of time-average UD energy consumption, achieved under the assumption of complete knowledge of future information, is denoted as $ C^{\text{opt}} $. Then, for any parameter $V$, the time-average UD energy consumption obtained by the  OJCTA is bounded by
\begin{sequation}
    \frac{1}{T} \sum_{t=1}^{T} \sum_{m=1}^{M} \sum_{u=1}^{U_{m}(t)} \mathbb{E}\{ E_{u}^{\text{loc}}(t) + E_{u}^{\text{off}}(t)  \}
 \le C^{\text{opt}}+\frac{B}{V}, \label{eq_teh_bound}
\end{sequation}
\noindent where $B$ is defined in Theorem \ref{the_bound_drift}.
\end {theorem}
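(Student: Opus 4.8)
The plan is to combine the drift-plus-penalty upper bound from Theorem~\ref{the_bound_drift} with the per-slot optimality of OJCTA, following the standard Lyapunov optimization argument~\cite{2010Neely}. First I would invoke the existence of an optimal stationary randomized control policy $\Pi^{*}$ that, depending only on the observed system state (channel gains, task arrivals, UD positions), attains the optimal time-average energy $C^{\text{opt}}$ while satisfying all the long-term queue-stability constraints. The key property of such an $\omega$-only policy is that the conditional expectations of the arrival-minus-service terms (i.e., $A_m^{\text{E}}(t)-f_m\delta/\rho$ and $A_m^{\text{C}}(t)-r_m^c\delta$) and of the virtual-queue drift terms appearing in \eqref{eq_drift_bound} can be made nonpositive (or arbitrarily small), so that the only residual on the right-hand side is the policy-independent constant $B$ together with the penalty $V\,C^{\text{opt}}$.

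Next I would exploit the fact that OJCTA minimizes, at every time slot, exactly the right-hand side of \eqref{eq_pro_Lya} that upper-bounds $\Lambda(\Theta(t))$. Since Algorithm~\ref{al_OJCTA} (through the Tammer decomposition, the closed-form allocation of Theorem~\ref{theo_opt_resource}, the matching of Algorithm~\ref{algo_matching}, and the dependent rounding of Algorithm~\ref{algo_cvx_rounding}) selects the decisions that minimize this expression, the value it attains is no larger than the value obtained by substituting the stationary policy $\Pi^{*}$. Using the property above, this substitution yields the slot-wise inequality
\begin{sequation}
\Delta(\Theta(t)) + V\sum_{m=1}^{M}\sum_{u=1}^{U_m(t)}\mathbb{E}\big\{E_u^{\text{loc}}(t)+E_u^{\text{off}}(t)\mid\Theta(t)\big\} \le B + V\,C^{\text{opt}}.
\end{sequation}

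Then I would take iterated expectations over $\Theta(t)$, recall that $\Delta(\Theta(t))=\mathbb{E}[L(\Theta(t+1))-L(\Theta(t))]$, and sum the inequality over $t=1,\ldots,T$. The Lyapunov drift terms telescope, leaving $\mathbb{E}[L(\Theta(T+1))]-\mathbb{E}[L(\Theta(1))]$. Dividing by $VT$, using $L(\Theta(T+1))\ge 0$, the zero initial condition $L(\Theta(1))=0$ (all queues start empty in Algorithm~\ref{al_OJCTA}), and discarding the nonnegative drift remainder gives
\begin{sequation}
\frac{1}{T}\sum_{t=1}^{T}\sum_{m=1}^{M}\sum_{u=1}^{U_m(t)}\mathbb{E}\big\{E_u^{\text{loc}}(t)+E_u^{\text{off}}(t)\big\} \le C^{\text{opt}}+\frac{B}{V},
\end{sequation}
which is exactly \eqref{eq_teh_bound}.

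The hard part will be rigorously justifying the existence of the stationary policy $\Pi^{*}$ and verifying that its conditional drift contributions are nonpositive, because the stability terms here are the nonstandard ratio quantities $Q_m^{\text{E}}(t)/\tilde{A}_m^{\text{E}}(t)$ and $Q_m^{\text{C}}(t)/\tilde{A}_m^{\text{C}}(t)$ rather than the plain backlog differences of the textbook setting. I would address this by appealing to the Little's-law-based reformulation that produced the virtual queues \eqref{eq_virtual_L} and \eqref{eq_virtual_O}: it converts the long-term delay bounds of \eqref{eq_queu_deadline1} and \eqref{eq_queu_deadline2} into mean-rate-stability conditions for $Z_m^{\text{E}}$ and $Z_m^{\text{C}}$, so that the existence theorem applies to the augmented queue vector $\Theta(t)$ and the constant $B$ of Theorem~\ref{the_bound_drift} absorbs the bounded second-moment contributions of all four queues.
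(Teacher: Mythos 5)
Your proposal is correct and follows essentially the same route as the paper's proof: the standard Neely drift-plus-penalty argument that compares the per-slot minimizer of the bound in \eqref{eq_drift_bound} against an optimal stationary ($\omega$-only) policy achieving $C^{\text{opt}}$, then telescopes the drift over $t=1,\dots,T$ and divides by $VT$. The one caveat---shared by the paper---is that the slot-wise comparison implicitly treats OJCTA as an exact minimizer of the right-hand side, whereas the matching and rounding subroutines only guarantee a near-optimal per-slot solution, which strictly speaking would add a constant approximation term to the bound.
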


\begin{proof}
The details can be found in Appendix \ref{app_the_gap} of the supplemental material. 
\end{proof}
\vspace{-0.5em}

\vspace{-1.2em}

\section{Simulation Results and Analysis}
\label{sec_simulation}

%\par In this section, we conduct simulation experiments to validate the effectiveness of the proposed OJCTA.

\subsection{Simulation Setup}
\label{simulation_set_up}

\par We first describe the settings related to the simulation experiments, including the parameters, evaluation metrics, and comparison baselines.

\subsubsection{Parameters}

\par We consider a three-layer collaborative
MEC system MEC system, where 4 MEC servers and a cloud are deployed in an area of $500 \times 500 \ \text{m}^2$ to provide computation services for UDs. Furthermore, the UDs that are covered by each MEC server is set as 50, and the connection capacity of each MEC server is set as 30. Moreover, the UDs within the coverage area of each MEC server are distributed at distances ranging from 50 m to 500 m. Additionally, the system operates over a total timeline of 200 s, with each time slot of $\delta = 1$ s. Table~\ref{tab_simuParameter} gives the other parameters.

\subsubsection{Evaluation Metrics}

\par We evaluate the overall performance of OJCTA by adopting the following indicators. i) Average UD energy consumption, i.e., $\frac{1}{T}\sum_{t=1}^T\sum_{m=1}^M\sum_{u=1}^{U_m(t)}(E_{u}^{\text{loc}}(t)+E_{u}^{\text{off}}(t))$, which indicates the average cumulative energy consumption of UDs per unit time. ii) Average queuing delay of the edge computing queue, i.e., $\frac{1}{T} \sum_{t=1}^{T} Q_{m}^{\text{E}}(t)/\tilde{A}_{m}^{\text{E}}(t)$, which represents the average delay experienced by tasks in the edge computing queue $Q_m^{\text{E}}$. iii) Average queuing delay of the cloud offloading queue, i.e., $\frac{1}{T} \sum_{t=1}^{T} Q_{m}^{\text{C}}(t)/\tilde{A}_{m}^{\text{C}}(t)$, which indicates the average delay experienced by tasks in the cloud offloading queue $Q_m^{\text{C}}$.

\subsubsection{Comparison Approaches}
\par  We evaluate the proposed OJCTA in comparison with the following baselines. 

\begin{itemize}	\item \textit{Local computing (LC)}: All UDs process their tasks locally.
	\item \textit{Random offloading (RO)}: The task offloading decisions are made randomly, without considering energy-minimization or queuing delay constraints.
	\item \textit{Energy considered first (ECF)}: The ECF approach focuses solely on minimizing the energy consumption of UDs, disregarding  the long-term queuing delay constraints.
	\item \textit{Single slot constraint approach (SSC)}\cite{CLX2018}: Instead of following long-term queuing delay constraints, the SSC approach poses hard queuing delay constraints in each time slot.
	\item \textit{No cloud collaboration (NCC)}~\cite{YHJ2021}: The NCC approach focuses on a two-tier system which consists of MEC servers and UDs, without incorporating edge-cloud collaboration.
 \item \textit{Genetic algorithm-based joint task offloading and resource allocation (GJTORA)}~\cite{zhao2018qoe}: The task offloading is determined by using the genetic algorithm, while the communication resource allocation is decided by the proposed OJCTA.
\end{itemize}

\begin{table}[t] 
	\setlength{\abovecaptionskip}{0pt}%    
	\setlength{\belowcaptionskip}{0pt}%
	\caption{Simulation parameters}
	\label{tab_simuParameter}
	\renewcommand*{\arraystretch}{1}
	\begin{center}
		\begin{tabular}{p{.05\textwidth}|p{.22\textwidth}|p{.14\textwidth}}
			\hline
			\hline
			\textbf{Symbol}&\textbf{Meaning}&\textbf{Default value}\\
			  \hline
			      $p_u^{\text{tra}}$ &Transmit power of UD $u$ & $[0.1,0.5]$ W\\
			  \hline
				   $f_u$&Computing resources of UD $u$ & $[1,2]$ GHz \\
			  \hline 	     			 	$I_u$&Task size of UD $u$&  $[10^4,10^6]$ bits\\ 
			  \hline				  $d$&Distance between MEC servers and UDs&$[50,500]$ m\\
			  \hline				  $Z$&Computation intensity of tasks&1000 cycles/bit \cite{ZFH2018}\\ 
	\hline
			  $f_m$&Computing resources of MEC server $m$&5 GHz\\ 
        \hline				  $B_m$&Bandwidth between UD $u$ and MEC server $m$&20 MHz\\ 
	\hline
 $r_{m}^{c}$&Transmission data rate between MEC server $m$ and cloud $c$ &8 Mbps\\ 
	\hline
 $\zeta_u$& The effective switched capacitance coefficient of UD $u$
&$10^{-28}$\\ 
	\hline
 ${\sigma}^2$&Noise power&-98 dBm\\ 
	\hline
 $V$&Lyapunov penalty factor&$[5,40]$\\ 
	\hline
		\end{tabular}
	\end{center}
\end{table}

\subsection{Evaluation Results}
\label{Numerical Results}

\par In this section, we first assess the online offloading performance of OJCTA with default parameters. Subsequently, we compare the impacts of different parameters on the performance of OJCTA.

\subsubsection{Online Offloading Performance Evaluation}

\par Figs. \ref{fig_s1}(a), \ref{fig_s1}(b), and \ref{fig_s1}(c) evaluate the performances of average UD energy consumption, average queuing delay for edge computing queue, and average queuing delay for cloud offloading queue, respectively with time slots. 

\par From Fig. \ref{fig_s1}(a), we can see that OJCTA consistently outperforms the LC, RO, SCC, NCC, and GJTORA while underperforms the ECF with respect to the UD energy consumption over time. Several factors contribute to this outcome. First, the LC approach, where all tasks are processed locally, could lead to higher energy consumption due to the limited processing capability and battery capacity of UDs. Similarly, the random offloading of the RO approach does not prioritize energy consumption minimization, leading to suboptimal performance of the UD energy efficiency. Moreover, the SSC approach enforces strict queuing delay constraints in each time slot, which could cause workload backlogs in the long term. In this case, more tasks will be processed locally, leading to increased UD energy consumption. Furthermore, the NCC method, which omits cloud collaboration, restricts task offloading options to MEC servers alone, limiting its energy-saving potential. Besides, the genetic algorithm adopted by GJTORA often faces slow convergence and high computational demands, which can lead to suboptimal offloading decisions, especially in real-time scenarios. In addition, while the ECF approach naturally achieves lower energy consumption than the proposed OJCTA by prioritizing energy efficiency, this comes at the expense of significant queuing delays and instability, as evidenced in Figs. \ref{fig_s1}(b) and \ref{fig_s1}(c).

\par We can observe from Figs. \ref{fig_s1}(b) and \ref{fig_s1}(c) that the proposed OJCTA exhibits moderate and stable average queuing delay among the seven approaches. This is mainly because the proposed OJCTA can effectively balance the trade-off between the queuing delay and the UD energy consumption by leveraging the Lyapunov optimization framework. Specifically, in Fig. \ref{fig_s1}(b), it is evident that OJCTA maintains a lower queuing delay compared to the approaches such as LC and RO, which either overload local UDs or inefficiently manage task offloading. This demonstrates that the proposed OJCTA not only reduces UD energy consumption but also controls the backlog at the MEC server, ensuring smoother and more stable task processing. Additionally, in Fig. \ref{fig_s1}(c), OJCTA demonstrates competitive queuing delays for cloud offloading queue, which benefits from the edge-cloud collaboration mechanism in our approach. By offloading tasks dynamically from the MEC servers to the cloud, OJCTA avoids overloads at the edge, thus reducing the queuing delays compared to the approaches like NCC that do not leverage cloud collaboration. 

\par In conclusion, the simulation results in Fig. \ref{fig_s1} demonstrate that the proposed OJCTA can effectively reduces UD energy consumption while constraining the queuing delays and ensuring queuing stability at the edge.

\begin{figure*}[!hbt] 
	\centering
	\setlength{\abovecaptionskip}{1pt}%    
	\setlength{\belowcaptionskip}{1pt}%
	\subfigure[Average UD energy consumption]
	{
		\begin{minipage}[t]{0.31\linewidth}
			\centering
			\includegraphics[scale=0.23]{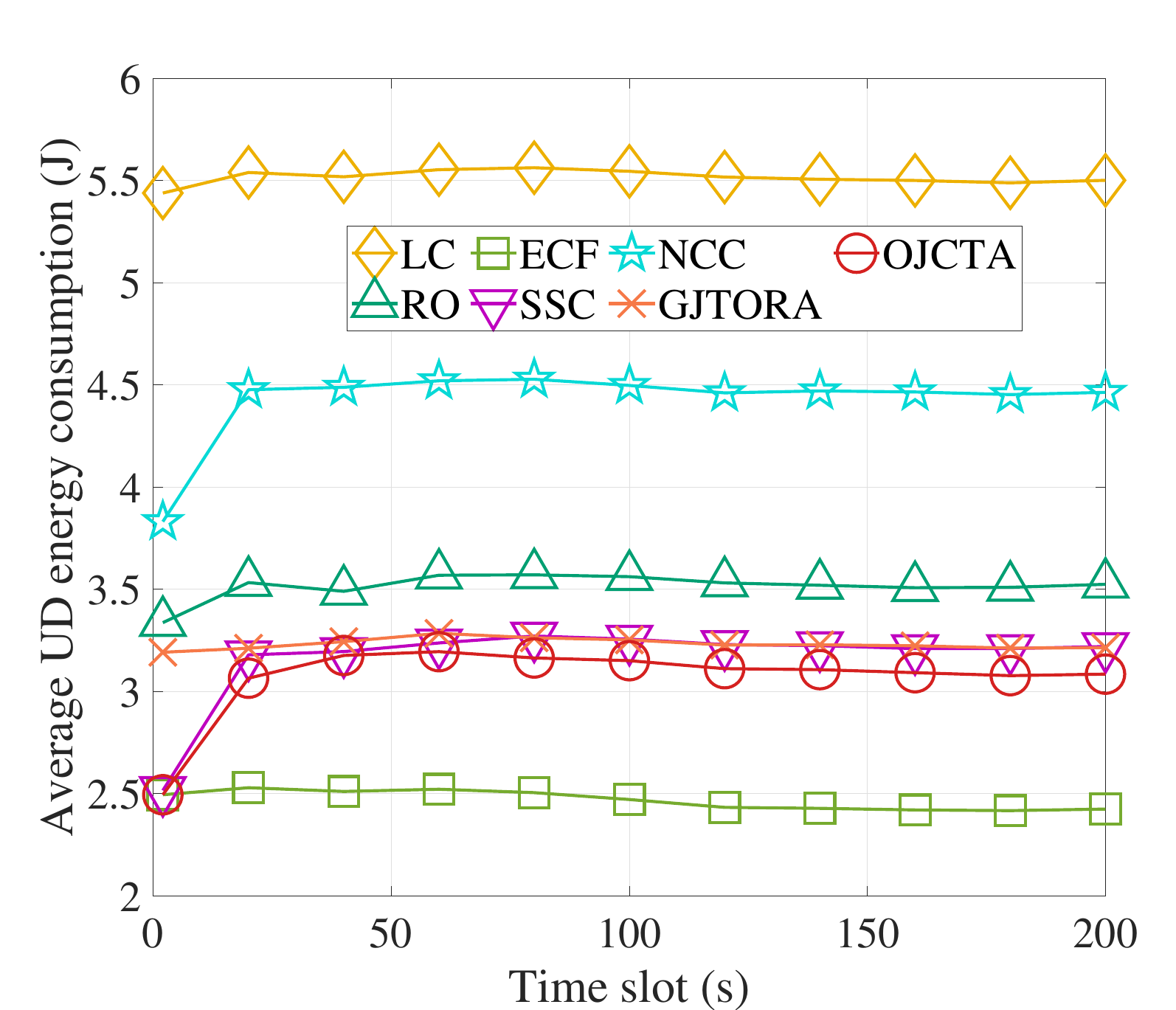}
		\end{minipage}
	}
	\subfigure[Average queuing delay of $Q^{\text{L}}$]
	{
		\begin{minipage}[t]{0.31\linewidth}
			\centering
			\includegraphics[scale=0.23]{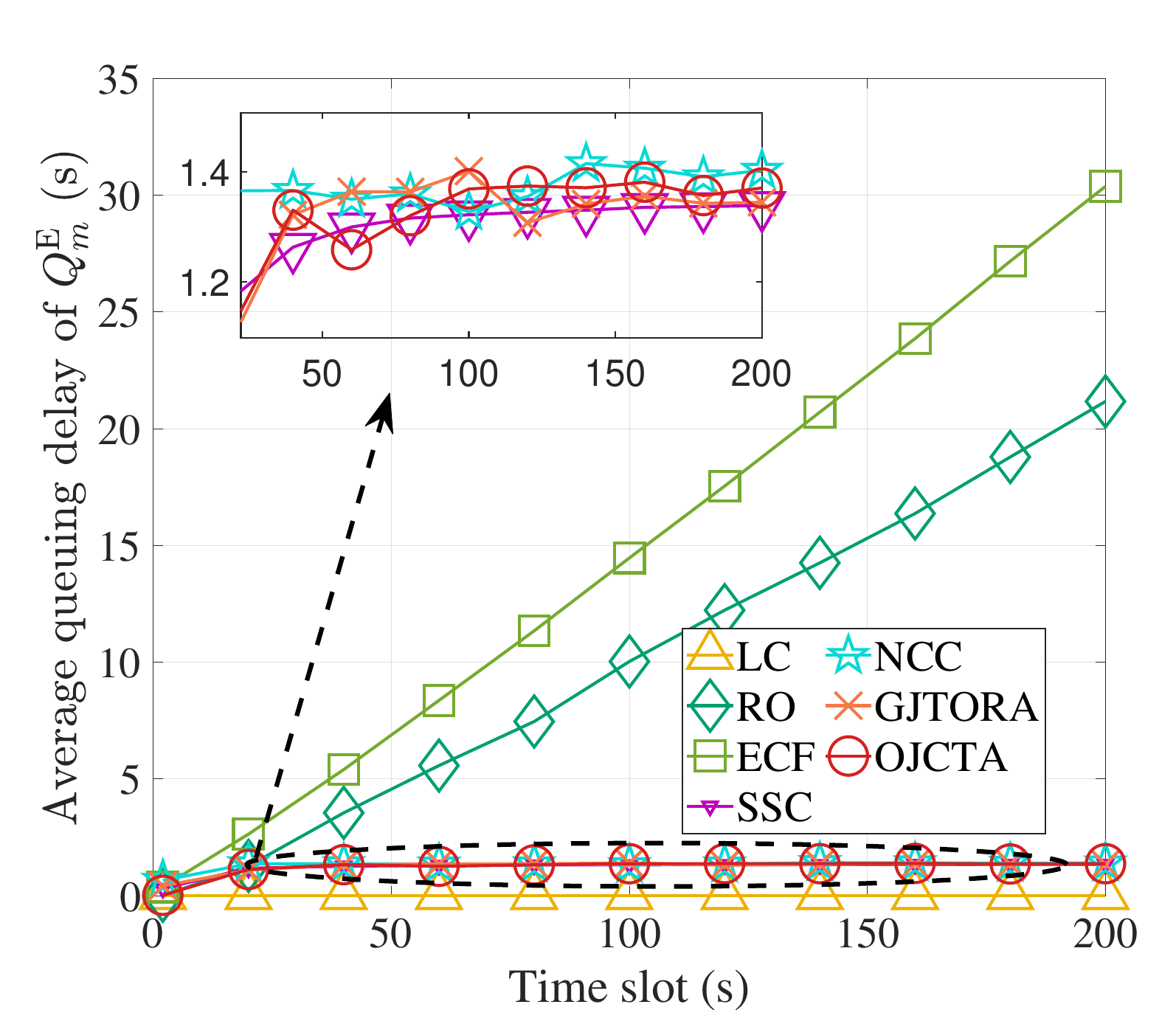}
		\end{minipage}
	}
	\subfigure[Average queuing delay of $Q^{\text{O}}$]
	{
		\begin{minipage}[t]{0.31\linewidth}
			\centering
			\includegraphics[scale=0.23]{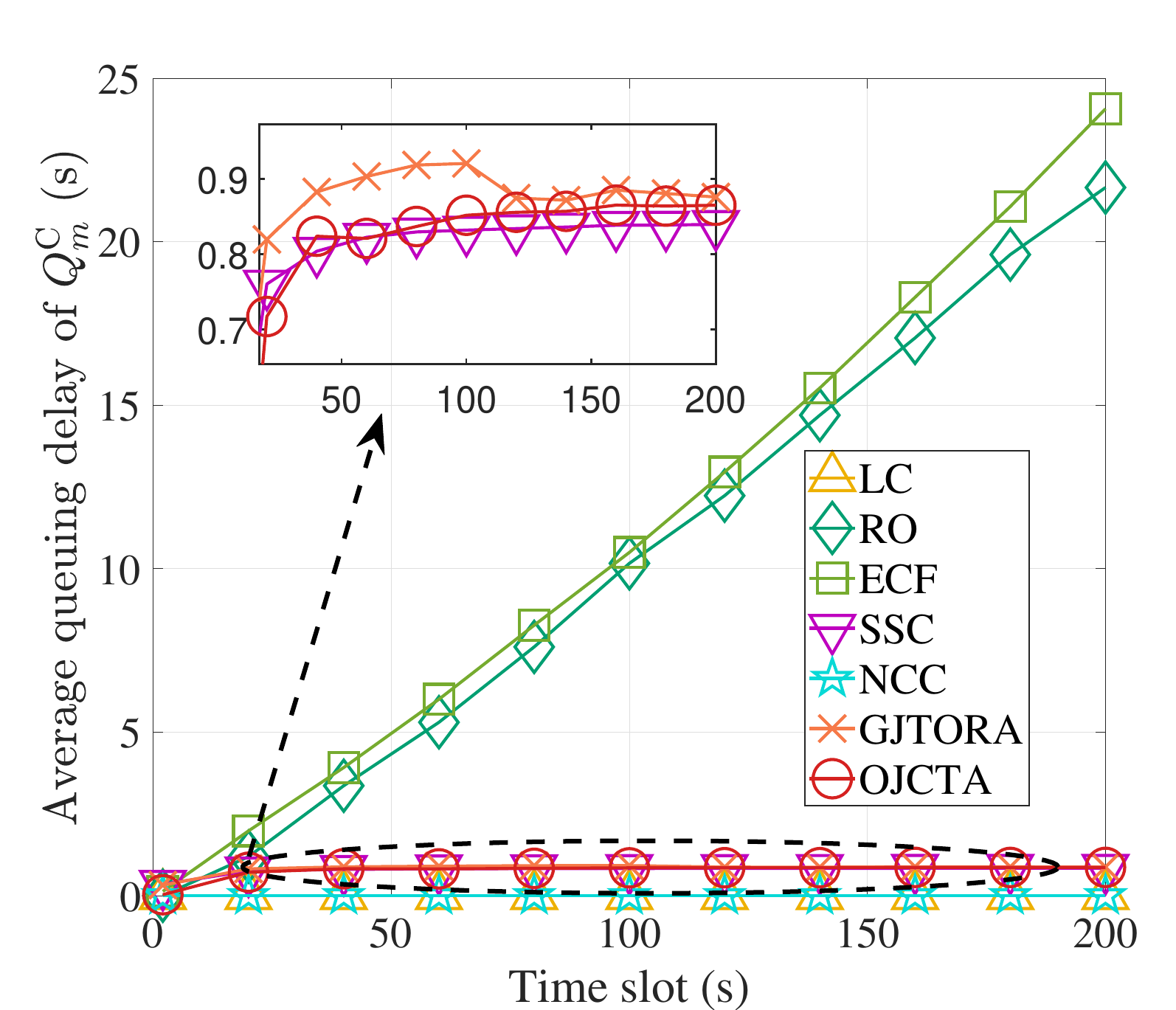}
		\end{minipage}
	}
	\caption{System performance with time slots.} 
	\label{fig_s1}
	\vspace{-1.6em}
\end{figure*}

\begin{figure*}[!hbt] 
	\centering
        \setlength{\abovecaptionskip}{1pt}%    
	\setlength{\belowcaptionskip}{1pt}% 
	\subfigure[Average UD energy consumption]
	{
            \begin{minipage}[t]{0.31\linewidth}
			\centering
			\includegraphics[scale=0.23]{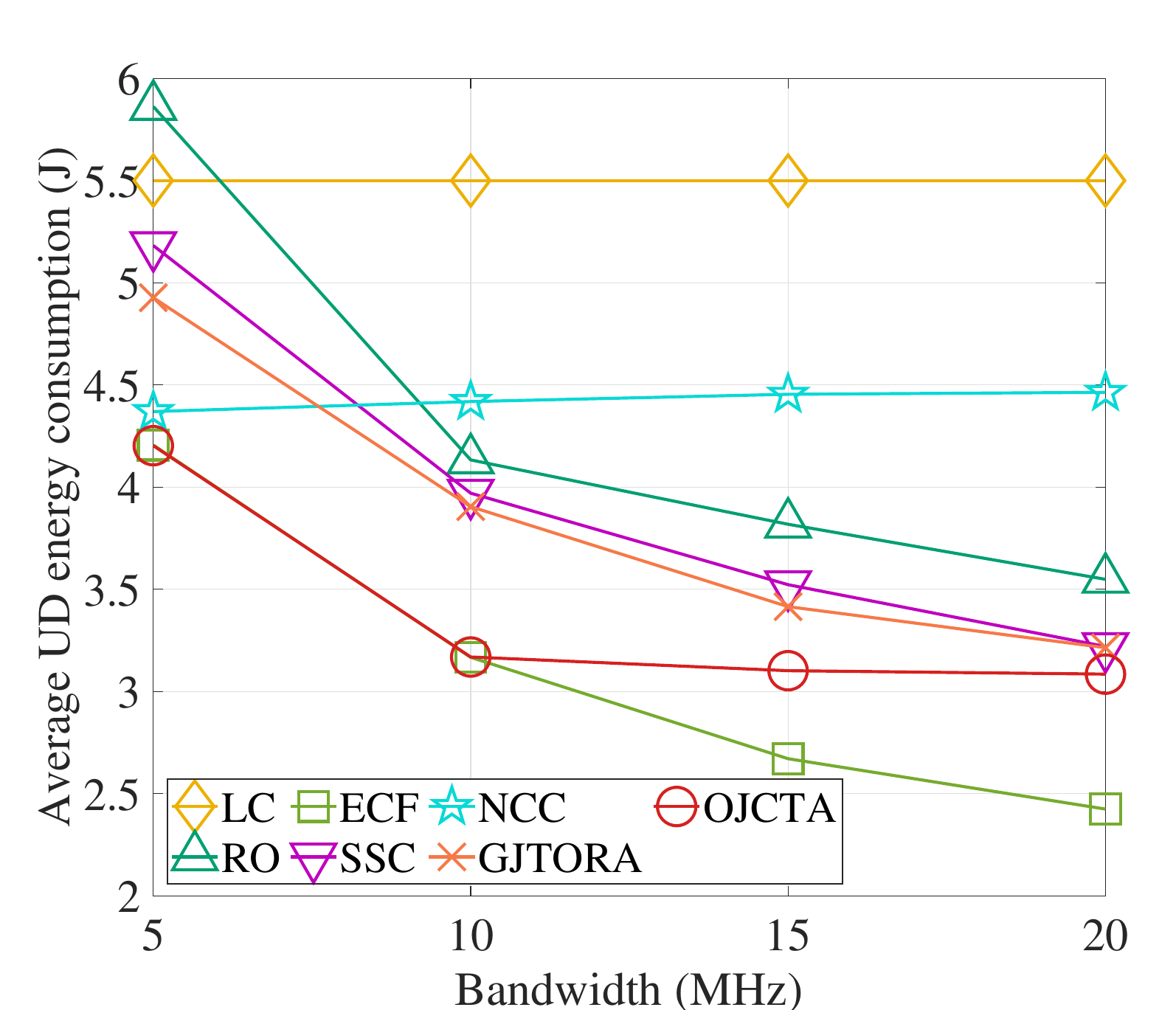}
		\end{minipage}
	}
	\subfigure[Average queuing delay of $Q^{\text{L}}$]
	{
             \begin{minipage}[t]{0.31\linewidth}
			\centering
			\includegraphics[scale=0.23]{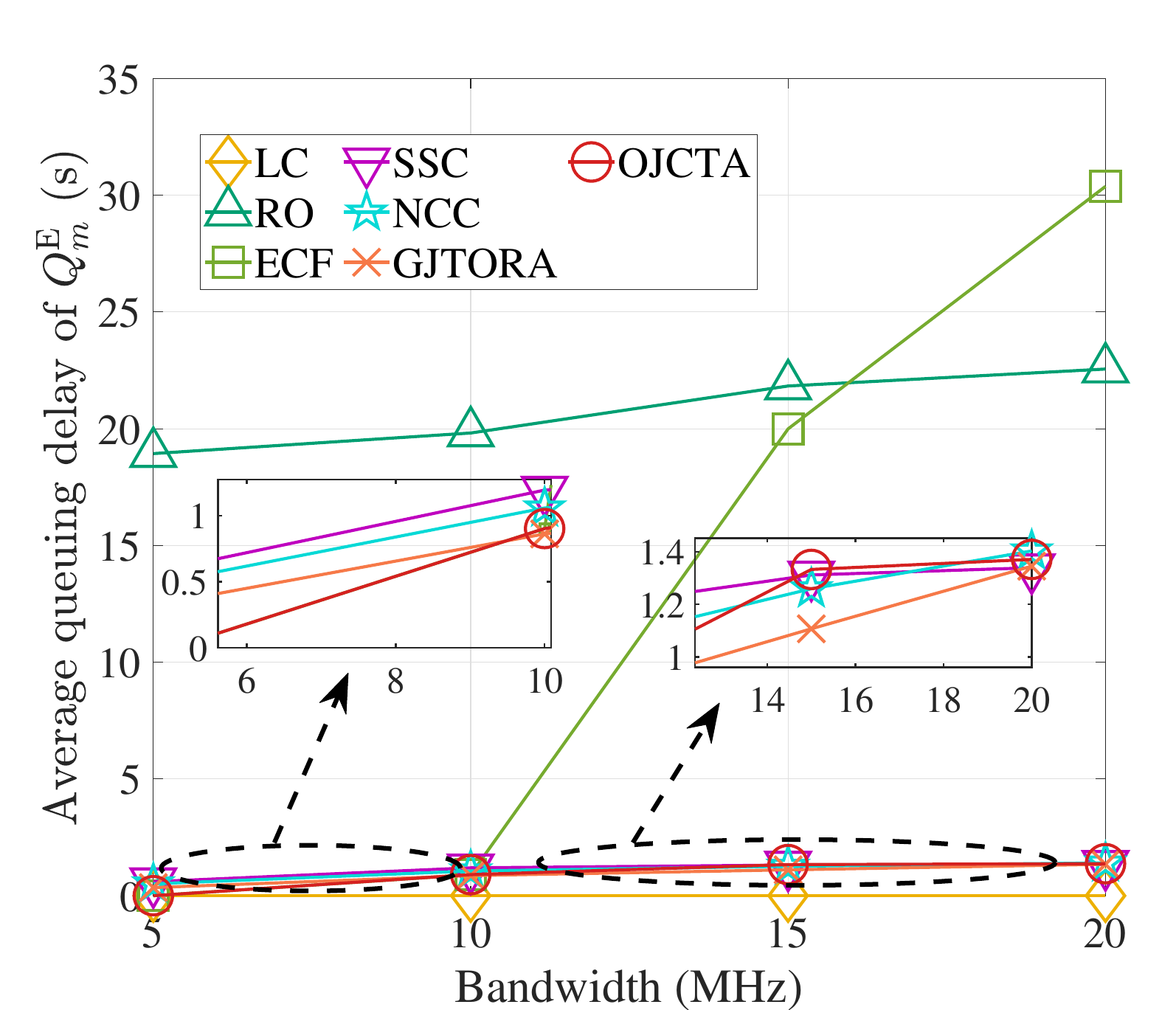}
		\end{minipage}
	}
	\subfigure[Average queuing delay of $Q^{\text{O}}$]
	{
              \begin{minipage}[t]{0.31\linewidth}
			\centering
			\includegraphics[scale=0.23]{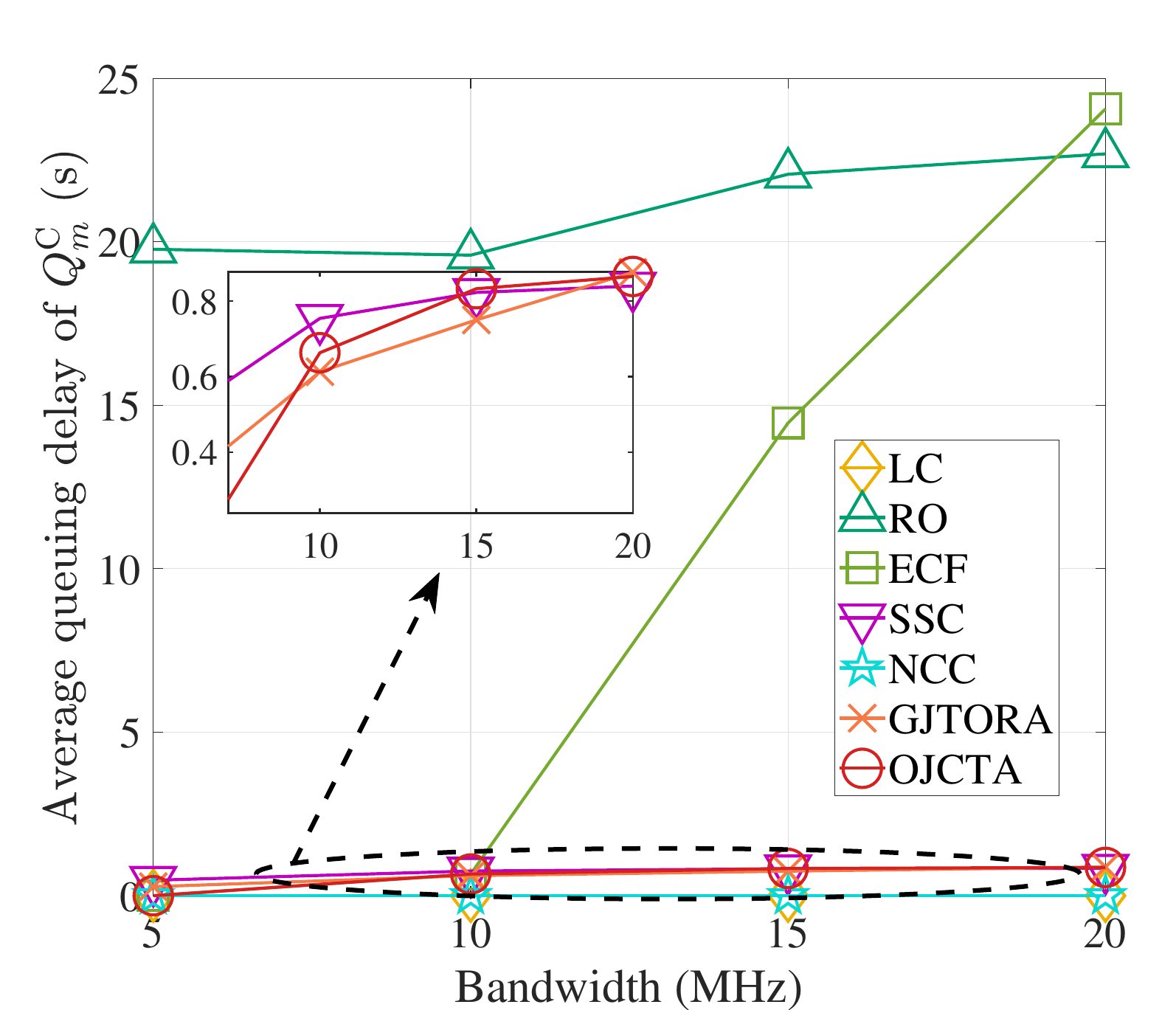}
		\end{minipage}
	}
	\centering
	\caption{System performance with bandwidth.}
	\label{fig_s3}
	\vspace{-1.6em}
\end{figure*}

\subsubsection{Impact of Parameters}
\par We compare the impacts of different system parameters on the performance of the proposed OJCTA and the benchmark approaches in this section.

\par \textbf{Impact of Bandwidth.} Figs. \ref{fig_s3}(a), \ref{fig_s3}(b) and \ref{fig_s3}(c) show the impact of bandwidth on the average energy consumption, average queuing delay for edge computing queue, and average queuing delay for cloud offloading queue, respectively. 

\par Fig. \ref{fig_s3}(a) shows that as bandwidth increases, the average UD energy consumption decreases for most approaches. This is because as the communication resources become abundant, more tasks can be offloaded to the MEC servers, thereby reducing the UD energy consumption. However, it is noteworthy that the energy consumption of LC and NCC remains nearly constant, even as bandwidth increases. The reason is that LC processes all tasks locally, and thus its energy consumption is primarily determined by the limited battery capacity and computational capabilities of UDs. Similarly, NCC restricts task offloading to only MEC servers without utilizing cloud resources, thus the energy-saving potential is constrained, leading to minimal variation in energy consumption as bandwidth increases. Moreover, we can also observed that the proposed OJCTA consistently ex superior performance compared to LC, RO, SSC, NCC, and GJTORA with respect to average energy consumption. The reason is that OJCTA can dynamically decide efficient offloading decisions, taking advantage of the available bandwidth according to the dynamic communication links. Furthermore, the ECF approach, which focuses solely on minimizing energy consumption, slightly outperforms OJCTA with respect to the UD energy consumption. However, this leads to significant queuing delay and instability since ECF does not account for queuing constraints.

\par From Figs. \ref{fig_s3}(b) and \ref{fig_s3}(c), we can observe that the queuing delays for the seven approaches generally exhibit an upward trend as bandwidth increases. This is because as the communication resources increase, more tasks could be offloaded to MEC servers, leading to heavier workloads and longer queuing delays. Moreover, compared to the benchmark approaches, the proposed OJCTA maintains moderate and relatively controlled queuing delays across different bandwidth levels. The reasons are as follows. First, the edge-cloud-collaborative architecture ensures that the queuing delays do not significantly escalate, even more tasks are offloaded with increasing bandwidth. Moreover, the Lyapunov optimization framework has the ability to guarantee the long-term queuing stability. Additionally, the close-form decision of the communication resource allocation can dynamically achieve the near-optimal result based on the available bandwidth. Besides, the two-stage alternating task offloading method effectively balances both UD energy minimization and queue stability. In contrast, approaches such as RO and ECF experience considerable queuing delays, as they fail to account for queuing delay constraints effectively.

\par In conclusion, the simulation results indicate that the proposed OJCTA successfully manages the trade-off between queuing delay and energy consumption. It achieves a balanced performance by using a dynamic task offloading strategy that adapts to varying bandwidth conditions while maintaining queuing stability.

\par \textbf{Impact of MEC Server Connection Capacity.} Figs. \ref{fig_s4}(a), \ref{fig_s4}(b), and \ref{fig_s4}(c) depict the impact of the MEC server connection capacity on the average UD energy consumption, average queuing delay for edge computing queue, and average queuing delay, respectively. 

\par  Fig. \ref{fig_s4}(a) indicates that as the MEC connection capacity of MEC servers increases, the average UD energy consumption decreases across all approaches. This reduction occurs because more tasks are offloaded to the MEC server, reducing the UD energy consumption required for local processing. Moreover, the proposed OJCTA outperforms the LC, RO, ECF, SSC, NCC, and GJTORA, consistently achieving lower energy consumption. This is because the task offloading and resource allocation in the Lyapunov optimization framework can dynamically adjusts decisions based on the connection condition. Notably, the UD energy consumption of ECF remains lower than the other approaches, as its primary goal is to minimize energy consumption. However, as we can see in Figs. \ref{fig_s4}(b) and \ref{fig_s4}(c), this comes at the cost of higher queuing delays.

\par In Figs. \ref{fig_s4}(b) and \ref{fig_s4}(c), the average queuing delays of RO and ECF for both edge computing queues and the cloud offloading queues show an initial rise followed by a significant decline. This can be attributed to the absence of long-term queuing delay constraints in these approaches. Therefore, as the number of connected UDs increases, the system initially experiences congestion due to the increased task offloading, resulting in longer queuing delays. However, once a certain threshold is reached, fewer tasks are offloaded, leading to a reduction in queuing delays. In contrast, the proposed OJCTA maintains stable and relatively low queuing delays as the connection capacity increases. This is because as more UDs are connected, OJCTA can adjust the decisions of resource allocation and task offloading, avoiding excessive queuing delays from escalating while meeting the long-term queuing constraint. Besides, the edge-cloud collaboration of OJCTA enables balanced task distribution, ensuring that neither the edge nor the cloud resources become overwhelmed, thus leading to stable queuing performance for both the cloud offloading queues and the edge computing queues.

\par In summary, the simulation results highlight the effectiveness of OJCTA in managing both energy consumption and queuing delays as the MEC server connection capacity increases. Furthermore, while the benchmark approaches suffer from high queuing delays due to congestion or task rejection, the dynamic and adaptive strategy of the proposed OJCTA ensures efficient resource utilization and task offloading, leading to lower energy consumption and stable queuing performance.

\begin{figure*}[!hbt] 
	\centering
        \setlength{\abovecaptionskip}{2pt}%    
	\setlength{\belowcaptionskip}{2pt}%
	\subfigure[Average UD energy consumption]
	{
            \begin{minipage}[t]{0.31\linewidth}
			\centering
			\includegraphics[scale=0.23]{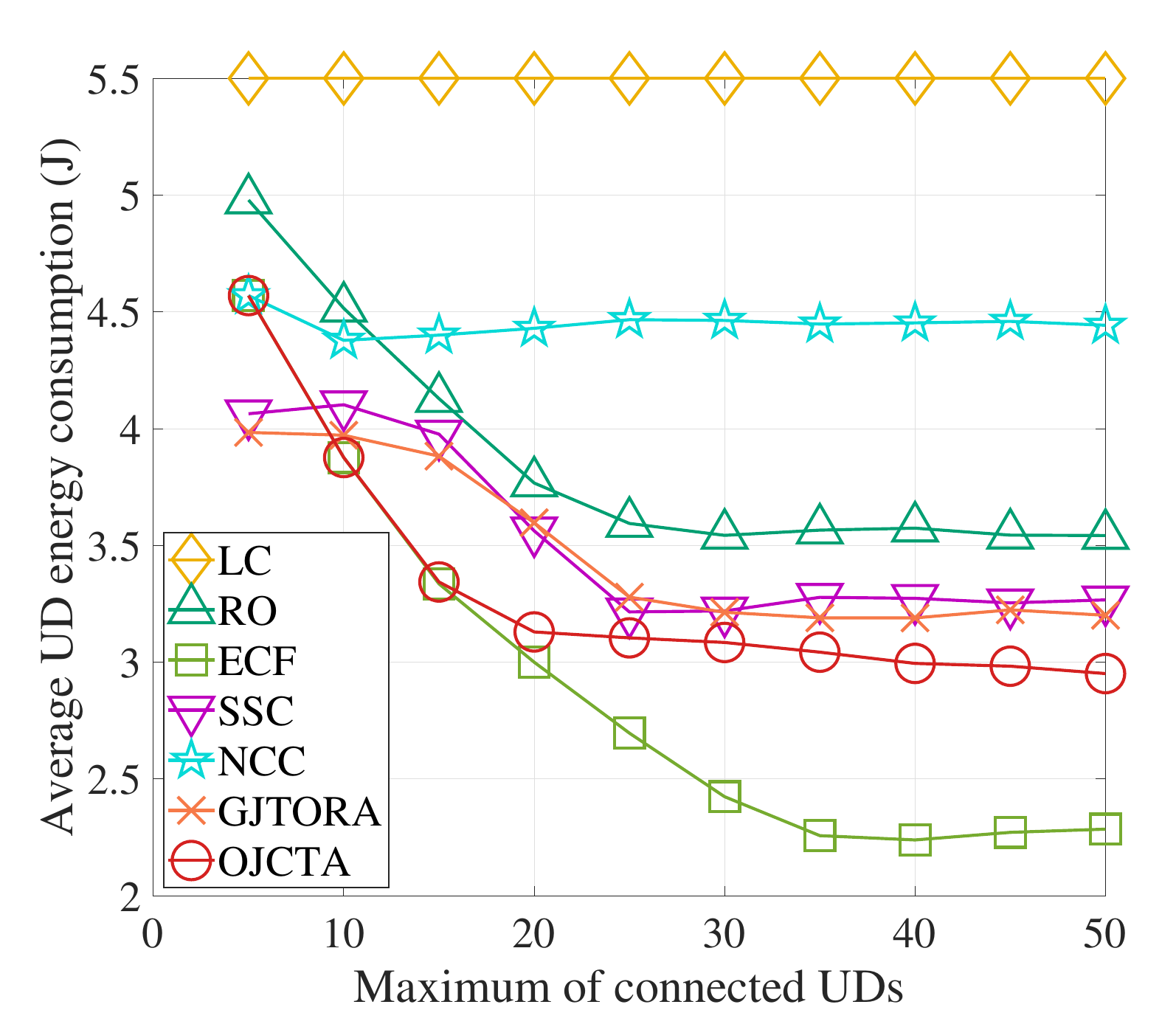}
		\end{minipage}
	}
	\subfigure[Average queuing delay of $Q^{\text{L}}$]
	{
             \begin{minipage}[t]{0.31\linewidth}
			\centering
			\includegraphics[scale=0.23]{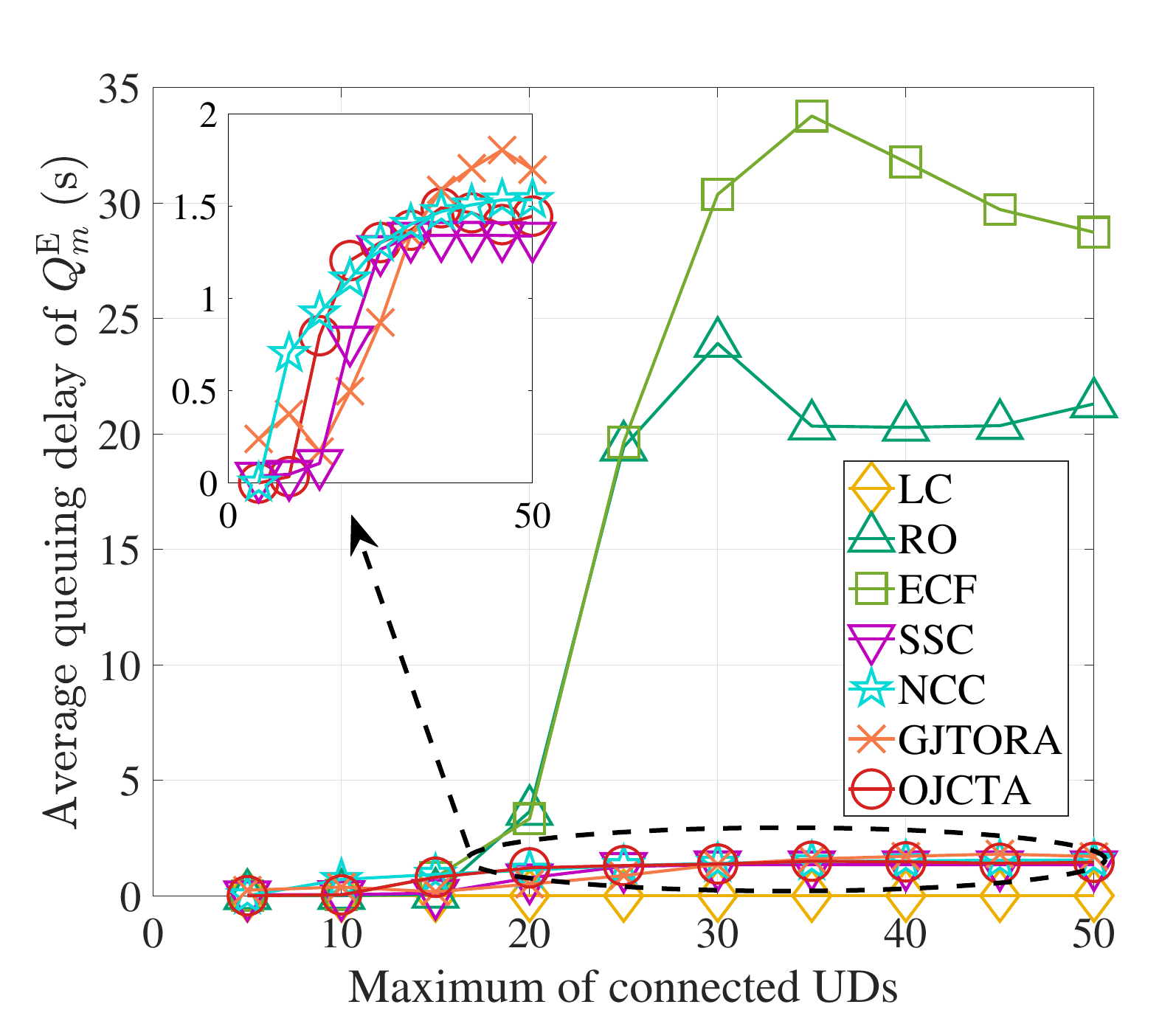}
		\end{minipage}
	}
	\subfigure[Average queuing delay of $Q^{\text{O}}$]
	{
             \begin{minipage}[t]{0.31\linewidth}
			\centering
			\includegraphics[scale=0.23]{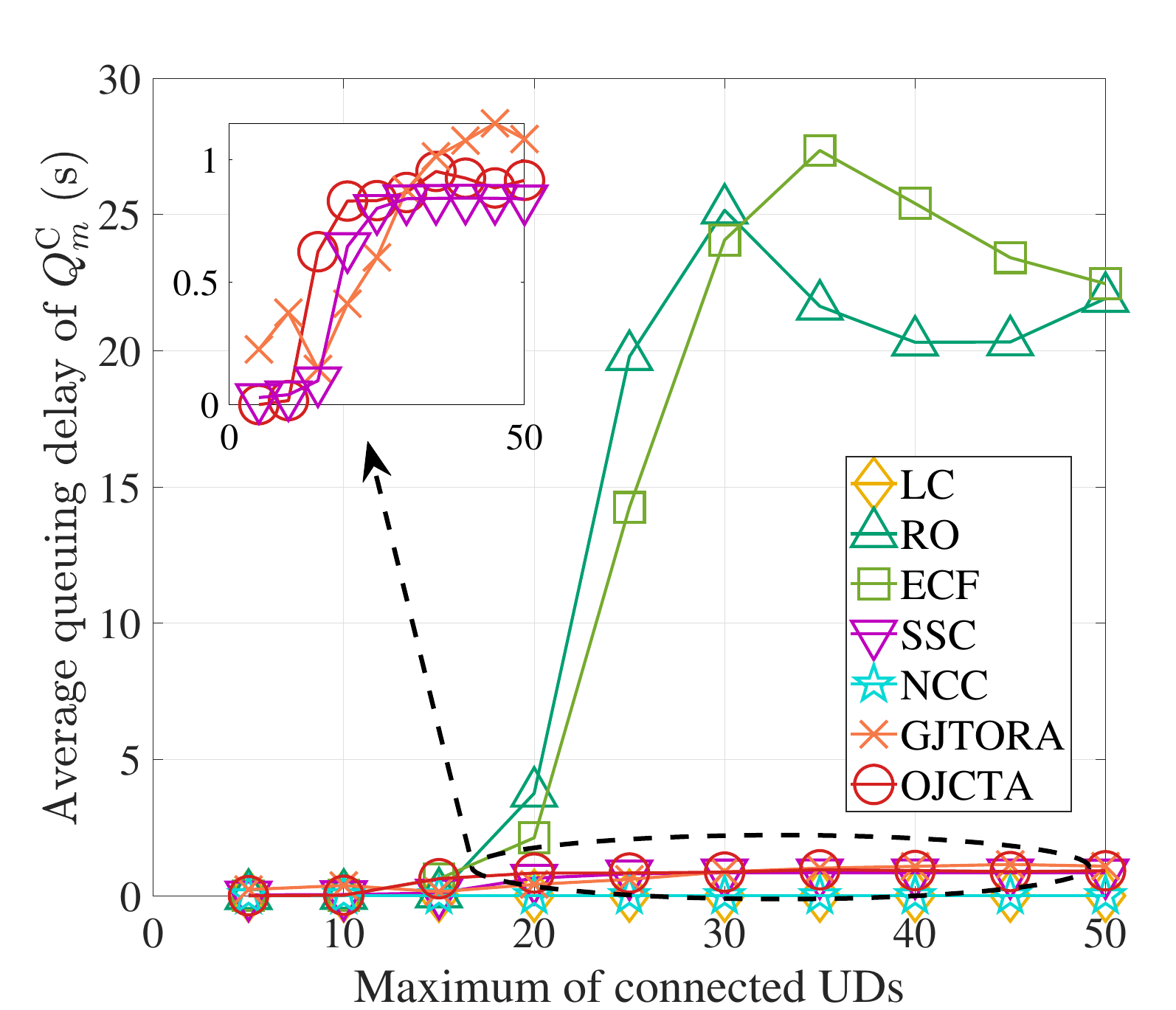}
		\end{minipage}
	}
	\centering
	\caption{System performance with the MEC server connection capacity.}
	\label{fig_s4}
	\vspace{-1em}
\end{figure*}

  \begin{figure*}[!hbt] 
	\centering
        \setlength{\abovecaptionskip}{2pt}%    
	\setlength{\belowcaptionskip}{2pt}% 
	\subfigure[Average UD energy consumption]
	{
              \begin{minipage}[t]{0.31\linewidth}
			\centering
			\includegraphics[scale=0.23]{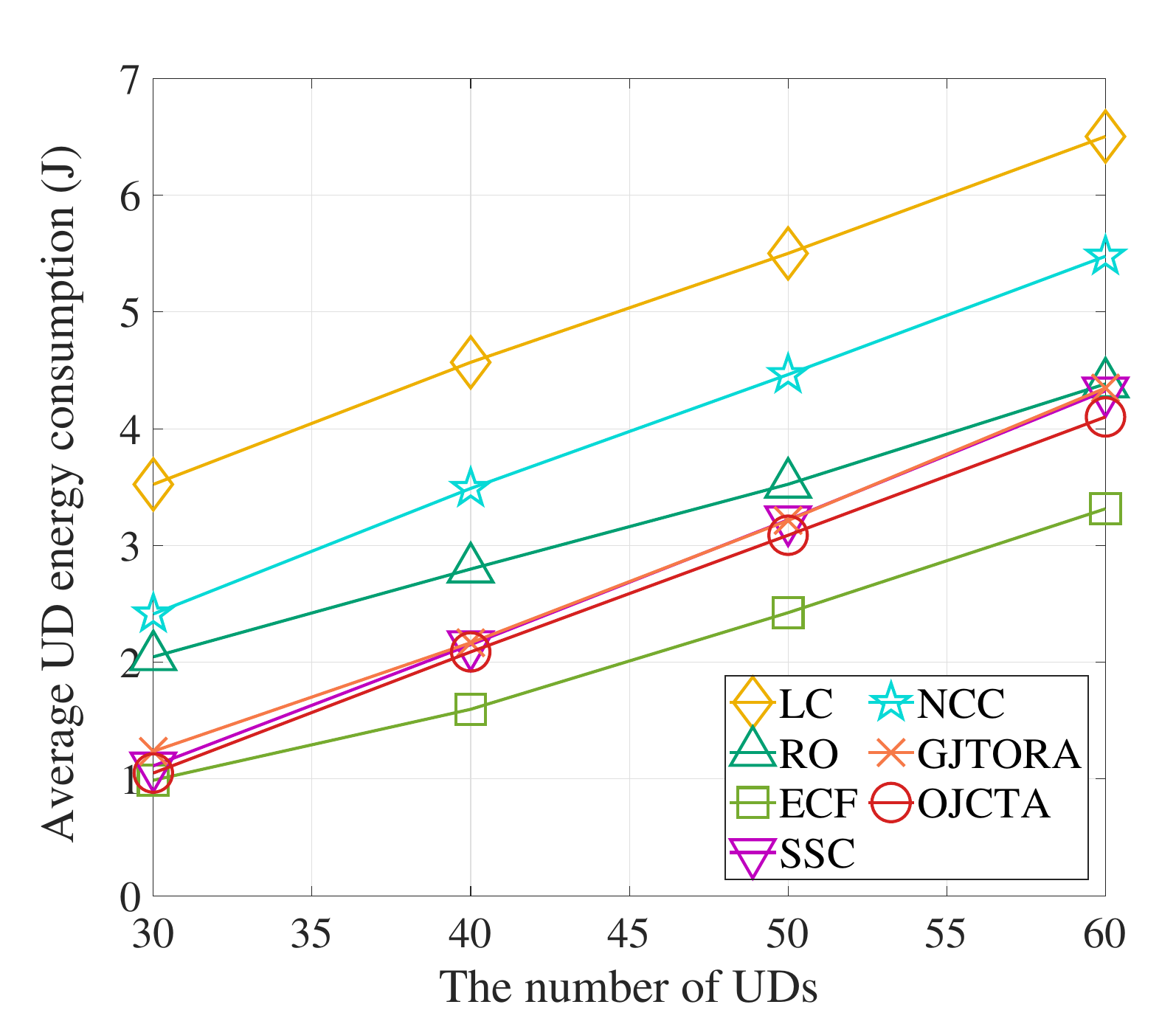}
		\end{minipage}
	}
    \subfigure[Average queuing delay of $Q^{\text{L}}$]
    {
        \begin{minipage}[t]{0.31\linewidth}
        \centering
        \includegraphics[scale=0.23]{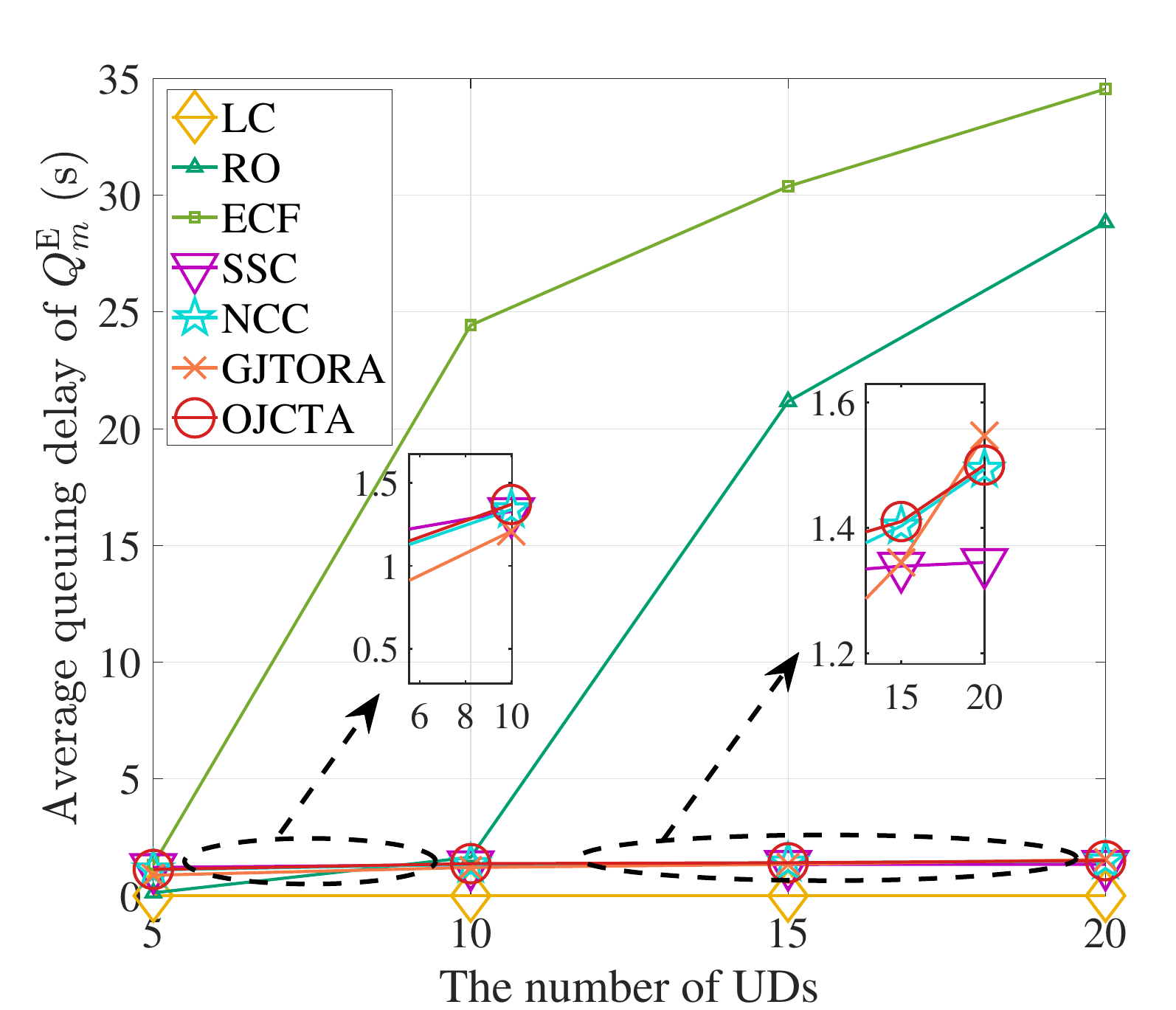}
       \end{minipage}
    }
    \subfigure[Average queuing delay of $Q^{\text{O}}$]
    {
        \begin{minipage}[t]{0.31\linewidth}
        \centering
        \includegraphics[scale=0.23]{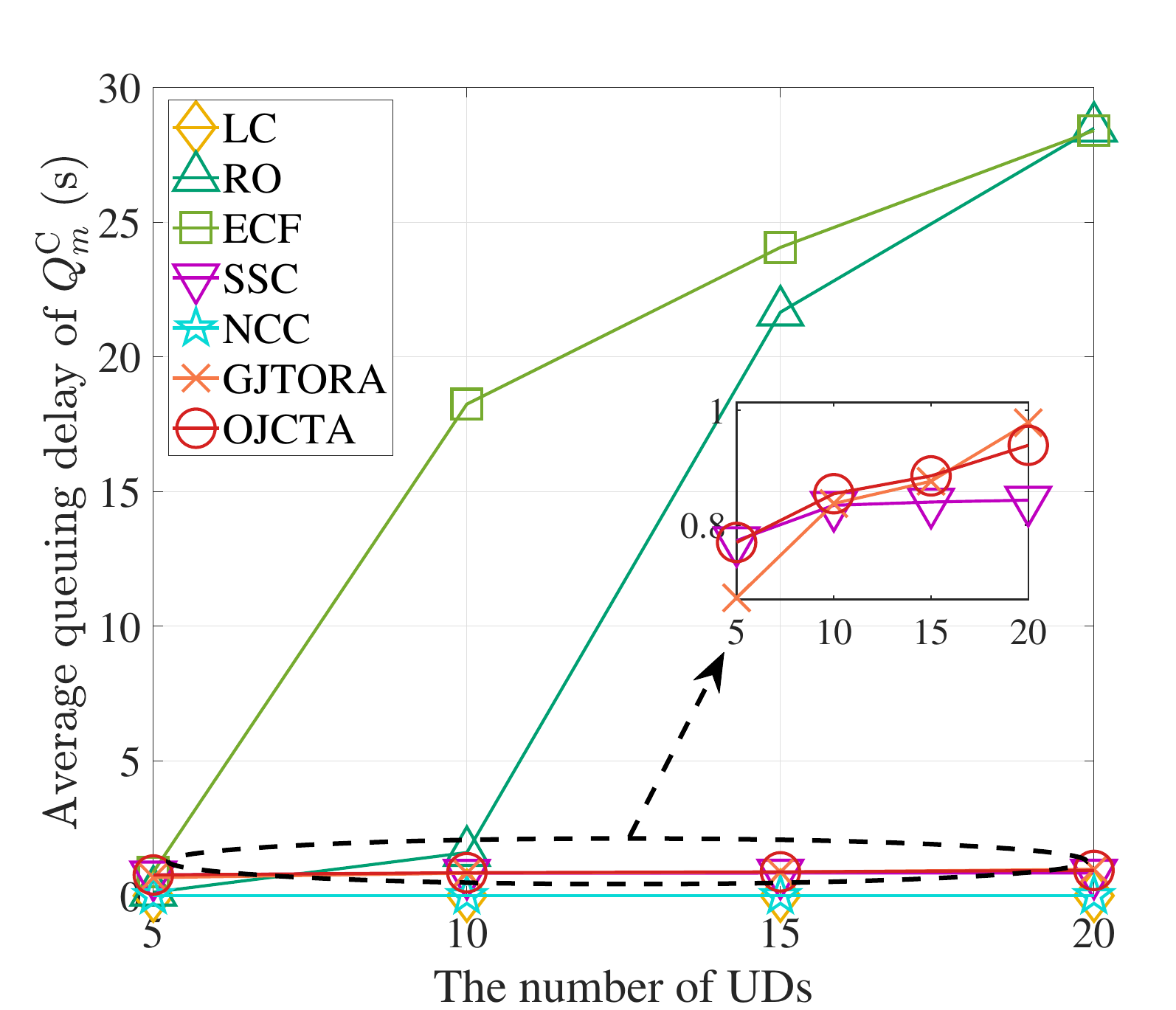}
       \end{minipage}
    }
    \caption{System performance with the number of UDs.}
	\label{fig_s5}
 \vspace{-1em}
\end{figure*}

\par \textbf{Impact of User Number.} Figs. \ref{fig_s5}(a), \ref{fig_s5}(b) and \ref{fig_s5}(c) show the impact of UD number on the average UD energy consumption, average queuing delay for edge computing queue, and average queuing delay, respectively.

\par Fig. \ref{fig_s5}(a) reveals that the average UD energy consumption rises across all approaches as the number of UDs grows. This is expected since a greater number of UDs results in more tasks to be processed. Moreover, with an increasing number of MDs, the OJCTA obtains lower UD energy consumption compared to LC, RO, SSC, NCC, and GJTORA, respectively reducing the average UD energy consumption by 36.94\%, 6.45\%, 5.17\%, 25.13\%, and 5.69\% in the relative dense scenario ($U\geq 50$). This can be attributed to the following reasons. First, the edge-cloud collaborative MEC architecture makes the  OJCTA more adaptive to the dense scenarios since the workload at the edge can be flexibly relived. Moreover, the proposed OJCTA can effectively minimize UD energy consumption by making online task offloading and resource allocation decisions based on the situated scenario. Besides, the bilateral matching-based task offloading method can effectively associate the different UDs with diverse processing requirements to the suitable MEC servers with different computing capabilities. Note that although ECF achieves the lowest UD energy consumption since it aims to minimize the energy consumption, it has the inferior performance in queuing delays.

\par Figs. \ref{fig_s5}(b) and \ref{fig_s5}(c) show that the average queuing delays of the seven approaches also increase as the number of UDs grows. It is expected that a larger number of UDs introduces higher task offloading demands, which can lead to congestion at the MEC servers. Specifically, it can be observed that the approaches such as RO and ECF experience significant queuing delays due to their lack of long-term queuing delay constraints. In contrast, as the number of UDs increases, OJCTA maintains relatively stable and lower queuing delays. The reason is that the proposed OJCTA can effectively minimize the UD energy consumption under the constraints of long-term queuing delay by performing real-time decisions based on varying UD densities. Additionally, as we mentioned earlier, the edge-cloud collaboration of the OJCTA helps distribute the workload efficiently, OJCTA leverages edge-cloud collaboration to efficiently distribute the workload, preventing bottlenecks and maintaining queuing stability as the number of UDs grows.

\par In conclusion, the proposed OJCTA exhibit superior scalability to adapt to the environment with varying densities. Moreover, OJCTA can efficiently accomplish the tasks with lower UD energy consumption while maintaining lower queuing delays and stable queuing performance.

%
% Conclusion
%

\section{Conclusion}
\label{sec_conclusion}

In this work, we have studied an online collaborative communication resource allocation and task offloading for MEC system. First, we have designed an edge-cloud collaborative MEC architecture, where the MEC servers and the cloud server collaboratively provide offloading services for UDs. Moreover, we have formulated the EEDAOP to minimize the UD energy consumption under the constraints of task deadlines and long-term queuing delays. Furthermore, we have proposed an OJCTA to solve the formulated optimization problem. Specifically, the future-dependent EEDAOP has been first transformed into an online problem. Then, a two-stage alternating optimization method was presented for online task offloading and resource allocation. The simulation results have indicated that the proposed OJCTA achieves superior performance with respect to UD energy consumption, while maintaining moderate and stable queuing delays for both cloud computing queue and edge computing queue. This indicates that OJCTA can effectively reduce UD energy consumption while constraining the queuing delays and ensuring queuing stability. Moreover, the proposed OJCTA can effectively adapt to varying bandwidth conditions and MEC connection capacities by dynamically adjusting the decisions of task offloading and resource allocation. Besides, OJCTA have demonstrated superior scalability and stable queuing performance as the number of UDs increases.

\ifCLASSOPTIONcaptionsoff
\newpage
\fi

\bibliographystyle{IEEEtran}
\bibliography{references.bib}

% Generated by IEEEtran.bst, version: 1.14 (2015/08/26)
\begin{thebibliography}{1}
\providecommand{\url}[1]{#1}
\csname url@samestyle\endcsname
\providecommand{\newblock}{\relax}
\providecommand{\bibinfo}[2]{#2}
\providecommand{\BIBentrySTDinterwordspacing}{\spaceskip=0pt\relax}
\providecommand{\BIBentryALTinterwordstretchfactor}{4}
\providecommand{\BIBentryALTinterwordspacing}{\spaceskip=\fontdimen2\font plus
\BIBentryALTinterwordstretchfactor\fontdimen3\font minus \fontdimen4\font\relax}
\providecommand{\BIBforeignlanguage}[2]{{%
\expandafter\ifx\csname l@#1\endcsname\relax
\typeout{** WARNING: IEEEtran.bst: No hyphenation pattern has been}%
\typeout{** loaded for the language `#1'. Using the pattern for}%
\typeout{** the default language instead.}%
\else
\language=\csname l@#1\endcsname
\fi
#2}}
\providecommand{\BIBdecl}{\relax}
\BIBdecl

\bibitem{LiuFZ2023}
F.~Liu, J.~Huang, and X.~Wang, ``Joint task offloading and resource allocation for device-edge-cloud collaboration with subtask dependencies,'' \emph{{IEEE} Trans. Cloud Comput.}, vol.~11, no.~3, pp. 3027--3039, Nov. 2023.

\bibitem{2010Neely}
M.~J. Neely, \emph{Stochastic Network Optimization with Application to Communication and Queueing Systems}, ser. Synthesis Lect. Commun. Netw.\hskip 1em plus 0.5em minus 0.4em\relax Morgan {\&} Claypool Publishers, Jan. 2010.

\end{thebibliography}


% Generated by IEEEtran.bst, version: 1.14 (2015/08/26)
\begin{thebibliography}{10}
\providecommand{\url}[1]{#1}
\csname url@samestyle\endcsname
\providecommand{\newblock}{\relax}
\providecommand{\bibinfo}[2]{#2}
\providecommand{\BIBentrySTDinterwordspacing}{\spaceskip=0pt\relax}
\providecommand{\BIBentryALTinterwordstretchfactor}{4}
\providecommand{\BIBentryALTinterwordspacing}{\spaceskip=\fontdimen2\font plus
\BIBentryALTinterwordstretchfactor\fontdimen3\font minus \fontdimen4\font\relax}
\providecommand{\BIBforeignlanguage}[2]{{%
\expandafter\ifx\csname l@#1\endcsname\relax
\typeout{** WARNING: IEEEtran.bst: No hyphenation pattern has been}%
\typeout{** loaded for the language `#1'. Using the pattern for}%
\typeout{** the default language instead.}%
\else
\language=\csname l@#1\endcsname
\fi
#2}}
\providecommand{\BIBdecl}{\relax}
\BIBdecl

\bibitem{TaoM2021}
M.~Tao, K.~Ota, M.~Dong, and H.~Yuan, ``Stackelberg game-based pricing and offloading in mobile edge computing,'' \emph{{IEEE} Wirel. Commun. Lett.}, vol.~11, no.~5, pp. 883--887, Dec. 2021.

\bibitem{xiao2022multi}
Z.~Xiao, J.~Shu, H.~Jiang, J.~C. Lui, G.~Min, J.~Liu, and S.~Dustdar, ``Multi-objective parallel task offloading and content caching in d2d-aided mec networks,'' \emph{{IEEE} Trans. Mob. Comput.}, vol.~22, no.~11, pp. 6599--6615, Jan. 2023.

\bibitem{WJY2023}
J.~Wang, Y.~Wang, P.~Cheng, K.~Yu, and W.~Xiang, ``{DDPG}-based joint resource management for latency minimization in {NOMA-MEC} networks,'' \emph{{IEEE} Commun. Lett.}, Jul. 2023.

\bibitem{NA2020}
A.~Ndikumana, N.~H. Tran, T.~M. Ho, Z.~Han, W.~Saad, D.~Niyato, and C.~S. Hong, ``Joint communication, computation, caching, and control in big data multi-access edge computing,'' \emph{{IEEE} Trans. Mob. Comput.}, vol.~19, no.~6, pp. 1359--1374, Mar. 2020.

\bibitem{YCY2021}
C.~Yi, J.~Cai, T.~Zhang, K.~Zhu, B.~Chen, and Q.~Wu, ``Workload re-allocation for edge computing with server collaboration: {A} cooperative queueing game approach,'' \emph{{IEEE} Trans. Mob. Comput.}, vol.~22, no.~5, pp. 3095--3111, Nov. 2023.

\bibitem{Chen2024Amulti}
C.~Chen, Y.~Zeng, H.~Li, Y.~Liu, and S.~Wan, ``A multihop task offloading decision model in {MEC}-enabled internet of vehicles,'' \emph{{IEEE} Internet Things J.}, vol.~10, no.~4, pp. 3215--3230, 2023.

\bibitem{Bolourian2023Energy}
M.~Bolourian and H.~Shah-Mansouri, ``Energy-efficient task offloading for three-tier wireless-powered mobile-edge computing,'' \emph{{IEEE} Internet Things J.}, vol.~10, no.~12, pp. 10\,400--10\,412, Jan. 2023.

\bibitem{Gao2023Game}
A.~Gao, S.~Zhang, Y.~Hu, W.~Liang, and S.~X. Ng, ``Game-combined multi-agent {DRL} for tasks offloading in wireless powered {MEC} networks,'' \emph{{IEEE} Trans. Veh. Technol.}, vol.~72, no.~7, pp. 9131--9144, Jul. 2023.

\bibitem{hoang2024joint}
T.~H. Hoang, C.~T. Nguyen, T.~N. Do, and G.~Kaddoum, ``Joint task offloading and radio resource management in stochastic {MEC} systems,'' \emph{{IEEE} Trans. Commun.}, Jan. 2024.

\bibitem{Diamanti2024Delay}
M.~Diamanti, C.~Pelekis, E.~E. Tsiropoulou, and S.~Papavassiliou, ``Delay minimization for rate-splitting multiple access-based multi-server {MEC} offloading,'' \emph{{IEEE/ACM} Trans. Netw.}, vol.~32, no.~2, pp. 1035--1047, Feb. 2024.

\bibitem{Mei2024Through}
J.~Mei, L.~Dai, Z.~Tong, X.~Deng, and K.~Li, ``Throughput-aware dynamic task offloading under resource constant for {MEC} with energy harvesting devices,'' \emph{{IEEE} Trans. Netw. Serv. Manag.}, vol.~20, no.~3, pp. 3460--3473, Apr. 2023.

\bibitem{liu2023joint}
Y.~Liu, Y.~Mao, Z.~Liu, F.~Ye, and Y.~Yang, ``Joint task offloading and resource allocation in heterogeneous edge environments,'' \emph{{IEEE} Trans. Mob. Comput.}, vol.~23, no.~6, pp. 7318--7334, Nov. 2024.

\bibitem{jiang2023energy}
J.~Jiang, J.~Guo, M.~Khan, Y.~Cui, and W.~Lin, ``Energy-saving service offloading for the internet of medical things using deep reinforcement learning,'' \emph{{ACM} Trans. Sens. Netw.}, vol.~19, no.~3, pp. 1--20, Jan. 2023.

\bibitem{mei2023energy}
J.~Mei, Z.~Tong, K.~Li, L.~Zhang, and K.~Li, ``Energy-efficient heuristic computation offloading with delay constraints in mobile edge computing,'' \emph{{IEEE} Trans. Serv. Comput.}, vol.~16, no.~6, pp. 4404--4417, Oct. 2023.

\bibitem{Fan2023game}
W.~Fan, M.~Hua, Y.~Zhang, Y.~Su, X.~Li, B.~Tang, F.~Wu, and Y.~Liu, ``Game-based task offloading and resource allocation for vehicular edge computing with edge-edge cooperation,'' \emph{{IEEE} Trans. Veh. Technol.}, vol.~72, no.~6, pp. 7857--7870, Jul. 2023.

\bibitem{chu2023efficient}
S.~Chu, C.~Gao, M.~Xu, K.~Ye, Z.~Xiao, and C.~Xu, ``Efficient multi-task computation offloading game for mobile edge computing,'' \emph{{IEEE} Trans. Serv. Comput.}, Nov. 2023.

\bibitem{Fang2023DRL}
C.~Fang, Z.~Hu, X.~Meng, S.~Tu, Z.~Wang, D.~Zeng, W.~Ni, S.~Guo, and Z.~Han, ``{DRL}-driven joint task offloading and resource allocation for energy-efficient content delivery in cloud-edge cooperation networks,'' \emph{{IEEE} Trans. Veh. Technol.}, vol.~72, no.~12, pp. 16\,195--16\,207, Jul. 2023.

\bibitem{Yang2024Coo}
J.~Yang, Q.~Yuan, S.~Chen, H.~He, X.~Jiang, and X.~Tan, ``Cooperative task offloading for mobile edge computing based on multi-agent deep reinforcement learning,'' \emph{{IEEE} Trans. Netw. Serv. Manag.}, vol.~20, no.~3, pp. 3205--3219, Jun. 2023.

\bibitem{dai2022task}
X.~Dai, Z.~Xiao, H.~Jiang, M.~Alazab, J.~C. Lui, S.~Dustdar, and J.~Liu, ``Task co-offloading for {D2D}-assisted mobile edge computing in industrial internet of things,'' \emph{{IEEE} Trans Ind. Informat.}, vol.~19, no.~1, pp. 480--490, Mar. 2022.

\bibitem{liang1999predictive}
B.~Liang and Z.~J. Haas, ``Predictive distance-based mobility management for {PCS} networks,'' in \emph{Proc. of {IEEE} {INFOCOM}}, Mar. 1999, pp. 1377--1384.

\bibitem{Yang2022}
Z.~Yang, S.~Bi, and Y.~A. Zhang, ``Online trajectory and resource optimization for stochastic {UAV}-enabled {MEC} systems,'' \emph{{IEEE} Trans. Wirel. Commun.}, vol.~21, no.~7, pp. 5629--5643, Aug. 2022.

\bibitem{3GPPTR389012020}
G.~T.~. V16.1.0, ``Study on channel model for frequencies from 0.5 to 100 {GH}z ({Release 16}),'' Jul. 2020.

\bibitem{yu2020efficient}
J.~Yu, K.~Yu, D.~Yu, W.~Lv, X.~Cheng, H.~Chen, and W.~Cheng, ``Efficient link scheduling in wireless networks under rayleigh-fading and multiuser interference,'' \emph{{IEEE} Trans. Wireless Commun.}, vol.~19, no.~8, pp. 5621--5634, Aug. 2020.

\bibitem{burd1996processor}
T.~D. Burd and R.~W. Brodersen, ``Processor design for portable systems,'' \emph{J. {VLSI} Signal Process. Syst. Signal Image Video Technol.}, vol.~13, no.~2, pp. 203--221, Feb. 1996.

\bibitem{Gupta2022Noma}
S.~Gupta, D.~Rajan, and J.~Camp, ``{NOMA}-enabled computation and communication resource trading for a multi-user {MEC} system,'' \emph{{IEEE} Trans. Veh. Technol.}, vol.~71, no.~7, pp. 7532--7547, Apr. 2022.

\bibitem{li2024computation}
K.~Li, X.~Wang, Q.~He, J.~Wang, J.~Li, S.~Zhan, G.~Lu, and S.~Dustdar, ``Computation offloading in resource-constrained multi-access edge computing,'' \emph{{IEEE} Trans. Mob. Comput.}, Mar. 2024.

\bibitem{JW1967}
W.~S. Jewell, ``A simple proof of: \emph{L} = {\(\lambda\)}\emph{W},'' \emph{Oper. Res.}, vol.~15, no.~6, pp. 1109--1116, Dec. 1967.

\bibitem{Boyd2014}
S.~P. Boyd and L.~Vandenberghe, \emph{Convex Optimization}.\hskip 1em plus 0.5em minus 0.4em\relax Cambridge University Press, Mar. 2004.

\bibitem{liu2019dynamic}
C.-F. Liu, M.~Bennis, M.~Debbah, and H.~V. Poor, ``Dynamic task offloading and resource allocation for ultra-reliable low-latency edge computing,'' \emph{{IEEE Trans. Commun.}}, vol.~67, no.~6, pp. 4132--4150, Oct. 2019.

\bibitem{guo2024lyapunov}
X.~Guo, X.~Wang, W.~Pan, W.~Wu, and K.~Liu, ``A lyapunov optimization approach for long-term task assignment in vehicular crowdsensing,'' \emph{{IEEE} Trans. Consum. Electron.}, Apr. 2024.

\bibitem{Tammer1987}
K.~Tammer, ``The application of parametric optimization and imbedding to the foundation and realization of a generalized primal decomposition approach,'' \emph{Mathematical research}, vol.~35, pp. 376--386, Apr. 1987.

\bibitem{2010Neely}
M.~J. Neely, \emph{Stochastic Network Optimization with Application to Communication and Queueing Systems}, ser. Synthesis Lect. Commun. Netw.\hskip 1em plus 0.5em minus 0.4em\relax Morgan {\&} Claypool Publishers, Jan. 2010.

\bibitem{liu2022task}
D.~Liu, Z.~Du, X.~Liu, H.~Luan, Y.~Xu, and Y.~Xu, ``Task-based network reconfiguration in distributed {UAV} swarms: {A} bilateral matching approach,'' \emph{{IEEE/ACM} Trans. Netw.}, vol.~30, no.~6, pp. 2688--2700, Dec. 2022.

\bibitem{liu2024coexistence}
X.~Liu, X.~Wang, X.~Zhao, F.~Du, Y.~Zhang, S.~Geng, Y.~Lu, and C.~Zhong, ``Coexistence of energy-minimizing urllc and embb in power iot via noma-based collaborative mec heterogeneous network,'' \emph{IEEE Transactions on Vehicular Technology}, Oct. 2024.

\bibitem{TD2022}
H.~Tran{-}Dang and D.~Kim, ``A survey on matching theory for distributed computation offloading in {IoT}-fog-cloud systems: Perspectives and open issues,'' \emph{{IEEE} Access}, vol.~10, pp. 118\,353--118\,369, Nov. 2022.

\bibitem{gandhi2006dependent}
R.~Gandhi, S.~Khuller, S.~Parthasarathy, and A.~Srinivasan, ``Dependent rounding and its applications to approximation algorithms,'' \emph{J. ACM}, vol.~53, no.~3, pp. 324--360, Jul. 2006.

\bibitem{CLX2018}
L.~Chen, S.~Zhou, and J.~Xu, ``Computation peer offloading for energy-constrained mobile edge computing in small-cell networks,'' \emph{{IEEE/ACM} Trans. Netw.}, vol.~26, no.~4, pp. 1619--1632, Jun. 2018.

\bibitem{YHJ2021}
H.~Yu, Z.~Zhou, Z.~Jia, X.~Zhao, L.~Zhang, and X.~Wang, ``Multi-timescale multi-dimension resource allocation for {NOMA}-edge computing-based power {IoT} with massive connectivity,'' \emph{{IEEE} Trans. Green Commun. Netw.}, vol.~5, no.~3, pp. 1101--1113, Apr. 2021.

\bibitem{zhao2018qoe}
H.~Zhao, W.~Du, W.~Liu, T.~Lei, and Q.~Lei, ``{QoE} aware and cell capacity enhanced computation offloading for multi-server mobile edge computing systems with energy harvesting devices,'' in \emph{Proc. of {IEEE} Int. Conf. Ubiquitous Intell. Comput.}, Dec. 2018, pp. 671--678.

\bibitem{ZFH2018}
F.~Zhou, Y.~Wu, R.~Q. Hu, and Y.~Qian, ``Computation rate maximization in {UAV}-enabled wireless-powered mobile-edge computing systems,'' \emph{{IEEE} J. Sel. Areas Commun.}, vol.~36, no.~9, pp. 1927--1941, Aug. 2018.

\end{thebibliography}

\begin{IEEEbiography}[{\includegraphics[width=1in,height=1.23in,clip,keepaspectratio]{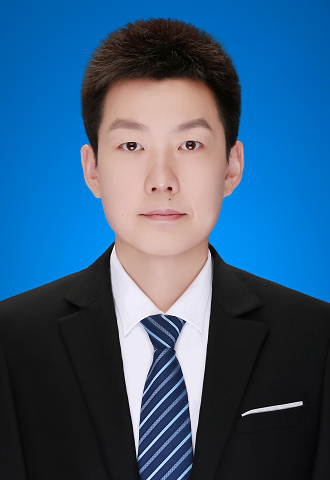}}]{Geng Sun} (S'17-M'19) received the B.S. degree in communication engineering from Dalian Polytechnic University, and the Ph.D. degree in computer science and technology from Jilin University, in 2011 and 2018, respectively. He was a Visiting Researcher with the School of Electrical and Computer Engineering, Georgia Institute of Technology, USA. He is an Associate Professor in College of Computer Science and Technology at Jilin University, and his research interests include wireless networks, UAV communications, collaborative beamforming and optimizations.
\end{IEEEbiography}

% \vspace{-3.8 em}

\begin{IEEEbiography}[{\includegraphics[width=1in,height=1.23in,clip,keepaspectratio]{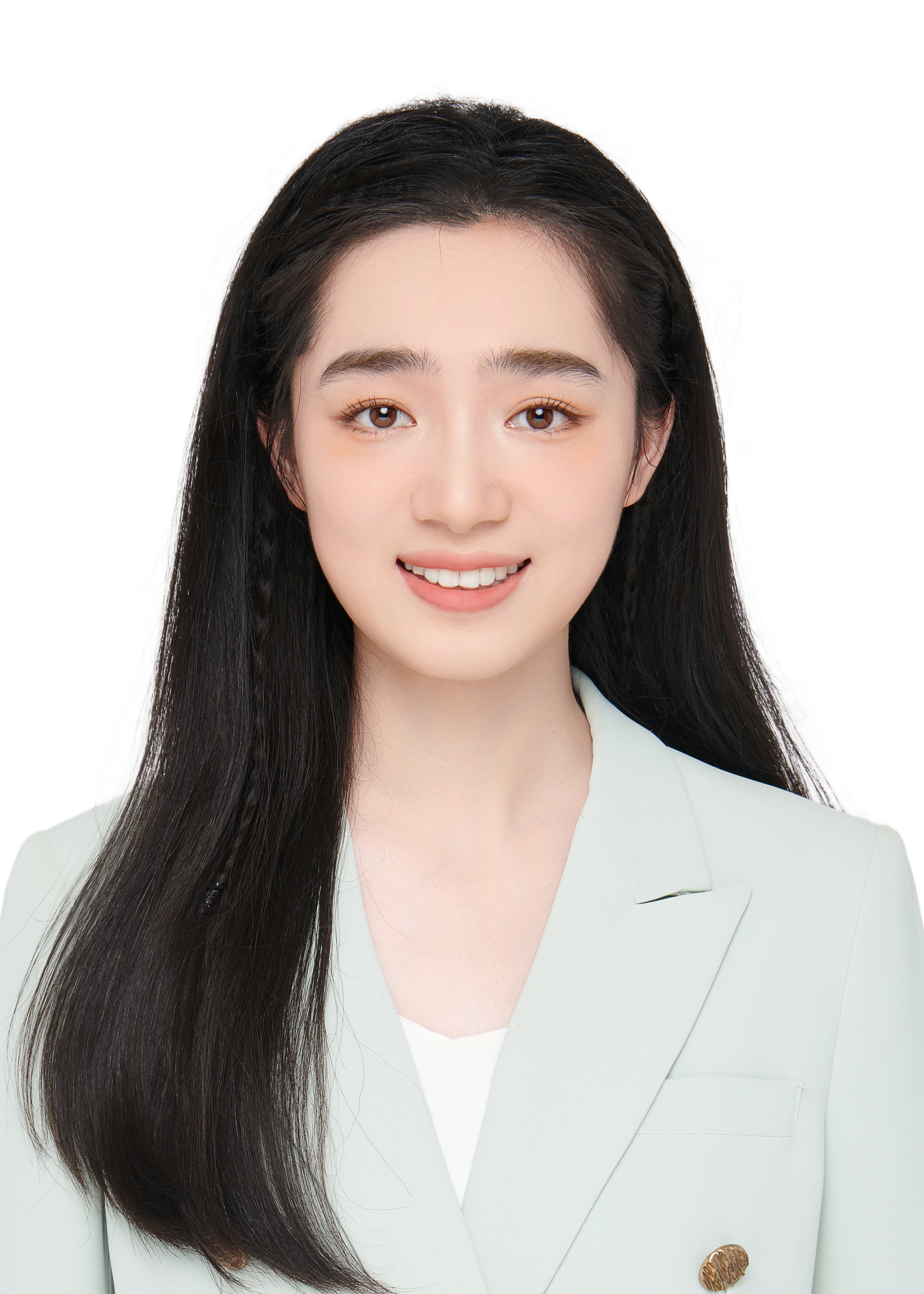}}]{Minghua Yuan} received a BS degree in Information Security from Wuhan University of Science and Technology, Wuhan, China, in 2022. She is currently working toward the MS degree in Software Engineering at Jilin University, Changchun, China. Her research interests include mobile edge computing and optimizations.
\end{IEEEbiography}

 \vspace{-3.5 em}

\begin{IEEEbiography}[{\includegraphics[width=1in,height=1.23in,clip,keepaspectratio]{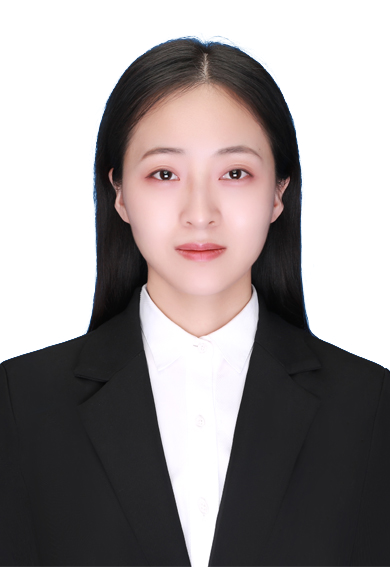}}]{Zemin Sun} received a BS degree in Software Engineering, an MS degree and a Ph.D degree in Computer Science and Technology from Jilin University, Changchun, China, in 2015, 2018, and 2022, respectively. Her research interests include mobile edge computing, UAV communications, and game theory. 
\end{IEEEbiography}

 \vspace{-3.5 em}
 
\begin{IEEEbiography}[{\includegraphics[width=1in,height=1.23in,clip,keepaspectratio]{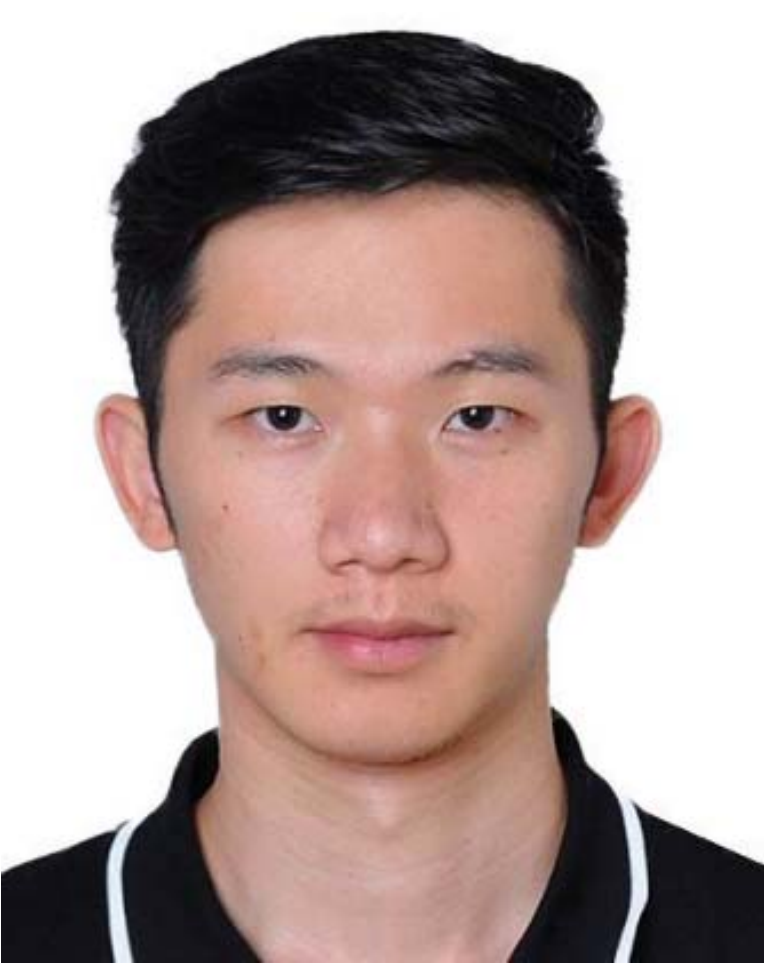}}]{Jiacheng Wang} received the Ph.D. degree from the School of Communication and Information Engineering, Chongqing University of Posts and Telecommunications, Chongqing, China. He is currently a Research Associate in computer science and
engineering with Nanyang Technological University,
Singapore. His research interests include wireless
sensing, semantic communications, and metaverse.
\end{IEEEbiography}

\vspace{-3.5 em}
 
\begin{IEEEbiography}[{\includegraphics[width=1in,height=1.23in,clip,keepaspectratio]{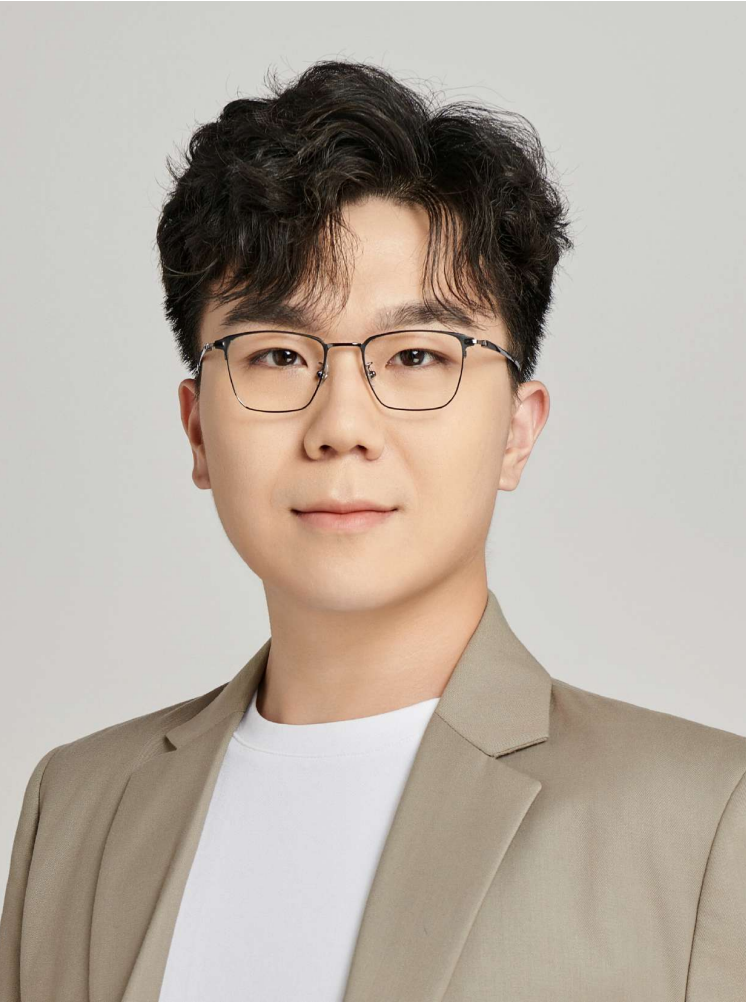}}]{Hongyang Du} is an assistant professor at the Department of Electrical and Electronic Engineering, The University of Hong Kong. He received the BEng degree from the School of Electronic and Information Engineering, Beijing Jiaotong University, Beijing, in 2021, and the PhD degree from the Interdisciplinary Graduate Program at the College of Computing and Data Science, Energy Research Institute @ NTU, Nanyang Technological University, Singapore, in 2024. He is the Editor-in-Chief assistant of IEEE Communications Surveys \& Tutorials (2022-2024). He is the recipient of the IEEE Daniel E. Noble Fellowship Award from the IEEE Vehicular Technology Society in 2022, the IEEE Signal Processing Society Scholarship from the IEEE Signal Processing Society in 2023, the Chinese Government Award for Outstanding Students Abroad in 2023, and the Singapore Data Science Consortium (SDSC) Dissertation Research Fellowship in 2023. He was recognized as an exemplary reviewer of the IEEE Transactions on Communications and IEEE Communications Letters in 2021. His research interests include edge intelligence, generative AI, semantic communications, and network management.
\end{IEEEbiography}

\vspace{-3.5 em}
\begin{IEEEbiography}[{\includegraphics[width=1in,height=1.23in,clip,keepaspectratio]{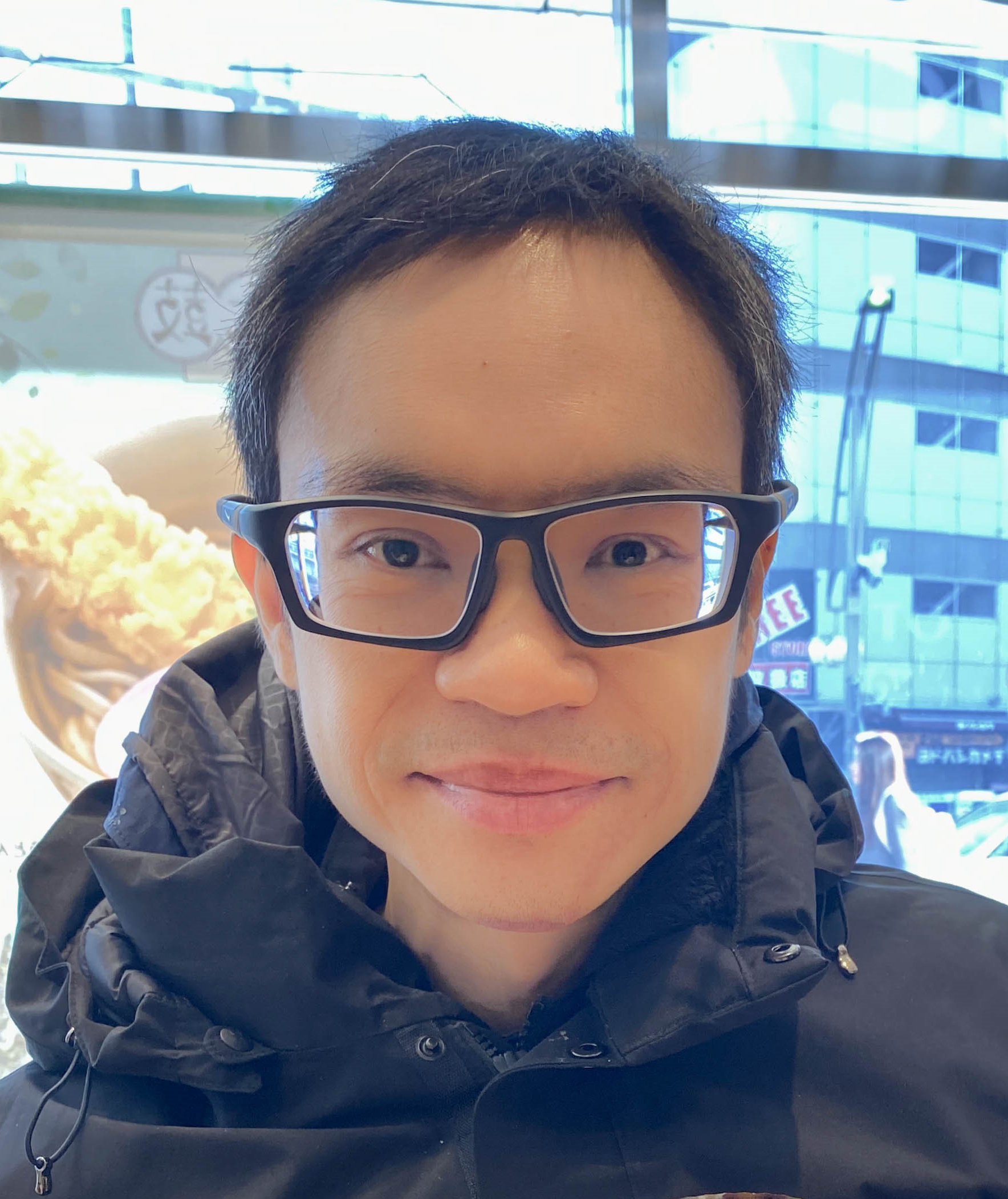}}]{Dusit Niyato} (Fellow, IEEE) received the B.Eng. degree from the King Mongkuts Institute of Technology Ladkrabang (KMITL), Thailand, in 1999, and the Ph.D. degree in electrical and computer engineering from the University of Manitoba, Canada, in 2008. He is currently a Professor with the School of Computer Science and Engineering, Nanyang Technological University, Singapore. His research interests include the Internet of Things (IoT), machine learning, and incentive mechanism design.  He won the IEEE Communications Society Best Survey Paper 1 Award, IEEE Asia-Pacific Board Outstanding Paper Award, the IEEE Computer Society Middle Career Researcher Award for Excellence in Scalable Computing and Distinguished Technical Achievement Recognition Award of IEEE ComSoc Technical Committee on Green Communications and Computing. He also won a number of best paper awards including IEEE Wireless Communications and Networking Conference, IEEE International Conference on Communications, IEEE ComSoc Communication Systems Integration and Modeling Technical Committee and IEEE ComSoc Signal Processing and Computing for Communications Technical Committee 2021. Currently, He is serving as Editor-in-Chief of IEEE Communications Surveys and Tutorials, an area editor of IEEE Transactions on Vehicular Technology, and editor of IEEE Transactions on Wireless Communications. 
\end{IEEEbiography}

% \vspace{-3.8 em}

\begin{IEEEbiography}[{\includegraphics[width=1in,height=1.23in,clip,keepaspectratio]{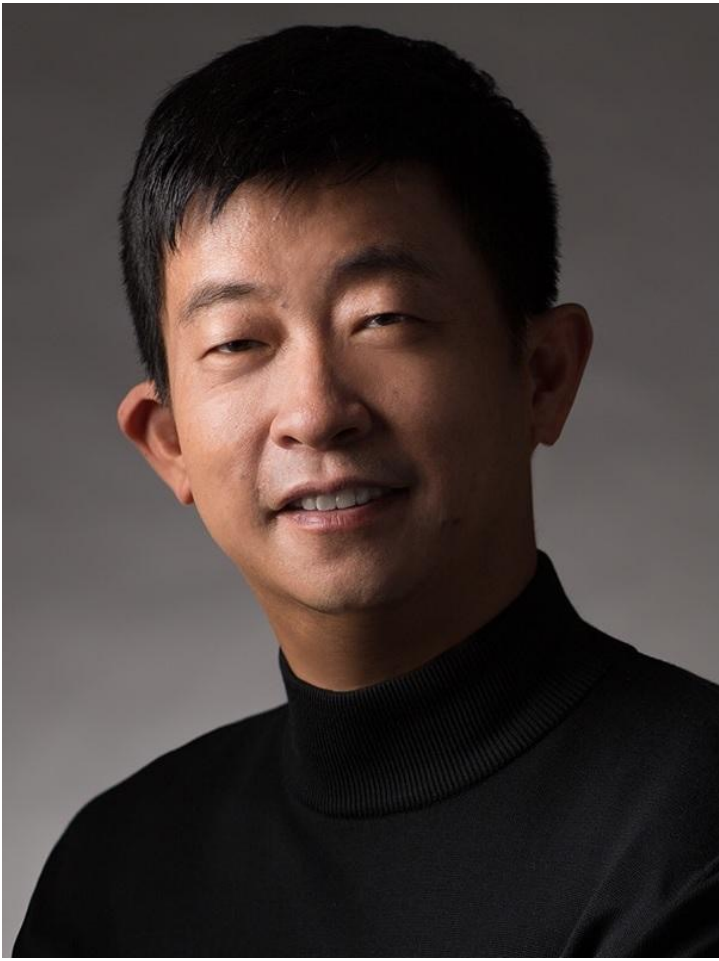}}]{Zhu Han} (Fellow, IEEE) received the B.S. degree in electronic engineering from Tsinghua University, in 1997, and the M.S. and Ph.D. degrees in electrical and computer engineering from the University of Maryland, College Park, in 1999 and 2003, respectively. From 2000 to 2002, he was an R\&D Engineer of JDSU, Germantown, Maryland. From 2003 to 2006, he was a Research Associate at the University of Maryland. From 2006 to 2008, he was an assistant professor at Boise State University, Idaho. Currently, he is a John and Rebecca Moores Professor in the Electrical and Computer Engineering Department as well as in the Computer Science Department at the University of Houston, Texas. Dr. Hans main research targets on the novel game-theory related concepts critical to enabling efficient and distributive use of wireless networks with limited resources. His other research interests include wireless resource allocation and management, wireless communications and networking, quantum computing, data science, smart grid, carbon neutralization, security and privacy. Dr. Han received an NSF Career Award in 2010, the Fred W. Ellersick Prize of the IEEE Communication Society in 2011, the EURASIP Best Paper Award for the Journal on Advances in Signal Processing in 2015, IEEE Leonard G. Abraham Prize in the field of Communications Systems (best paper award in IEEE JSAC) in 2016, IEEE Vehicular Technology Society 2022 Best Land Transportation Paper Award, and several best paper awards in IEEE conferences. Dr. Han was an IEEE Communications Society Distinguished Lecturer from 2015 to 2018 and ACM Distinguished Speaker from 2022 to 2025, AAAS fellow since 2019, and ACM Fellow since 2024. Dr. Han is a 1\% highly cited researcher since 2017 according to Web of Science. Dr. Han is also the winner of the 2021 IEEE Kiyo Tomiyasu Award (an IEEE Field Award), for outstanding early to mid-career contributions to technologies holding the promise of innovative applications, with the following citation: “for contributions to game theory and distributed management of autonomous communication networks.” design.
\end{IEEEbiography}

\vspace{9 em}
\begin{IEEEbiography}[{\includegraphics[width=1in,height=1.23in,clip,keepaspectratio]{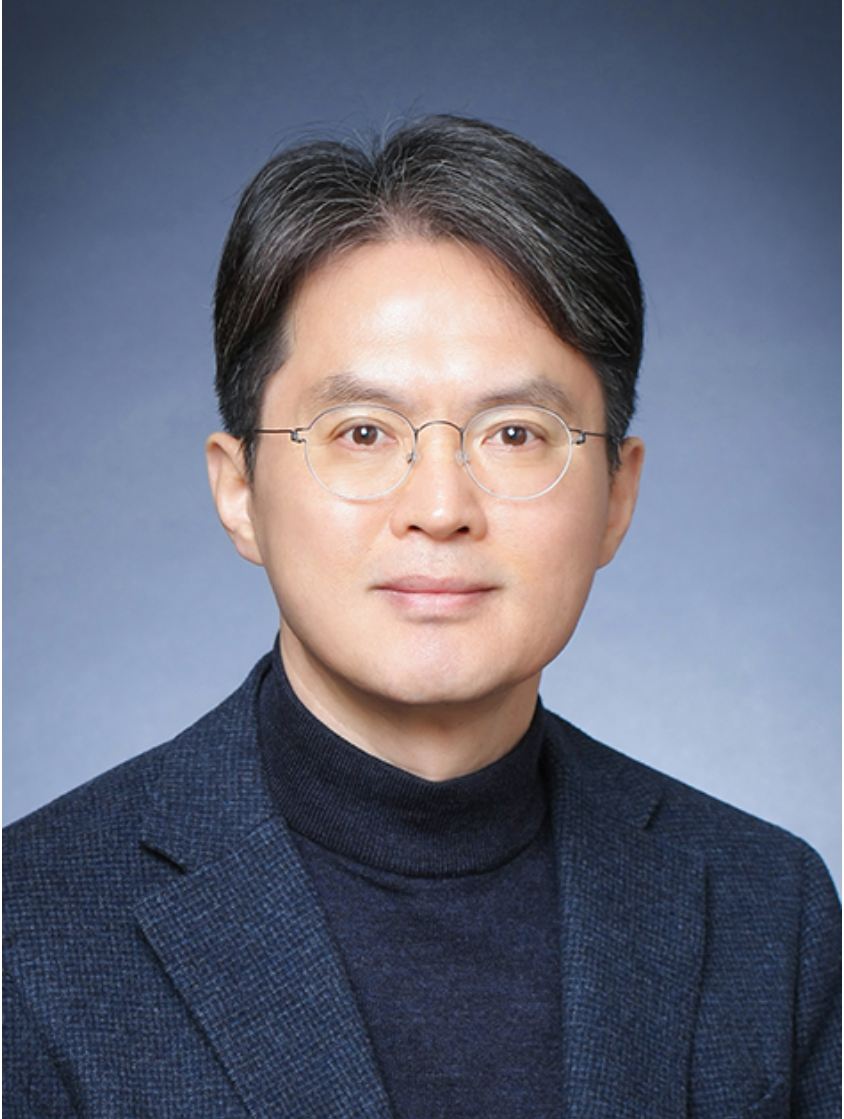}}]{Dong In Kim} (Fellow, IEEE) received the Ph.D. degree in electrical engineering from the University of Southern California, Los Angeles, CA, USA, in 1990. He was a Tenured Professor with the School of Engineering Science, Simon Fraser University, Burnaby, BC, Canada. He is currently a Distinguished Professor with the College of Information and Communication Engineering, Sungkyunkwan University, Suwon, South Korea. He is a Fellow of the Korean Academy of Science and Technology and a Member of the National Academy of Engineering of Korea. He was the first recipient of the NRF of Korea Engineering Research Center in Wireless Communications for RF Energy Harvesting from 2014 to 2021. He received several research awards, including the 2023 IEEE ComSoc Best Survey Paper Award and the 2022 IEEE Best Land Transportation Paper Award. He was selected the 2019 recipient of the IEEE ComSoc Joseph LoCicero Award for Exemplary Service to Publications. He was the General Chair of the IEEE ICC 2022, Seoul. Since 2001, he has been serving as an Editor, an Editor at Large, and an Area Editor of Wireless Communications I for IEEE Transactions on Communications. From 2002 to 2011, he served as an Editor and a Founding Area Editor of Cross-Layer Design and Optimization for IEEE Transactions on Wireless Communications. From 2008 to 2011, he served as the Co-Editor- in-Chief for the IEEE/KICS Journal of Communications and Networks. He served as the Founding Editorin-Chief for the IEEE Wireless Communications Letters from 2012 to 2015. He has been listed as a 2020/2022 Highly Cited Researcher by Clarivate Analytics.
\end{IEEEbiography}

\end{document}